\newtheorem{iterationelltemp}{Iteration}
\newenvironment{iterationell}
{%

  \begin{iterationelltemp}%
}
{%
  \end{iterationelltemp}%
}
\newtheorem{observation}{Observation}
\newtheorem{definition}{Definition}
\newtheorem{procedure}{Procedure}
\newtheorem{lemma}{Lemma}
\newtheorem{iter}{Iteration}
\newtheorem{proposition}{Proposition}
\newtheorem{notation}{Notation}
\newtheorem{example}{Example}
\newtheorem{claim}{Claim}
\newtheorem{corollary}{Corollary}
\newtheorem{remark}{Remark}
\definecolor{maroon}{rgb}{0.5, 0.0, 0.0}
\definecolor{darkblue}{rgb}{0.0, 0.0, 0.55}
\newcommand{\np}{\mathbb{NP}}
\newcommand{\p}{\mathbb{P}}
\newcommand{\greduction}{\mathbb{G}}
\newcommand{\greductiontwo}{\mathbb{G}}
\newcommand{\setofremovededges}{\mathbb{C}}
\newcommand{\removedcrossedge}{e_{c}}
\newcommand{\removedcrossedgeendpointone}{u_{c}}
\newcommand{\removedcrossedgeendpointtwo}{v_{c}}
\newcommand{\removedcrossedgeendpoints}{(u_{c}, v_{c})}
\newcommand{\edgesbetweentitj}{E'_{i,j}}
\newcommand{\edgesbetweentitjprime}{E''_{i,j}}
\newcommand{\unsaturatededge}{e}
\newcommand{\unsaturatededgeendpoints}{(u,v)}
\newcommand{\addededge}{e'}
\newcommand{\vreductiontwo}{\mathbb{V}}
\newcommand{\ereductiontwo}{\mathbb{E}}
\newcommand{\treduction}{\mathbb{T}}
\newcommand{\numitems}{|\mathcal X|}
\newcommand{\numsets}{|\mathcal F|}
\newcommand{\greductionsj}{\mathbb{G}_j}
\newcommand{\bigoh}{\mathcal{O}}%
\newcommand*\patchAmsMathEnvironmentForLineno[1]{%
 \expandafter\let\csname old#1\expandafter\endcsname\csname #1\endcsname
 \expandafter\let\csname oldend#1\expandafter\endcsname\csname end#1\endcsname
 \renewenvironment{#1}%
    {\linenomath\csname old#1\endcsname}%
    {\csname oldend#1\endcsname\endlinenomath}}%
\newcommand*\patchBothAmsMathEnvironmentsForLineno[1]{%
 \patchAmsMathEnvironmentForLineno{#1}%
 \patchAmsMathEnvironmentForLineno{#1*}}%
\newcommand{\iteration}{I}
\newcommand{\alg}{\operatorname{ALG}}
\newcommand{\opt}{\operatorname{OPT}}
\definecolor{brightmaroon}{rgb}{0.76, 0.13, 0.28}
\definecolor{linkblue}{rgb}{0, 0.337, 0.227}
\newrobustcmd{\onesub}{\mathord{\includegraphics{figs/one-sub}}}
\newrobustcmd{\leftup}{\mathord{\includegraphics{figs/left-up}}}
\newcommand{\xMapsto}[2][]{\ext@arrow 0599{\Mapstofill@}{#1}{#2}}
\def\Mapstofill@{\arrowfill@{\Mapstochar\Relbar}\Relbar\Rightarrow}
\title{{Algorithms and Hardness Results for the $(k,\ell)$-Cover Problem\thanks{Preliminary version accepted to CALDAM 2025}}}
\author{
Amirali Madani\thanks{School of Computer Science, Carleton University, Ottawa, Ontario, Canada. Supported by the Natural Sciences and Engineering Research Council of Canada (NSERC). Emails: {\tt amiralimadani@cmail.carleton.ca}, {\tt anil@scs.carleton.ca}, and {\tt bobby.miraftab@gmail.com}}
\;\;\;\;\;\;Anil Maheshwari\footnotemark[2]\;\;\;\;\;\;Babak Miraftab\footnotemark[2]\;\;\;\;\;\;
Bodhayan Roy\thanks{Department of Mathematics, Indian Institute of Technology Kharagpur, India. Supported by the Science and Engineering Research Board (SERB) via the project MTR/2021/000474. Email: {\tt bodhayan.roy@gmail.com}.
}
}
\date{}
\begin{document}

\maketitle

\begin{abstract}
A connected graph has a $(k,\ell)$-cover if each of its edges is contained in at least $\ell$ cliques of order $k$. Motivated by recent advances in extremal combinatorics and the literature on edge modification problems, we study the algorithmic version of the $(k,\ell)$-cover problem. Given a connected graph $G$, the $(k, \ell)$-cover problem is to identify the smallest subset of non-edges of $G$ such that their addition to $G$ results in a graph with a $(k, \ell)$-cover. For every constant $k\geq3$, we show that the $(k,1)$-cover problem is $\mathbb{NP}$-complete for general graphs.
Moreover, we show that for every constant $k\geq 3$, the $(k,1)$-cover problem admits no polynomial-time constant-factor approximation algorithm unless $\mathbb{P}=\mathbb{NP}$. However, we show that the $(3,1)$-cover problem can be solved in polynomial time when the input graph is chordal. For the class of trees and general values of $k$, we show that the $(k,1)$-cover problem is $\mathbb{NP}$-hard even for spiders. However, we show that for every $k\geq4$, the $(3,k-2)$-cover and the $(k,1)$-cover problems are constant-factor approximable when the input graph is a tree. \\
\keywords{Computational complexity, \;Graph algorithms, \;Optimal algorithms, \;Edge modification problems, \;and Approximation algorithms.} 

\end{abstract}

\section{Introduction}
In recent years, research on \defin{edge modification problems} has gained a lot of attention. The area of edge modification problems spans many definitions. Still, many such problems ask for the optimal way of editing an input graph $G$ to another graph $G'$ with a desired property, usually through edge additions to $G$~\cite{edgemodificationsurvey}. The minimization objective is often defined in the literature as the number of added edges. Edge modification problems have been studied for many graph properties. For instance, some works have studied editing graphs into being Eulerian, regular, or having a specific degree sequence through edge additions and removals~\cite{editing2,editing3,editing4}. For a more comprehensive survey of edge modification problems, see \cite{edgemodificationsurvey}.

This paper is specifically motivated by the intersection of edge modification problems and the community search problem, the latter of which has numerous applications in data science. Extensive research has been conducted on transforming input graphs into \defin{cluster graphs} through edge additions and removals~\cite{cluster1, cluster2, cluster3, cluster4}. A cluster graph is a graph whose every connected component is a complete graph (a clique). These graphs have various applications, such as modeling different communities and grouping similar items within a network into the same cluster~\cite{clusterapplication}.
In community search, cohesive subgraphs are commonly used to model communities. Many studies define the cohesiveness of a subgraph by its minimum degree~\cite{corekeywordcommunitysearch}, while others consider a subgraph cohesive if each of its edges is covered by cliques of order three~\cite{trusskeywordcommunitysearch}. For the former measure of cohesion, Fomin, Sagunov, and Simonov~\cite{fomin2023building}, along with Chitnis and Talmon~\cite{chitnis}, have studied relevant edge modification problems. These problems aim to construct large subgraphs with a specified minimum degree by adding a few edges.

Graphs with local covering conditions on edges or vertices have also attracted significant interest in extremal graph theory. Burkhardt, Faber, and Harris~\cite{BurkhardtFH20} established asymptotically tight lower bounds on the number of edges in connected graphs where every edge lies in at least $\ell$ triangles. Chakraborti and Loh~\cite{chakraborti2020extremal} provided tight lower bounds, along with characterizations of extremal graphs, on the number of edges in graphs where every vertex belongs to a clique of order \( k \geq 3 \). Motivated by these results and the many applications of graphs with local edge covering conditions in big graph-based data analysis, Chakraborti et al.~\cite{chakraborti2024sparse} introduced the concept of \((k, \ell)\)-covers. A connected graph \( G \) has a \defin{\((k, \ell)\)-cover} if every edge of \( G \) lies in at least \( \ell \) copies of \( K_k \) (a clique of order \( k \)). They proved tight lower bounds and structural characterizations of graphs with \((k, 1)\)-covers (\( k \geq 3 \)) and graphs with \((3, 2)\)-covers. Motivated by all of this, we study the algorithmic version of the \((k, \ell)\)-cover problem.
\subsection{Preliminaries}
We present some definitions. \begin{definition}
    Let $G=(V,E)$ be a graph and let $E' \subseteq (V \times V) \sm E$ be a set of non-edges of $G$. $E'$ is a \defin{$(k,\ell)$-completion set} of $G$ if $G \cup E'$ has a $(k,\ell)$-cover.
\end{definition}
For simplicity, we refer to a $(k,1)$-completion set as a \defin{$k$-completion set}.
\begin{definition}
    The $(k,\ell)$-cover problem: Given a connected graph $G=(V,E)$, two integers $k\geq 3$ and $t\geq 0$, does $G$ have a $(k,\ell)$-completion set of size at most $t$?
\end{definition}
A graph $G$ is chordal if it does not have the cycle of length at least four as an induced subgraph. Unless specified otherwise, $G = (V, E)$ always serves as a connected graph with $ n \geq 3$ vertices throughout this paper. For an edge $e\in E$, we say $e$ is \defin{$k$-unsaturated} if it is not contained in any cliques of order $k$ in $G$ for $k\geq 3$. 

We call a graph $G$ \defin{non-trivial} if it has at least two vertices. For a graph $G$ and subset $S\subseteq V(G)$ of its vertices, we denote the subgraph of $G$ induced on $S$ by $G[S]$. For $G$ and $v \notin V$, we denote the operation $V \cup \{v\}$ by $G \cup v$. Similarly, the operation $G\cup e$ for an edge $e=(u,v)$ results in a graph $G'=(V',E')$ with $V'=V\cup \{u,v\}$ and $E'=E\cup\{(u,v)\}$. Within our algorithms, we sometimes initialize an empty graph as $H\xleftarrow{} \emptyset$. This operation constructs a graph $H=(V,E)$ with $V=E=\emptyset$. For further graph-theoretic and algorithmic notations and definitions not defined in the paper, we refer the reader to~\cite{bondy1982graph} and~\cite{cormen}. 
\subsection{New Results}
Our first set of hardness results (\Cref{nphardnesskeq3}, \Cref{nphardnesskgeq}, and \Cref{thm1}) state that for every constant $k\geq 3$, the $(k,1)$-cover problem is $\np$-complete for general graphs and admits no constant-factor approximation algorithm running in polynomial time unless $\p=\np$. However, we show that the $(3,1)$-cover problem can be solved in polynomial time when its input graph is restricted to the class of chordal graphs (\Cref{thmchordal}). For general values of $k$, we show in \Cref{thmhardnesstree} that the $(k,1)$-cover problem remains $\np$-hard on trees, even when the tree is restricted to the class of spiders. However, we show that for every $k\geq 4$, the $(k,1)$-cover and the $(3,k-2)$-cover problems are constant-factor approximable for trees (\Cref{thmtwoapprox} and \Cref{thmgivefour}).  

The remainder of this paper is organized as follows. \Cref{sechardnesss} contains our hardness results for general graphs and trees. In \Cref{sechardnessgeneral}, we study the hardness of the $(k,1)$-cover problem on general graphs for $k=3$ (\Cref{reductionk3}) and $k\geq 4$ (\cref{reductionkgeq4}). In \Cref{hardnessapprox}, we show the hardness of approximation for the $(k,1)$-cover problem on general graphs. We conclude \Cref{sechardnesss} in \Cref{nphardnesstrees} by proving the $\np$-hardness of the $(k,1)$-cover problem for spiders and general values of $k$.  In \Cref{secchordal}, we first present an optimal algorithm for the $(3,1)$-cover problem for the class of trees (\Cref{sectrees}, \Cref{proptreeoptimal}). We use the algorithm for trees to present an optimal algorithm for the class of chordal graphs in \Cref{secchordalmain}. Our approximation algorithms for trees are described in \cref{sec4}. Finally, we conclude the paper in \Cref{conclude} by presenting some potential avenues for future research.

\section{Hardness Results}
In this section, we show some hardness results for the $(k,1)$-cover problem for general graphs (\Cref{sechardnessgeneral}) and the class of trees (\Cref{nphardnesstrees}).
\label{sechardnesss}
\subsection{Hardness Results for General Graphs}
\label{sechardnessgeneral}
In this section, we show that for every constant $k\geq 3$, a connected graph $G=(V,E)$, and an integer $t$, it is $\np$-complete to decide whether there exists a set $S \subseteq (V \times V) \setminus E$ with $|S|\leq t$ such that $G \cup S$ has a \mbox{$(k,1)$-cover}. Moreover, we show these problems are also hard to approximate within a constant factor. 

We reduce the well-known problem of SET-COVER to the $(k,1)$-cover problem for every $k\geq3$. The decision problem of SET-COVER is formally stated as follows. An instance $(\mathcal{X}, \mathcal{F},t)$ of SET-COVER consists of a finite set $\mathcal{X}$ of items, a family of subsets $\mathcal{F}$ of $\mathcal{X}$ such that no set in $\mathcal{F}$ is empty and every item in $\mathcal{X}$ belongs to at least one set from $\mathcal{F}$, and an integer $t$. Given an instance of SET-COVER, the problem is to determine whether there exists a subset $\mathcal{S} \subseteq \mathcal{F}$ with $|\mathcal{S}|\leq t$ such that the sets in $\mathcal{S}$ cover all items of $\mathcal{F}$, i.e., $\bigcup_{S \in \mathcal{S}} S=\mathcal{X}$. We say that a subset $S \in \mathcal{F}$ \defin{covers} its items, and each item $x_i \in \mathcal{X} \cap S$ is \defin{covered} by $S$. It is well known that SET-COVER is $\np$-complete \cite{karp1975computational}. We provide two separate $\np$-completeness proofs for $k=3$ and $k\geq 4$, where both reductions are from SET-COVER. We do so because the second reduction can be generalized to any $k\geq 4$, but the reduction graph already has a $(3,1)$-cover.

The remainder of this section is organized as follows. In \Cref{reductionk3} and \Cref{reductionkgeq4}, we prove the $\np$-completeness of the $(k,1)$-cover problem for $k=3$ and $k\geq 4$, respectively. Using the same reductions of \Cref{reductionk3} and \Cref{reductionkgeq4}, in \Cref{hardnessapprox} we prove that for every constant $k\geq 3$, the $(k,1)$-cover problem cannot be approximated within a constant factor unless $\p=\np$.

\subsubsection{Reduction for $k=3$}
\label{reductionk3}
Given an instance $(\mathcal X, \mathcal F, t)$ of SET-COVER, we construct a graph $\greductiontwo=(\vreductiontwo,\ereductiontwo)$ as follows (see \Cref{figreductionk3} for an illustration). 
\begin{enumerate}
\item Initially, $\vreductiontwo=\emptyset$ and $\ereductiontwo=\emptyset$.
    \item For every set $S_j \in \mathcal F$, we add a vertex $S_j$ to $\vreductiontwo$. We refer to such vertices as \defin{set vertices}.
    \item For every item $x_i \in \mathcal X$, we add a subgraph $I_i$ to $\greductiontwo$. Each $I_i$ is a disjoint union of $2\numitems$ isolated vertices. We refer to each  $I_i$ as an \defin{item subgraph}. We update $\vreductiontwo$ accordingly.
    \item For each $x_i \in \mathcal X$ and every $S_j \in \mathcal{F}$, if $x_i \in S_j$, we connect $S_j \in \vreductiontwo $ to all $2\numitems$ vertices of $I_i$, i.e., $\forall v \in V(I_i):\ereductiontwo \xleftarrow{} \ereductiontwo \cup \{(S_j,v)\}$.
    \item For every edge $(S_j,v)$ added in the previous step, we add a new vertex $w$ to $\vreductiontwo$ by setting $\vreductiontwo \xleftarrow[]{}\vreductiontwo \cup \{w\}$. Furthermore, we set $\ereductiontwo \xleftarrow{}\ereductiontwo \cup \{(S_j, w),(v,w)\}$. At the end of this step, all edges of $\greductiontwo$ are contained in triangles. We refer to these vertices $w$ as \defin{auxiliary vertices}.
    \item We add a vertex $P$ to $\greductiontwo$, $\vreductiontwo \xleftarrow{} \vreductiontwo \cup \{P\}$. For every item subgraph $I_i$, we connect every vertex of $I_i$ to this new vertex, i.e., $\forall I_i \forall v\in V(I_i): \ereductiontwo \xleftarrow[]{} \ereductiontwo \cup \{(v,P)\}$. We refer to vertex $P$ as \defin{the common vertex}.
\end{enumerate}

     \begin{figure}[h]
    \centering
        {\includegraphics[scale=0.6]{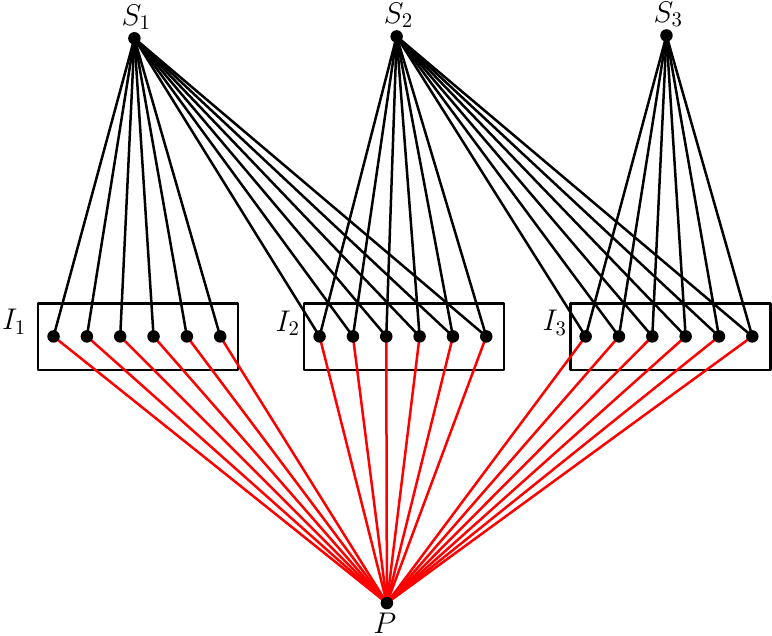}}
    \caption{An example of the reduction graph $\greductiontwo$ for $\mathcal{F}=\{S_1, S_2,S_3\}$, $\mathcal{X}=\{x_1,x_2,x_3\}$, $S_1=\{x_1,x_2\}$, $S_2=\{x_2,x_3\}$, and $S_3=\{x_3\}$. Every black edge in this graph is contained in a triangle consisting of its endpoints plus one other auxiliary vertex omitted from this figure for simplicity (see Step 5 of the construction). Therefore, only the red edges of this graph are not contained in any triangles. } 
    \label{figreductionk3}
\end{figure}
Note that all edges incident to the common vertex $P$ are 3-unsaturated. Let $x_i $ ($S_j$) be some item (set) with $x_i \in S_j$. Observe that adding the edge $(S_j, P)$ to $\greductiontwo$ saturates all edges $(v,P)$ for all $v \in V(I_i)$. Note that $\greductiontwo$ has exactly $\numsets$ set vertices, $2\numitems^2$ item vertices, one common vertex $P$, and at most $2\numitems(\numitems.\numsets)=2\numitems^2.\numsets$ auxiliary vertices. Therefore, the size of the reduction graph is polynomial in $\numitems + \numsets$, as stated in the following observation. \\

\begin{observation}
Given an instance $(\mathcal X, \mathcal F, t)$ of SET-COVER, the reduction graph $\greductiontwo$ has at most $\bigoh(\numitems^2. \numsets)$ vertices. Furthermore, due to the existence of the common vertex $P$, $\greduction$ is connected.
\label{obssizereduction}
\end{observation}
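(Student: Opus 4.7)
The plan is to establish both parts of the observation by directly enumerating the vertices contributed at each step of the construction and then tracing paths through the common vertex $P$.

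First I would bound the vertex count by walking through Steps 2--6 in order. Step 2 contributes exactly $\numsets$ set vertices. Step 3 contributes $2\numitems$ vertices per item subgraph, and there are $\numitems$ such subgraphs, giving $2\numitems^2$ item vertices. Step 6 contributes the single common vertex $P$. The only nontrivial count is the auxiliary vertices from Step 5: one such vertex is created for every edge added in Step 4. Step 4 adds an edge from $S_j$ to each of the $2\numitems$ vertices of $I_i$ whenever $x_i\in S_j$, so the number of such edges is at most $2\numitems\cdot\numitems\cdot\numsets=2\numitems^2\cdot\numsets$ (attained when every item lies in every set). Summing the four contributions yields a vertex count of $\numsets+2\numitems^2+1+2\numitems^2\cdot\numsets$, which is $\bigoh(\numitems^2\cdot\numsets)$.

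For connectivity, I would show that every vertex of $\greductiontwo$ has a path to $P$. Step 6 directly connects $P$ to every item vertex, so all $2\numitems^2$ item vertices are adjacent to $P$. Every set $S_j\in\mathcal F$ is nonempty by assumption, so there exists some $x_i\in S_j$; Step 4 then connects the set vertex $S_j$ to every vertex of $I_i$, each of which is adjacent to $P$. Finally, each auxiliary vertex $w$ introduced in Step 5 is adjacent both to a set vertex $S_j$ and to an item vertex $v\in V(I_i)$, both of which have already been shown to reach $P$. Thus $\greductiontwo$ is connected.

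I do not anticipate any real obstacle here; the only step that requires a moment of care is the Step 5 count, since its size depends on the number of edges added in Step 4 rather than on a fresh parameter, and one must confirm that the worst case $|{\mathcal X}|\cdot|{\mathcal F}|$ set--item incidences still leaves the total vertex count polynomial in $\numitems+\numsets$. Both claims are essentially verifications from the construction, so the proof will be short.
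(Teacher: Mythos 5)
Your proposal is correct and follows essentially the same route as the paper, which obtains the bound by the identical step-by-step count ($\numsets$ set vertices, $2\numitems^2$ item vertices, one common vertex, and at most $2\numitems^2\cdot\numsets$ auxiliary vertices) and attributes connectivity to the common vertex $P$. Your explicit connectivity argument via nonemptiness of each $S_j$ and the adjacencies of the auxiliary vertices simply spells out what the paper leaves implicit.
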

 We present the following definition.
 \begin{definition}
    Given an instance $(\mathcal X, \mathcal F, t)$ of SET-COVER, let $E'$ be a completion set of its corresponding reduction graph $\greductiontwo$. We define $R$ as the set of all non-edges of $\greductiontwo$ between the set vertices and the common vertex, i.e., {$R=\{(S_j,P)|\; S_j \in \mathcal F\}$}. If $E' \subseteq R$, we say $E'$ is \defin{a good 3-completion set} of $\greductiontwo$. 
    
 \end{definition}
For an instance $\mathcal I$ of SET-COVER, the next lemma helps us in constructing a set cover for $\mathcal I$ from a good 3-completion set of its corresponding graph $\greductiontwo$. 

\begin{lemma}
     Any good 3-completion set $E'$ of $\greductiontwo$ corresponds to a set cover of size $|E'|$ for the corresponding SET-COVER instance.
     \label{obs6}
\end{lemma}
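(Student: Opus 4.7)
The plan is to read off a set cover directly from the good $3$-completion set $E'$. Define $\mathcal{S} = \{S_j \in \mathcal{F} \mid (S_j, P) \in E'\}$, so that $|\mathcal{S}| = |E'|$ by construction. What I would then need to show is that every item $x_i \in \mathcal{X}$ is covered by some $S_j \in \mathcal{S}$.

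First I would classify the edges of $\greductiontwo$ according to which are already $3$-saturated and which are not. By construction, every edge introduced in Steps 4 and 5 lies in a triangle (this is exactly the reason the auxiliary vertices of Step 5 were introduced), so the only $3$-unsaturated edges of $\greductiontwo$ are the ones added in Step 6, namely $(v,P)$ for $v \in V(I_i)$, $x_i \in \mathcal{X}$. Since $E' \subseteq R$ adds only edges of the form $(S_j, P)$, none of these new edges disturb previous saturation, and I only need to ensure each edge $(v,P)$ becomes part of some triangle in $\greductiontwo \cup E'$.

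The main step is to analyze the possible common neighbors of $v \in V(I_i)$ and $P$ in $\greductiontwo \cup E'$. The neighborhood of $P$ in the original graph consists solely of item vertices, and after adding $E'$ it also contains the set vertices $S_j$ with $(S_j,P) \in E'$. Item vertices within the same $I_i$ are pairwise non-adjacent (they form an independent set of isolated vertices) and auxiliary vertices are not adjacent to $P$, so the only possible common neighbors of $v$ and $P$ are set vertices $S_j$ satisfying both $(S_j,P) \in E'$ and $x_i \in S_j$ (the latter by Step 4). Since $E'$ is a completion set, such an $S_j$ must exist for every $v \in V(I_i)$, and hence for every $x_i \in \mathcal{X}$. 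This exactly says that $x_i$ is covered by $\mathcal{S}$, establishing that $\mathcal{S}$ is a set cover of the instance.

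The only mildly delicate point, and what I would flag as the principal thing to verify carefully, is the enumeration of common neighbors of $v$ and $P$: one must check that neither other item vertices (even those in a different $I_{i'}$) nor the auxiliary vertices can ever serve as the third vertex of a triangle through $(v,P)$. Once the edge lists from Steps 3--6 are spelled out, this is immediate, but it is the place where the structure of $\greductiontwo$ (in particular the non-adjacency of $P$ to auxiliary vertices and of item vertices to one another) is genuinely used.
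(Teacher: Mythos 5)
Your proposal is correct and follows essentially the same approach as the paper's proof: select $S_j$ for each $(S_j,P)\in E'$ and argue that saturating each unsaturated edge $(v,P)$ with $v\in V(I_i)$ forces some $(S_j,P)\in E'$ with $x_i\in S_j$. You simply spell out in more detail the enumeration of possible common neighbours of $v$ and $P$, which the paper leaves implicit.
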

\begin{proof}
{Let \( E' \) be a good 3-completion set of \( \greductiontwo \). For every edge \( (S_j, P) \in E' \), we select the set \( S_j \) to be included in the set cover. Since \( E' \) is a 3-completion set, the unsaturated edges connecting each item subgraph \( I_i \) to the common vertex $P$ must be covered by at least one edge \( (S_j,P)\in E'\) with $x_i \in S_j$. Therefore, the sets corresponding to the edges in \( E' \) collectively cover all items in the SET-COVER instance.}
\end{proof}
To prove the hardness result, we show that any 3-completion set $E'$ can be transformed into a good 3-completion set $E''$, where $|E''|\le |E'|$, using  \Cref{nphardconstruction2}.

\scalebox{0.91}{
\begin{minipage}{1.0\textwidth}
\begin{algorithm}[H]
    \caption{}
    \label{nphardconstruction2}
    \begin{algorithmic}[1]
            \State \textbf{Input:} $\greductiontwo$, $E'$ (a 3-completion set of $\greductiontwo$) 
            \State \textbf{Output:} $E''$, a good 3-completion set of $\greductiontwo$ with $|E''|\leq |E'|$
            \State \textbf{Initialization:} $E''\xleftarrow{}\emptyset$
            \State \textbf{Step 1:} For every item $x_i \in \mathcal X$, let $S_j \in \mathcal F$ be a set with $x_i \in S_j$. If $E'$ has an edge $(u,v)$ such that $\{u,v\}\subseteq V(I_i)$ for its corresponding item subgraph $I_i$, then add $(S_j,P)$ to $E''$, i.e., $E''\xleftarrow{}E'' \cup \{(S_j,P)\}$.
            \State \textbf{Step 2:} Add the \emph{good} subset of $E'$ to $E''$, i.e., $E'' \xleftarrow{} E'' \cup (E' \cap \{(S_j,P)|\; S_j \in \mathcal F\})$.
            \State \textbf{Step 3:} For each item $x_i \in \mathcal X$, if $I_i$ has an unsaturated edge $(u,P)\in \ereductiontwo$ (with $u \in V(I_i)$) in $\greductiontwo \cup E'' $, then add the edge $(S_j,P)$ to $E''$ for some set $S_j \in \mathcal{F}$ with $x_i \in S_j$.
            \State \Return $E''$
    \end{algorithmic}
\end{algorithm}
\end{minipage}
}
\begin{lemma}
    Let $E'$ be a 3-completion set of the described graph $\greduction$. \Cref{nphardconstruction2} returns a good 3-completion set $E''$ of $\greductiontwo$ in time polynomial in $(\numitems+ \numsets)$ with $|E''|\leq |E'|$ where $\numitems$ and $\numsets$ denote the number of items and sets of the corresponding SET-COVER instance, respectively.
    \label{lemk3con}
\end{lemma}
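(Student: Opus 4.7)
My plan is to verify three properties of the output $E''$: that $E'' \subseteq R$ (so $E''$ is good), that $\greductiontwo \cup E''$ has a $(3,1)$-cover, and that $|E''| \le |E'|$. Polynomial runtime of the three steps is immediate from inspection, given that $\greductiontwo$ has polynomial size by \Cref{obssizereduction}.

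Goodness is automatic, since the algorithm only ever inserts edges of the form $(S_j, P)$ into $E''$. For the $(3,1)$-cover property, I would first observe that the only $3$-unsaturated edges of $\greductiontwo$ itself are the $(v, P)$ edges with $v$ an item vertex: every set--item edge $(S_j, v)$ was forced into a triangle with its auxiliary vertex at Step 5 of the construction. It then suffices to check that, after Step 3, each item subgraph $I_i$ has some good edge $(S_j, P)$ with $x_i \in S_j$ present in $\greductiontwo \cup E''$: such an edge creates the triangle $(v, S_j, P)$, which simultaneously saturates every $(v, P)$ with $v \in V(I_i)$ and the new edge $(S_j, P)$ itself. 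The guard in Step 3 guarantees this for every $i$.

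The heart of the argument is the inequality $|E''| \le |E'|$, which I plan to establish by a charging scheme on the disjoint partition $E' = \mathrm{Int} \cup \mathrm{Good} \cup \mathrm{Other}$, where $\mathrm{Int}$ consists of the edges of $E'$ with both endpoints in a common $V(I_i)$, $\mathrm{Good} = E' \cap R$, and $\mathrm{Other}$ is the remainder. Step 2 contributes exactly $|\mathrm{Good}|$ edges, and Step 1 contributes at most one edge per item subgraph containing an internal edge of $E'$; since the $I_i$'s are vertex-disjoint, this is bounded above by $|\mathrm{Int}|$.

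The main obstacle is then to show that the number $T$ of items triggering Step 3 satisfies $T \le |\mathrm{Other}|$. For any such $x_i$, by construction $E'$ contains neither an internal edge of $I_i$ (else Step 1 would fire) nor any good edge $(S_j, P)$ with $x_i \in S_j$ (else $I_i$ would already be fully saturated after Step 2); hence, for every $u \in V(I_i)$, the triangle in $\greductiontwo \cup E'$ saturating $(u, P)$ must be completed by a third vertex $z$ whose presence forces at least one edge of $\mathrm{Other}$. A short case analysis on $z$ (an item vertex of a different $I_{i'}$; a set vertex $S_j$ with $x_i \notin S_j$; or an auxiliary vertex $w$) shows that such an edge can always be chosen either incident to $u$, or of the form $(w, P)$ with $w$ the auxiliary associated to an edge at $u$. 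Since each edge of $\mathrm{Other}$ can arise in this way for at most two item vertices across the whole graph, hence for at most two distinct items, summing over the $2\numitems$ vertices of each $I_i$ yields $2\numitems \cdot T \le 2\,|\mathrm{Other}|$ and in particular $T \le |\mathrm{Other}|$, completing the proof. Tracking the auxiliary-vertex case of this association without double-counting is the main subtlety I anticipate.
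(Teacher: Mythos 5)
Your proposal is correct and follows essentially the same route as the paper: both arguments hinge on the observation that for an item $x_i$ reaching Step~3, $E'$ contains no internal edge of $I_i$ and no good edge $(S_j,P)$ with $x_i\in S_j$, so each of the $2\numitems$ edges $(u,P)$, $u\in V(I_i)$, must be saturated by a distinct ``other'' edge of $E'$, giving far more spare edges of $E'$ than the at most $\numitems$ edges Step~3 adds. The paper packages this as an explicit matching of $E''$ into $E'$ (Claim~\ref{claimunmatched}) while you phrase it as a charging scheme with a factor-$2$ slack, but the content is the same.
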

\begin{proof}
Since \Cref{nphardconstruction2} only adds edges of type $(S_j,P)$ and continues until all edges of $\greductiontwo$ are saturated, $E''$ is a good completion set of $\greductiontwo$. Furthermore, \Cref{nphardconstruction2} runs in time polynomial in $\numsets + \numitems$.\\
We now show that $|E''|\leq |E'|$. To show this inequality, for every edge added to $E''$, we \emph{match} it to a unique edge in $E'$ such that every edge in $E'$ is matched to at most one edge in $E''$. As a result, $|E''|\leq |E'|$.

Indeed, this matching is easy to see for Step 1 and Step 2 of \Cref{nphardconstruction2}. In Step 1, if an edge $(S_j,P)$ is added to $E''$, we match it to some edge $(u,v)\in E'$ with $\{u,v\}\subseteq V(I_i)$ and $x_i\in S_j$. In Step 2, if some edge $(S_j,P)$ is added to $E''$ (and was not added in the previous step), then we match it to its copy in $E'$, i.e., $(S_j, P) \in E'$. Therefore, in the first two steps, the edges of $E''$ are matched to unique edges from $E'$. To show this matching for Step 3, we prove the following claim.
\begin{claim}
    At the beginning of Step 3, if $\greductiontwo \cup E''$ has an unsaturated edge $(u,P)$ such that $u\in V(I_i)$ for some item $x_i$, then $E'$ has at least $2\numitems$ edges that were not matched to any edge from $E''$ in Step 1 and Step 2. 
    \label{claimunmatched}
\end{claim}
\begin{proof}
    Suppose such an edge exists. Notice that for any two distinct vertices $u,v\in V(I_i)$, we must have $(u,v)\notin E'$, because otherwise \Cref{nphardconstruction2} would have caught this edge in Step 1 and added an edge $(S_j,P)$ for $x_i \in S_j$, saturating all such edges $(u,P)$. Similarly and using Step 2, we have $(S_j,P)\notin E'$ for any $S_j \in \mathcal F$ with $x_i \in S_j$. Since $E'$ is a completion set of $\greductiontwo$, it is easy to see that $2\numitems$ edges $\{(u,P)|\;  u \in V(I_i)\}$ were covered by at least $2\numitems$ edges in $E'$, not of the types matched in the first two steps.
\end{proof}
We now conclude the proof of \Cref{lemk3con}. If, after Step 2, no such unsaturated edge $(u,P)$ exists, then we are done. If such an edge exists, then using Claim~\ref{claimunmatched}, $E'$ still has at least $2\numitems$ unmatched edges. Since we add at most $\numitems$ edges to $E''$ in Step 3 (one for each item in $\mathcal X$), we can easily match these edges to the ones described in Claim~\ref{claimunmatched}.
\end{proof}
\begin{thm}
    The $(3,1)$-cover problem is $\np$-complete.
    \label{nphardnesskeq3}
\end{thm}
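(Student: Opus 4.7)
The plan is to establish the two parts of $\np$-completeness separately: membership and hardness. Membership in $\np$ is routine: given a candidate set $E' \subseteq (V \times V) \setminus E$ with $|E'| \le t$, we can verify in polynomial time that $E'$ is a 3-completion set by iterating over every edge of $G \cup E'$ and checking whether it lies in a triangle (this takes time polynomial in $|V|$).

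For hardness, I would reduce from SET-COVER using the construction of $\greductiontwo$ already described. By Observation~\ref{obssizereduction} the construction runs in time polynomial in $\numitems + \numsets$, so it remains to prove the equivalence: $(\mathcal{X}, \mathcal{F}, t)$ has a set cover of size at most $t$ if and only if $\greductiontwo$ has a 3-completion set of size at most $t$.

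For the forward direction, suppose $\mathcal{S} \subseteq \mathcal{F}$ is a set cover with $|\mathcal{S}| \le t$. Define $E' = \{(S_j, P) : S_j \in \mathcal{S}\}$. Every edge of $\greductiontwo$ not incident to $P$ already lies in a triangle thanks to Step~5 of the construction, so the only edges I need to check are those of the form $(v, P)$ with $v \in V(I_i)$. Since $\mathcal{S}$ covers $x_i$, there exists $S_j \in \mathcal{S}$ with $x_i \in S_j$; by Step~4, $S_j$ is adjacent to every vertex of $I_i$, so $(S_j, v, P)$ forms a triangle in $\greductiontwo \cup E'$, saturating $(v,P)$. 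Hence $E'$ is a 3-completion set of size at most $t$.

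For the reverse direction, suppose $\greductiontwo$ has a 3-completion set $E'$ with $|E'| \le t$. Applying Algorithm~\ref{nphardconstruction2}, Lemma~\ref{lemk3con} yields a good 3-completion set $E''$ with $|E''| \le |E'| \le t$, and then Lemma~\ref{obs6} converts $E''$ into a set cover of $(\mathcal{X}, \mathcal{F})$ of size at most $|E''| \le t$. This completes the equivalence and the proof. The main conceptual obstacle, namely transforming an arbitrary 3-completion set into a \emph{good} one without increasing its size, has already been absorbed into Lemma~\ref{lemk3con}, so the theorem itself reduces to assembling these pieces together with the polynomial-size bound from Observation~\ref{obssizereduction}.
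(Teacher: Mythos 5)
Your proposal is correct and follows essentially the same route as the paper's proof: polynomial-time verifiability for membership, and the SET-COVER reduction with \Cref{obssizereduction}, \Cref{lemk3con}, and \Cref{obs6} assembling the two directions of the equivalence. The only difference is that you spell out the forward direction (checking that the edges $(S_j,P)$ saturate all edges incident to $P$) in slightly more detail than the paper does, which is a harmless elaboration.
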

\begin{proof}
    It is easy to see that the $(3,1)$-cover problem belongs to $\mathbb{NP}$ since given a set of non-edges of an input graph $G$, it can be verified in polynomial time whether that set is a 3-completion set of $G$ of size at most $t$. We now show that the $(3,1)$-cover problem is $\mathbb{NP}$-hard. Let $\mathcal{I}=(\mathcal{X}, \mathcal{F},t)$ be an instance of the SET-COVER problem. Construct the graph $\greduction$. Constructing this graph using \Cref{obssizereduction} takes polynomial time in $|\mathcal X|$ and $|\mathcal F|$. We claim that $\mathcal{I}$ has a set cover of size at most $t$ if and only if $\greductiontwo$ has a 3-completion set of size at most $t$. Indeed, if $\mathcal I$ has a set cover of size at most $t$, then we can construct a 3-completion set for $\greductiontwo$ consisting of edges $(S_j,P)$ for every set $S_j$ in this set cover. Conversely, if $\greductiontwo$ has a 3-completion set $E'$ of size at most $t$, then using \Cref{lemk3con} and \Cref{nphardconstruction2}, we can construct a good 3-completion set $E''$ with $|E''|\leq |E'|\leq t$ which corresponds to a set cover of size $|E''|\leq t$ using \Cref{obs6}. Therefore, the $(3,1)$-cover problem is $\mathbb{NP}$-hard. 
\end{proof}
\subsubsection{Reduction for $k\geq4 $}
\label{reductionkgeq4}
Given an instance $(\mathcal{X}, \mathcal{F},t)$ of the $\text{SET-COVER}$ problem and an integer $k\geq 4$, we construct a graph $\greduction= (\mathbb V, \mathbb E)$ as follows (see \Cref{fig1} for an illustration). \\
\begin{enumerate}
\item Initially, $\vreductiontwo=\emptyset$ and $\ereductiontwo=\emptyset$.
 \item For each item $x_i \in \mathcal{X}$, we add two vertices $x_i$ and $x'_i$ to $\vreductiontwo$. Moreover, we add the edge $(x_i, x'_i)$ to $\mathbb{E}$. We refer to such vertices and edges as \defin{item vertices and edges}, respectively.

    \item For each set $S_j \in \mathcal{F}$, we create a graph  $\greductionsj$ isomorphic to $K_{k-2}$ without one edge and add $\greductionsj$ to $\greduction$. Denote the missing edge of this subgraph by $(S_j,S'_j)$. We refer to such subgraphs $\greduction_j$ as \defin{set subgraphs}. We update $\vreductiontwo$ and $\ereductiontwo$ accordingly by adding the $k-2$ vertices and $\binom{k-2}{2}-1$ edges for each set subgraph to $\vreductiontwo$ and $\ereductiontwo$, respectively. 
   
    \item For each item $x_i \in \mathcal{X}$, and every set $S_j \in \mathcal{F}$ with $x_i \in S_j$, we connect $x_i$ and $x'_i$ to every vertex of $V(\greductionsj)$, i.e., we set $\ereductiontwo\xleftarrow[]{}\ereductiontwo\cup\{(u,v)|u \in \{x_i, x'_i\}, v \in V(\greductionsj)\}$. 
    \item For each edge $e_i \in \mathbb{E} \sm \{(x_1, x'_1), \dots, (x_{|\mathcal{X}|}, x'_{|\mathcal{X}|})\}$, we add a $k$-clique to $\greduction$, consisting of the endpoints of $e_i$ plus $k-2$ new vertices. We then update $\vreductiontwo$ and $\ereductiontwo$.
    
    \item We add a vertex $P$ to $\vreductiontwo$. For every set $S_i \in \mathcal F$, we add the edge $(S_i, P)$ to $\ereductiontwo$. We refer to $P$ as \defin{the common vertex} of $\greduction$. For every newly added edge $(S_i,P)$, we cover it in a $k$-clique consisting of $S_i$, $P$, and $k-2$ new vertices. This step ensures that $\greduction$ is connected. 
\end{enumerate}
It is easy to see that after the fourth step, for any $S_j \in \mathcal{F}$ and any $x_i \in \mathcal{X} \cap S_j$, the subgraph of $\greduction$ induced on $V(\greductionsj) \cup \{x_i, x'_i\}$ is a complete graph on $k$ vertices minus one edge, the one between $S_j$ and $S'_j$. Due to Step 5 and Step 6, all edges except the ones created in the second step ($\{(x_1,x'_1), \dots, (x_{|\mathcal{X}|}, x'_{|\mathcal{X}|})\}$) are covered in $k$-cliques.

The next lemma states that when $k$ is constant, the described graph can be constructed in time polynomial in $|\mathcal{F}|$ and $|\mathcal{X}|$. 
\begin{lemma}
    Let $(\mathcal{X}, \mathcal{F},t)$ be an instance of the SET-COVER problem. Then, for an integer $k \geq 4$, the procedure described above builds the graph $\greduction$ in $\mathcal{O}(|\mathcal{X}|. |\mathcal{F}|.k^{4})$ time. Furthermore, $\greduction$ is connected. 
    \label{reductiontime}
\end{lemma}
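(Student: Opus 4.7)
The plan is to verify the time bound by walking through the six construction steps and counting work per step, then argue connectedness by exhibiting a path from every vertex to the common vertex $P$.

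First I would tabulate the work done in each step. Step~1 is $O(1)$. Step~2 processes each item once, adding two vertices and one edge, for $O(|\mathcal{X}|)$ work. Step~3 processes each set, adding $k-2$ vertices and $\binom{k-2}{2}-1 = O(k^{2})$ edges per set, for a total of $O(|\mathcal{F}| \cdot k^{2})$. Step~4 iterates over all incidences $(x_i, S_j)$ with $x_i \in S_j$; each such incidence contributes $2(k-2)$ edges, giving at most $O(|\mathcal{X}| \cdot |\mathcal{F}| \cdot k)$ work. The dominant contribution comes from Step~5: the number of edges generated in Steps~3 and~4 is $O(|\mathcal{F}| \cdot k^{2} + |\mathcal{X}| \cdot |\mathcal{F}| \cdot k)$, and each such edge triggers the creation of a $k$-clique that costs $O(k^{2})$ vertices and edges to write down. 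Multiplying gives $O(|\mathcal{X}| \cdot |\mathcal{F}| \cdot k^{3} + |\mathcal{F}| \cdot k^{4})$, which is absorbed into $O(|\mathcal{X}| \cdot |\mathcal{F}| \cdot k^{4})$ since $|\mathcal{X}| \geq 1$. Step~6 attaches a $k$-clique to each of $|\mathcal{F}|$ edges $(S_i, P)$, contributing an additional $O(|\mathcal{F}| \cdot k^{2})$, which is also absorbed. Summing across all steps yields the claimed $\mathcal{O}(|\mathcal{X}| \cdot |\mathcal{F}| \cdot k^{4})$ bound.

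For the connectedness claim, I would argue that every vertex of $\greduction$ has a short path to the common vertex $P$. By Step~6, every set vertex $S_i$ is directly adjacent to $P$. Every set subgraph $\greductionsj$ from Step~3 is internally connected (it is $K_{k-2}$ minus an edge, which for $k \geq 4$ is connected), so all vertices of $\greductionsj$ reach $P$ through $S_j$. By the SET-COVER convention that every item belongs to at least one set, each item pair $\{x_i, x'_i\}$ is, by Step~4, joined to some set subgraph $\greductionsj$ and thus reaches $P$. The auxiliary vertices introduced in Step~5 and Step~6 lie inside $k$-cliques containing an already-connected edge, so they inherit connectivity. Hence $\greduction$ is connected.

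The only mild obstacle is bookkeeping in Step~5: one must be careful that the count of edges entering Step~5 is $O(|\mathcal{X}| \cdot |\mathcal{F}| \cdot k)$ rather than something larger, and that multiplying by the $O(k^{2})$ per-clique cost indeed fits inside $O(|\mathcal{X}| \cdot |\mathcal{F}| \cdot k^{4})$. Since each $k$-clique touches a constant number of vertices per edge and the incidence count of Step~4 is the dominant source of edges, the arithmetic closes cleanly. No non-trivial combinatorial argument is required beyond the straightforward per-step accounting and the trivial observation that $P$ is adjacent to every set vertex.
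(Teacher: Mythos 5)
Your step-by-step accounting for the time bound is correct and matches the paper's own proof almost line for line (the paper is in fact slightly looser, charging $\bigoh(k^2)$ per item--set incidence in Step~4 and hence $\bigoh(|\mathcal{F}|\cdot|\mathcal{X}|\cdot k^2)$ edges entering Step~5; your tighter $\bigoh(|\mathcal{X}|\cdot|\mathcal{F}|\cdot k)$ count is also fine and lands inside the same final bound). The paper dismisses connectedness with a single clause, so your attempt to actually argue it is welcome, but it contains a genuine error at the smallest admissible value of $k$.

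Specifically, you claim that each set subgraph $\greductionsj$ is ``internally connected (it is $K_{k-2}$ minus an edge, which for $k \geq 4$ is connected).'' For $k=4$ the set subgraph is $K_2$ minus its only edge, i.e., two isolated vertices $S_j$ and $S'_j$, which is \emph{not} connected; it only becomes connected for $k \geq 5$. So the route ``all vertices of $\greductionsj$ reach $P$ through $S_j$'' breaks down exactly in the case illustrated in the paper's own figure. The repair is immediate but uses a hypothesis you did not invoke for this purpose: the SET-COVER instance guarantees that no set in $\mathcal{F}$ is empty, so each $S_j$ contains some item $x_i$, and by Step~4 both $x_i$ and $x'_i$ are adjacent to \emph{every} vertex of $V(\greductionsj)$, including $S'_j$. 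Hence every vertex of $\greductionsj$ reaches $x_i$, which reaches $S_j$, which reaches $P$ by Step~6. With that substitution (or, equivalently, by observing that $V(\greductionsj)\cup\{x_i,x'_i\}$ induces $K_k$ minus one edge, which is connected for $k\geq 3$), the rest of your connectivity argument for the item vertices and the auxiliary clique vertices goes through.
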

\begin{proof}
    Step 2 and Step 3 can be executed in $\mathcal{O}(\numitems)$ and $\mathcal{O}(\numsets.k^2)$ time, respectively. In Step 4, we need to check at most $\bigoh(|\mathcal{F}|.|\mathcal{X}|)$ pairs of item-sets, and at each step, we create at most $\bigoh(k^2)$ new vertices. Therefore, Step 4 takes $\mathcal{O}(|\mathcal{F}|.|\mathcal{X}|.k^2)$ time. After Step 4, there are at most $\bigoh(|\mathcal{F}|.|\mathcal{X}|.k^2)$ edges and creating a $k$-clique on each of them takes a total of $\bigoh(|\mathcal{F}|.|\mathcal{X}|.k^4)$ time in Step 5. Finally, Step~6 ensures that $\greduction$ is connected and can be done in $\mathcal{O}(\numsets .k^2)$ time.
\end{proof}
     \begin{figure}[H]
    \centering
        {\includegraphics[scale=0.51]{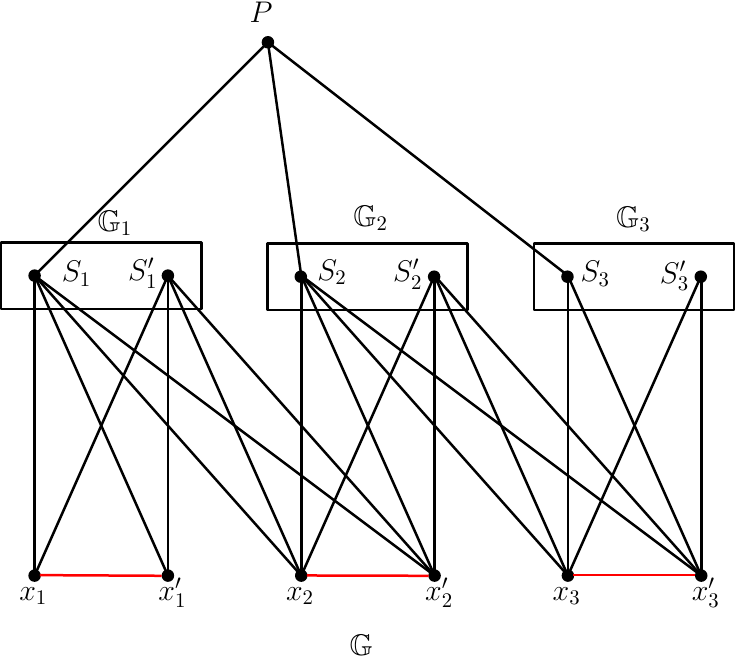}}
    \caption{An example of the reduction graph $\greduction$ for $k=4$, $\mathcal{F}=\{S_1, S_2,S_3\}$, $\mathcal{X}=\{x_1,x_2,x_3\}$, $S_1=\{x_1,x_2\}$, $S_2=\{x_2,x_3\}$, and $S_3=\{x_3\}$. Every black edge in this graph is contained in a $k$-clique consisting of its endpoints plus $k-2$ other vertices omitted from this figure for simplicity (see Step 5 and Step 6 of the construction). Therefore, only the red edges of this graph are not contained in any $k$-cliques.} 
    \label{fig1}
\end{figure}

We have the following observations regarding the graph $\greduction$.
\begin{observation}
Let $\greduction$ be the graph described above and let $R$ be the set of $k$-unsaturated edges of $\greduction$. Then, 
$R=\{(x_1, x'_1),\dots, (x_{|\mathcal{X}|}, x'_{|\mathcal X|})\}$
\label{obs1}
\end{observation}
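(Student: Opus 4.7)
The plan is to establish two inclusions: first show that every edge outside the claimed set $R$ is contained in some $K_k$ of $\greduction$ (hence $k$-saturated), and then show that no item edge $(x_i, x'_i)$ is contained in any $K_k$.

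The first inclusion is essentially immediate from the construction. In Step~5, every edge of $\greduction$ that exists at the end of Step~4 and is \emph{not} an item edge is explicitly placed inside a $k$-clique (its endpoints together with $k-2$ fresh vertices). The only edges created after Step~5 are the edges $(S_j, P)$ introduced in Step~6, and each of these is also explicitly covered by a $k$-clique in that same step. Thus every non-item edge is $k$-saturated, which gives $R \supseteq \{k\text{-unsaturated edges}\}$.

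For the reverse inclusion, fix an item edge $(x_i, x'_i)$ and suppose for contradiction that it lies in some copy of $K_k$. Then $x_i$ and $x'_i$ share $k-2$ common neighbors that pairwise form a $K_{k-2}$. I would first compute the common neighborhood of $x_i$ and $x'_i$. The neighbors of $x_i$ are $x'_i$, the vertices of each set subgraph $\greductionsj$ with $x_i \in S_j$ (from Step~4), and the $k-2$ fresh vertices introduced in Step~5 for each non-item edge incident to $x_i$; the same is true for $x'_i$. Crucially, the fresh vertices introduced in Step~5 for an edge $(x_i, v)$ are adjacent only to $x_i$ and $v$ and the other $k-3$ fresh vertices of that clique, so they are \emph{not} neighbors of $x'_i$. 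Consequently, $N(x_i) \cap N(x'_i) = \bigcup_{S_j \ni x_i} V(\greductionsj)$.

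It remains to verify that this common neighborhood contains no $K_{k-2}$. Vertices from two different set subgraphs $\greductionsj$ and $\greduction_{j'}$ are non-adjacent in $\greduction$, since the only edges between set subgraphs and the rest of the graph were introduced at the item-level (Step~4) or through $P$ (Step~6). Therefore any $K_{k-2}$ in the common neighborhood must lie entirely inside a single $\greductionsj$. But $\greductionsj$ has exactly $k-2$ vertices and, by definition, is isomorphic to $K_{k-2}$ minus the edge $(S_j, S'_j)$, so it contains no $K_{k-2}$. This contradicts the assumption, finishing the proof. The only mildly subtle step is tracking the common neighborhood carefully and ruling out cross-subgraph cliques, but both follow directly from the construction.
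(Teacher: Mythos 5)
Your proposal is correct and follows essentially the same route the paper takes: the paper justifies this observation via the remarks preceding it (Steps 5 and 6 place every non-item edge in a $k$-clique, and the induced subgraph on $V(\greductionsj)\cup\{x_i,x'_i\}$ is $K_k$ minus the edge $(S_j,S'_j)$) together with \Cref{obs2}, which is exactly your common-neighborhood computation. You simply spell out the details that the paper leaves implicit, and all of them check out.
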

\noindent Before describing our key lemma of this section, we present one notation.
\begin{notation}\label{vt}
Let $G=(V,E)$ be any graph. For an edge $e=(u,v) \in E$, we denote all vertices $w$ such that $w$ and $e$ form a triangle as $V_T(e)$, i.e.,
$V_T(e)\coloneqq\{w \in V| (u,w), (v,w) \in E\} $.
\end{notation}
In the next observation, we classify the triangles of each edge $(x_i,x'_i)$:
\begin{observation}
    Let $\greduction$ be the reduction graph. For an item $x_i\in \mathcal X$, let $\{S_1, \dots S_m \}\subseteq \mathcal F$ be the sets in $\mathcal F$ that cover $x_i$, i.e., $x_i\in S_1\cap\dots\cap S_m$. We have $V_T((x_i, x'_i)) \subseteq \bigcup_{j=1}^{m}V(\greduction_j)$ in $\greduction$, where $\mathbb{G}_1,\dots,\mathbb{G}_m $ are the corresponding set subgraphs (see Step 3 of the construction) of $S_1,\dots ,S_m $, respectively. 
    \label{obs2}
\end{observation}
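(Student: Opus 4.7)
The plan is to take an arbitrary vertex $w \in V_T((x_i,x'_i))$ and show, by going through the six construction steps in order, that the only way $w$ can be adjacent to both $x_i$ and $x'_i$ is if $w$ lies in $V(\greduction_j)$ for some $S_j$ containing $x_i$. So I would begin by partitioning the vertex set $\vreductiontwo$ of $\greduction$ into five disjoint classes according to which step they are introduced in: (i) item vertices $\{x_l, x'_l : l \in \{1, \dots, |\mathcal X|\}\}$ from Step 2, (ii) set-subgraph vertices $\bigcup_j V(\greduction_j)$ from Step 3, (iii) the auxiliary vertices created in Step 5 to turn each non-item edge into a $k$-clique, (iv) the common vertex $P$ from Step 6, and (v) the auxiliary vertices created in Step 6 to turn each $(S_i,P)$ edge into a $k$-clique. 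I would then rule out classes (i), (iii), (iv), (v) as candidates for $w$.

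For class (i), I would observe that the only edges added in Steps 2--6 incident to an item vertex $x_l$ (or $x'_l$) are either the item edge $(x_l,x'_l)$ itself, or edges going to set-subgraph vertices (Step 4), or edges going to Step 5 auxiliary vertices. In particular, no construction step ever puts an edge between two item vertices belonging to different items, so for $l \neq i$ neither $x_l$ nor $x'_l$ is adjacent to $x_i$. For class (iv), I would note that $P$ is connected only to the set vertices $S_j$ (plus Step 6 auxiliaries) and hence is not adjacent to $x_i$. For classes (iii) and (v), the key observation is that each auxiliary vertex is created for a specific edge $e$ and is adjacent only to the two endpoints of $e$ and to the other $k-2$ auxiliaries of that same $k$-clique; since the item edge $(x_i,x'_i)$ is explicitly excluded from Step 5 and is never processed in Step 6, no single auxiliary vertex ever sees both $x_i$ and $x'_i$ as neighbours.

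This leaves only class (ii). For $w \in V(\greduction_j)$, by Step 4 $w$ is adjacent to both $x_i$ and $x'_i$ precisely when the construction ran Step 4 for the pair $(x_i,S_j)$, i.e., exactly when $x_i \in S_j$. Since $S_1,\dots,S_m$ are by hypothesis all the sets in $\mathcal F$ that cover $x_i$, this forces $w \in \bigcup_{j=1}^{m} V(\greduction_j)$, which is exactly the desired inclusion.

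I do not anticipate a serious obstacle here; the statement is essentially a case analysis dictated by the construction. The only place that needs mild care is being precise that the Step 5 and Step 6 auxiliaries are introduced \emph{per edge} and therefore have degree supported entirely within their own $k$-clique, so they cannot accidentally see two vertices that were not both endpoints of the same inserted edge. Once that is stated once, the rest of the argument is a direct inspection.
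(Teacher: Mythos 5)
Your case analysis is correct and is exactly the reasoning the paper leaves implicit: the statement is presented as an Observation with no written proof, on the grounds that it follows by direct inspection of the construction. Your partition of $\vreductiontwo$ into the five classes and the elimination of classes (i), (iii), (iv), (v) — in particular the point that Step 5 and Step 6 auxiliaries are introduced per edge and the item edge $(x_i,x'_i)$ is excluded from Step 5, so no auxiliary is adjacent to both endpoints — is the intended justification, correctly spelled out.
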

In other words, \cref{obs2} states that in $\greduction$, the third vertex of any triangle containing $(x_i, x'_i)$ must necessarily come from some set subgraph $\greductionsj$ (with $x_i \in S_j$) as created in Step 3.
We present a definition.
\begin{definition}
    Let $E'$ be a $k$-completion set for the graph $\greduction$. Then, $E'$ is a \defin{good $k$-completion set} if {$E'\subseteq \{(S_1,S'_1), \dots, (S_{|\mathcal{F}|}, S'_{|\mathcal{F}|})\}$}. 
    \label{goodcompletionsetdefinition}
\end{definition}
  Similar to \Cref{obs6}, we have the following: 
  \begin{lemma}
     Any good completion set $E'$ of $\greduction$ corresponds to a set cover of size $|E'|$ for the corresponding SET-COVER instance.
     \label{obs6prime}
\end{lemma}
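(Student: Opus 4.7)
The plan is to follow the same overall strategy as Lemma~\ref{obs6}, but with $k$-cliques in place of triangles, and to use Observation~\ref{obs2} to pin down the structure of any $k$-clique that witnesses the saturation of an $(x_i,x'_i)$ edge.

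Given a good $k$-completion set $E'$, I would define $\mathcal{S} = \{S_j \in \mathcal F : (S_j, S'_j) \in E'\}$. Distinct edges of the form $(S_j, S'_j)$ correspond to distinct sets $S_j$, so $|\mathcal S| = |E'|$ is immediate, and the task reduces to showing that $\mathcal{S}$ covers every item of $\mathcal X$. Fix $x_i \in \mathcal X$. By Observation~\ref{obs1}, the edge $(x_i,x'_i)$ is $k$-unsaturated in $\greduction$, so $\greduction \cup E'$ must contain a $k$-clique $K$ through $(x_i, x'_i)$. Let $v_1, \dots, v_{k-2}$ be the remaining vertices of $K$; each $v_\ell$ forms a triangle with $(x_i,x'_i)$ in $\greduction \cup E'$. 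Since every edge of $E'$ has both endpoints inside a single set-subgraph $\greduction_j$, adding $E'$ creates no new triangle on $(x_i,x'_i)$, so $\{v_1,\dots,v_{k-2}\} \subseteq V_T((x_i,x'_i))$ as computed already in $\greduction$. By Observation~\ref{obs2}, each $v_\ell$ lies in some $V(\greductionsj)$ with $x_i \in S_j$.

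The main obstacle is verifying that all $k-2$ of the vertices $v_\ell$ lie in the \emph{same} set-subgraph. I plan to settle this by direct structural inspection of the reduction: the only neighbours of a vertex in $V(\greductionsj)$ are other vertices of $\greductionsj$, the item vertices $\{x_i, x'_i : x_i \in S_j\}$, and the auxiliary vertices attached in Steps~5 and~6 to edges incident to $\greductionsj$. In particular, for $j \ne j'$, no vertex of $\greductionsj$ is adjacent to a vertex of $\greduction_{j'}$, and edges of $E'$ never cross between distinct set-subgraphs. Hence all $v_\ell$ lie in a single $\greductionsj$ with $x_i \in S_j$, and since $|V(\greductionsj)| = k-2$, they must coincide with $V(\greductionsj)$.

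To finish, I note that for these $k-2$ vertices to be pairwise adjacent inside the clique $K$, the unique missing edge $(S_j, S'_j)$ of $\greductionsj$ must lie in $E'$; thus $S_j \in \mathcal S$, and since $x_i \in S_j$, the item $x_i$ is covered. As $x_i$ was arbitrary, $\mathcal S$ is a set cover of size $|E'|$, yielding the claim.
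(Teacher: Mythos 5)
Your proof is correct and follows the approach the paper intends: the paper gives no explicit proof of Lemma~\ref{obs6prime}, deferring to the analogous Lemma~\ref{obs6}, and your argument is precisely the natural $k$-clique version of that proof, with the extra care (via Observation~\ref{obs2} and the absence of edges between distinct set subgraphs) needed to show that any $k$-clique through $(x_i,x'_i)$ must consume an entire set subgraph $\greductionsj$ together with its missing edge $(S_j,S'_j)\in E'$. No gaps; nothing further is needed.
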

\begin{lemma}
Let $\greduction$ be the described graph. Then, any $k$-completion set $E'$ of $\greduction$ can be transformed into a good $k$-completion set $E''$ with $|E''|\leq |E'|$. Furthermore, this transformation can be done in $\bigoh(|E'| +|\mathcal{F}|^2. k^2 +|\mathcal{X}|^2.|\mathcal{F}|.k^4)$ time for any $k\geq 4$.
    \label{lemk4geqcon}
\end{lemma}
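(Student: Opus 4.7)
The plan mirrors \Cref{nphardconstruction2} from the $k=3$ case. I will give an algorithm that converts $E'$ into $E''$ by first inheriting the good edges of $E'$ and then inserting one good edge per still-uncovered item, and I will bound $|E''| \le |E'|$ via an injective matching from the inserted edges to distinct edges of $E'$. Concretely, initialize $E'' \leftarrow \emptyset$. In \textbf{Step 1}, for every $(S_j, S'_j) \in E'$ insert $(S_j, S'_j)$ into $E''$ and match it with its copy in $E'$. In \textbf{Step 2}, iterate over $x_i \in \mathcal{X}$ and, whenever $(x_i, x'_i)$ is still $k$-unsaturated in $\greduction \cup E''$, pick any $S_j \in \mathcal{F}$ with $x_i \in S_j$ and insert $(S_j, S'_j)$ into $E''$.

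Goodness and completeness of $E''$ are immediate: only edges of the form $(S_j, S'_j)$ are ever added; and whenever $(S_j, S'_j) \in E''$ with $x_i \in S_j$, the set $\{x_i, x'_i\} \cup V(\greductionsj)$ induces a $K_k$ in $\greduction \cup E''$, because the sole missing edge of $\greductionsj$ is now present while Steps 5 and 6 of the reduction already $k$-saturate all non-item edges. The technical core is an analog of \Cref{claimunmatched}: whenever Step 2 fires for $x_i$, $E'$ contains at least one non-good edge lying on some $k$-clique covering $(x_i, x'_i)$ in $\greduction \cup E'$. Indeed, by \Cref{obs1} the edge $(x_i, x'_i)$ is $k$-unsaturated in $\greduction$, so its covering clique in $\greduction \cup E'$ must use at least one $E'$ edge; and by the firing hypothesis no good edge $(S_j, S'_j) \in E'$ with $x_i \in S_j$ is available, since otherwise it would have been copied into $E''$ in Step 1 and $\{x_i, x'_i\} \cup V(\greductionsj)$ would already be a $K_k$. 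By \Cref{obs2}, any such witness edge has at least one endpoint in $\bigcup_{S_j \ni x_i} V(\greductionsj)$ or is incident to $\{x_i, x'_i\}$.

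The main obstacle is arguing that witness edges can be chosen pairwise distinct across Step 2 items, since a single non-good edge of $E'$---for instance, one between $V(\greductionsj)$ and $V(\greduction_{j'})$ with $\{x_i, x_{i'}\} \subseteq S_j \cap S_{j'}$---may a priori lie on covering cliques for several items simultaneously. The key observation to close this gap is that when Step 2 inserts $(S_j, S'_j)$ for $x_i$, the new edge simultaneously $k$-saturates every $(x_{i'}, x'_{i'})$ with $x_{i'} \in S_j$, so those items are skipped later in the loop and do not compete for the same witness. Processing Step 2 in a fixed order and invoking the claim sequentially then yields an injective assignment from Step 2 insertions to non-good edges of $E'$, which, combined with the Step 1 matching, gives $|E''| \le |E'|$.

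For the running time, Step 1 scans $E'$ in $\bigoh(|E'|)$ time. In Step 2, each of the $|\mathcal{X}|$ iterations reduces (via \Cref{obs2}) to searching for a $(k-2)$-clique among the $\bigoh(|\mathcal{F}| \cdot k)$ common neighbours of $x_i$ and $x'_i$ under the currently inserted good edges; enumerating over candidate $V(\greductionsj)$'s and their $\bigoh(|\mathcal{F}|^2 \cdot k^2)$ pairwise adjacencies costs $\bigoh(|\mathcal{X}| \cdot |\mathcal{F}| \cdot k^4)$ per iteration, contributing the $|\mathcal{X}|^2 \cdot |\mathcal{F}| \cdot k^4$ term. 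Adding the $\bigoh(|\mathcal{F}|^2 \cdot k^2)$ one-time overhead for representing the set-subgraph structure of $\greduction$ and invoking \Cref{reductiontime}, the total running time is $\bigoh(|E'| + |\mathcal{F}|^2 \cdot k^2 + |\mathcal{X}|^2 \cdot |\mathcal{F}| \cdot k^4)$, as required.
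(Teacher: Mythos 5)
Your high-level plan (inherit the good edges, then insert one good edge per still-uncovered item, and pay for each insertion with a distinct edge of $E'$) is the same as the paper's, but your algorithm omits the step that actually makes the payment injective, and the argument you give to close the gap does not close it. The problematic witnesses are exactly the ones you flag: a single non-good edge $(u,v)\in E'$ with $u\in V(\greduction_a)$ and $v\in V(\greduction_b)$ can, by itself, complete a covering $k$-clique $\{x_i,x'_i\}\cup C$ with $C\subseteq V(\greduction_a)\cup V(\greduction_b)$ for \emph{every} item $x_i\in S_a\cap S_b$ (for $k=4$ one such cross edge suffices; no edge of $E'$ incident to $\{x_i,x'_i\}$ need exist). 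Your rescue is that inserting $(S_j,S'_j)$ for $x_i$ saturates all items of $S_j$ — but Step~2 lets you pick \emph{any} $S_j\ni x_i$, which need not be $S_a$ or $S_b$, so the other items of $S_a\cap S_b$ remain unsaturated and compete for the same single witness. Concretely: take $S_a\supseteq\{x_1,x_2,x_3\}$, $S_b\supseteq\{x_1,x_2,x_3\}$, singletons $S_c=\{x_1\}$, $S_d=\{x_2\}$, $S_e=\{x_3\}$, and $E'=\{(u,v)\}$ with $u\in V(\greduction_a)$, $v\in V(\greduction_b)$; this is a valid $4$-completion set of size $1$, yet your Step~2 may insert $(S_c,S'_c)$, $(S_d,S'_d)$, $(S_e,S'_e)$ and output $|E''|=3>|E'|$. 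So the claimed injective assignment does not exist for the algorithm as stated.

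The paper avoids this with a separate preprocessing pass (Lines 5--9 of \Cref{nphardconstruction}): for every pair of set subgraphs joined by an edge of $E'$, it inserts the corresponding good edge up front and charges it to one such cross edge. After that pass, \Cref{claim1} shows that any remaining unsaturated item edge $(x_i,x'_i)$ must lie in a \emph{new} triangle, i.e., one using an edge of $E'$ incident to $x_i$ or $x'_i$; such witnesses are private to $x_i$ except when the apex $v$ is an item vertex $x_\ell$ of another unsaturated item, in which case both $(x_i,v)$ and $(x'_i,v)$ lie in $E'$ and two good edges are charged to them. Your proof needs either this cross-edge preprocessing or a rule forcing Step~2 to choose $S_j$ among the sets whose subgraphs host the witness clique; as written, both the algorithm and the injectivity argument are incomplete. (Your Step~1 and the existence-of-a-witness claim, analogous to \Cref{claimunmatched}, are fine, as is the rough shape of the running-time bound.)
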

\begin{proof}
    This construction is done in steps and described in greater detail in \cref{nphardconstruction}. We first show the time complexity. Trivially, Line 4 can be done $\bigoh(|E'|)$ time. In Lines 5 to 9, we check all pairs of set subgraphs and there are $\bigoh(|\mathcal{F}|^2)$ pairs in total, where each pair can be checked in $\bigoh(k^2)$ time. In Lines 10 to 20, we check $\bigoh(|\mathcal{X}|)$ item edges and handle each edge in $\bigoh(|\mathcal{X}|.|\mathcal{F}|.k^4)$ time (in the worst case, we have to check every vertex of $\greduction$). Moreover, since \Cref{nphardconstruction} only adds edges of type $(S_j,S'_j)$ and continues until all edges of $\greduction$ are saturated, it is easy to see that $E''$ is a good $k$-completion set of $\greduction$.

    To show that $|E''|\leq |E'|$, similar to the proof of \cref{lemk3con}, for every edge added to $E''$, we match it to a unique edge in $E'$ such that every edge in $E'$ is matched to at most one edge in $E''$. For Line 4, this matching is easy to see, for every such edge $(S_j, S'_j)$ added to $E''$, we match it to its copy in $E'$ ($(S_j,S'_j)\in E'$). Similarly, if an edge $(S_j, S'_j)$ is added to $E''$ (and was not added in previous steps) in Line 7, then we can match $(S_j,S'_j)$ to $(u,v)\in E'$ (as described in Line 6). Because of the way we pick $(u,v)\in E'$ in Line 6, it could not have been matched to another edge $(S_{i}, S'_{i})\in E''$ before $(S_j, S'_j)\in E''$ (with $i\neq j$). 

    Before describing the matching for Lines 10 to 20, we show the following claim. 
        \begin{claim}
        As long as we have a $k$-unsaturated item edge $(x_i,x'_i)$ in $\greduction \cup E''$ in Line 10, there exists a vertex $v \in   V_T((x_i, x'_i))$ in $\greduction \cup E'$ with $|\mathbb{E} \cap \{(x_i, v), (x'_i,v)\}|\leq 1$ in Line 11. Moreover, the edges of $\{(x_i,v), (x'_i,v)\} \cap E'$ have not been matched to any edges in $E''$ in the previous steps.
        \label{claim1}
    \end{claim}
    \begin{proof}
 Consider the initial $k$-completion set $E'$ and the first such unsaturated edge $(x_i, x'_i)$ (we will address subsequent edges later). If the edges in $E'$ create any new triangles-i.e., triangles that exist in $\greduction \cup E'$ but not in $\greduction$-containing $(x_i, x'_i)$, then such a vertex $v$ trivially exists in $\greduction \cup E'$. 

Now, assume to the contrary that such a vertex $v$ does not exist, i.e., $V_T((x_i, x'_i))$ in $\greduction$ is the same as $V_T((x_i, x'_i))$ in $\greduction \cup E'$. This would imply that $E'$ has created a $k$-clique containing $x_i, x'_i$, and $k-2$ other vertices from $V_T((x_i, x'_i))$ in $\greduction$. Observe that these $k-2$ vertices from $V_T((x_i, x'_i))$ must form a $(k-2)$-clique in $G\cup E'$ and let $C$ denote this $(k-2)$-clique. By \cref{obs2}, there are two possible cases for $C$:\\ 
        \noindent \textbf{Case I:} Assume that $V(C)=V(\greduction_j)$ for some set subgraph $\greduction_j$ with $x_i \in S_j$. This case implies that $(S_j,S'_j)\in E'$. However, this leads to a contradiction since \cref{nphardconstruction} would have caught this edge in Line 4 and saturated $(x_i,x'_i)$ by adding $(S_j, S'_j)$ to $E''$ (since $x_i \in S_j$). \\
           \noindent \textbf{Case II:}
           Assume $V(C)\subseteq V(\greduction_j) \cup V(\greduction_{\ell})$ for two set subgraphs $\greduction_j$ and $\greduction_{\ell}$ with $j < \ell$ and $x_i \in S_j \cap S_{\ell}$. This case implies that $E'$ contains at least one edge between the vertices of $\greduction_j$ and $\greduction_{\ell}$. This is a contradiction, because \cref{nphardconstruction} would have caught this edge in Line 5 and saturated $(x_i,x'_j)$ by adding $(S_j,S'_j)$ to $E''$ in Line 7 (since $x_i \in S_j$).
           
           Therefore, such a vertex $v$ exists for the first $k$-unsaturated item edge $(x_i,x'_i)$. Moreover, in Lines 4 to 9, no edge from $E'$ with an endpoint in $\{x_i, x'_i\}$ is ever matched to an edge in $E''$. Thus, prior to processing $(x_i, x'_i)$ in Line 10, the edges in $\{(x_i,v), (x'_i,v)\} \cap E'$ are not matched to any edge in $E''$. Therefore, for the first $k$-unsaturated item edge $(x_i,x'_i)$ of Line 10,  we can match the edges added to $E''$ in Lines 15 and 18 to the ones from $E'$ described in Lines 14 and 17, respectively.
           
For any subsequent $k$-unsaturated edge $(x_i, x'_i)$ processed in Line 10, we can apply the same reasoning to conclude that $V_T((x_i, x'_i))$ in $\greduction$ is not equal to $V_T((x_i, x'_i))$ in $\greduction \cup E'$. Moreover, the edges in $\{(x_i, v), (x'_i, v)\} \cap E'$ have not been matched to any edges in $E''$ during the previous steps. This remains true even after Line 15 of a previous iteration. If $(x_i, x'_i) = (x_{\ell}, x'_{\ell})$ in Line 13 for some previous iteration, then $(x_i, x'_i)$ would not be $k$-unsaturated in the current iteration, as it would have already been saturated in Line 15 of that previous iteration.

This completes the proof of \cref{claim1}.  
    \end{proof}
Using \cref{claim1}, we can wrap up the proof of \cref{lemk4geqcon}. Note that we can match the edges added to $E''$ in Lines 15 and 18 to the ones from $E'$ described in Lines 14 and 17, respectively.
\end{proof}

   \begin{algorithm}[H]
    \caption{The construction method of \cref{lemk4geqcon}}
    \label{nphardconstruction}
    \begin{algorithmic}[1]
            \State \textbf{Input:} $\greduction$, $E'$ (a $k$-completion set of $\greduction$) 
            \State \textbf{Output:} $E''$, a good $k$-completion set of $\greduction$ (see \cref{goodcompletionsetdefinition}) with 
            $|E''|\leq |E'|$
            \State \textbf{Initialization:} $E''\xleftarrow{}\emptyset$
            \State Add the \emph{good} subset of $E'$ to $E''$, i.e., $E'' \xleftarrow[]{} E' \cap \{(S_1, S'_1), \dots, (S_{|\mathcal F|}, S'_{|\mathcal{F}|})\}$
            \For{ any two sets $S_j, S_{\ell} \in \mathcal{F}$ with $j < \ell$}
            \If{exists $e=(u,v)$ in $E'$ with $u \in V(\greduction_j)$ and $v\in V(\greduction_{\ell})$}
            \State  Add $(S_j, S'_j)$ to $E''$, i.e., $E'' \xleftarrow{} E'' \cup \{(S_j,S'_{j})\}$.
            \EndIf
            \EndFor
\While{there exists a $k$-unsaturated item edge $(x_i,x'_i)$ in $\greduction \cup E''$}
\State Let $v \in   V_T((x_i, x'_i))$ in $\greduction \cup E'$ with $|\mathbb{E} \cap \{(x_i, v), (x'_i,v)\}|\leq 1$
\State Let $S_i\in \mathcal F$ be a set with $x_i \in S_i$
\If{$v\in \{x_{\ell}, x'_{\ell}\}$ for some other $k$-unsaturated item edge $(x_{\ell}, x'_{\ell} )$ in $\greduction \cup E''$}
\State In this case, we have $\{(x_i,v), (x'_i,v)\} \subseteq E'$, let $S_{\ell}\in \mathcal F$ be a set with $x_{\ell} \in S_{\ell}$
\State Add $(S_i,S'_i)$ and $(S_{\ell}, S'_{\ell})$ to $E''$, i.e.,  $E''\xleftarrow[]{} E'' \cup \{(S_i,S'_i), (S_{\ell}, S'_{\ell})\}$. This saturates $(x_i, x'_i)$ and $(x_{\ell}, x'_{\ell})$ in $\greduction \cup E''$.
  \Else
        \State In this case, there exists an edge $e \in E' \cap \{(x_i,v), ( x'_{i},v)\}$  
      \State Add $(S_i,S'_i)$ to $E''$, i.e., $E'' \xleftarrow{} E'' \cup \{(S_i,S'_i)\}$
\EndIf

\EndWhile
    \end{algorithmic}
\end{algorithm}
\begin{thm}
    The \mbox{$(k,1)$-cover} problem is $\np$-complete for every constant $k \geq 4$.
    \label{nphardnesskgeq}
\end{thm}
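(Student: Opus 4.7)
The plan is to mirror the structure of the proof of \Cref{nphardnesskeq3}, relying on the machinery already developed for $k \geq 4$: the construction of $\greduction$ in \Cref{reductionkgeq4}, \Cref{reductiontime}, \Cref{obs6prime}, and \Cref{lemk4geqcon}. Membership in $\np$ is immediate: given a candidate set $E'$ of non-edges of size at most $t$, we can verify in time polynomial in $|V(G)|$ (treating $k$ as a constant) that $G \cup E'$ is a $(k,1)$-cover by enumerating all $\bigoh(|V(G)|^{k-2})$ potential $k$-cliques through each edge.

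For hardness, the reduction is from SET-COVER. Given an instance $(\mathcal X, \mathcal F, t)$, we build $\greduction$ as in \Cref{reductionkgeq4}; by \Cref{reductiontime} this takes time polynomial in $|\mathcal{X}| + |\mathcal{F}|$ for any constant $k$. The claim I would prove is that $(\mathcal X, \mathcal F, t)$ has a set cover of size at most $t$ if and only if $\greduction$ has a $k$-completion set of size at most $t$.

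For the forward direction, given a set cover $\mathcal{S} \subseteq \mathcal{F}$ with $|\mathcal S| \leq t$, I would take $E' = \{(S_j, S'_j) : S_j \in \mathcal S\}$. By construction, every edge of $\greduction$ except the item edges $(x_i,x'_i)$ is already covered in a $k$-clique (Steps 5 and 6). For each item $x_i$, pick some $S_j \in \mathcal S$ with $x_i \in S_j$; then the subgraph induced on $V(\greduction_j) \cup \{x_i,x'_i\}$ in $\greduction$ is a $K_k$ minus the edge $(S_j,S'_j)$, so adding $(S_j,S'_j)$ yields a $k$-clique containing $(x_i,x'_i)$. Thus $E'$ is a $k$-completion set of size at most $t$.

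For the reverse direction, given a $k$-completion set $E'$ of $\greduction$ with $|E'|\le t$, I would invoke \Cref{lemk4geqcon} to transform it in polynomial time into a good $k$-completion set $E''$ with $|E''|\leq |E'|\leq t$, and then apply \Cref{obs6prime} to extract a set cover of size $|E''| \leq t$. Since all of the technical work has been carried out in \Cref{lemk4geqcon} and \Cref{obs6prime}, the only remaining obstacle at this stage is to verify the forward direction carefully and assemble the chain of implications; no new combinatorial argument is required.
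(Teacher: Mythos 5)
Your proposal is correct and follows essentially the same route as the paper: $\np$-membership via brute-force clique verification for constant $k$, and hardness via the SET-COVER reduction of \Cref{reductionkgeq4}, using the good-completion-set correspondence (\Cref{obs6prime}) for one direction and the transformation of \Cref{lemk4geqcon} for the other. Your explicit verification of the forward direction (that adding $(S_j,S'_j)$ completes the induced $K_k$ minus an edge on $V(\greduction_j)\cup\{x_i,x'_i\}$) is a detail the paper leaves implicit, but the argument is the same.
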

\begin{proof}
 It is easy to see that for any constant $k\geq 4$, the $(k,1)$-cover problem belongs to $\mathbb{NP}$. Furthermore, in time polynomial in $\numitems+ \numsets$, we can build the graph $\greduction$ using \cref{reductiontime} and the fact that $k$ is constant.

The $\np$-hardness proof is similar to the proof of \Cref{nphardnesskeq3}, and we skip the details for brevity. Note that every set cover of the corresponding SET-COVER instance corresponds to a good completion set for $\greduction$ of the same size. Moreover, using \Cref{nphardconstruction}, every completion set $E'$ (with $|E'|\leq t$) of $\greduction$ can be converted into a good completion set $E''$ with $|E''|\leq |E'|\leq t$, which corresponds to a set cover of size at most $t$ for the underlying SET-COVER instance. 
\end{proof}
\subsubsection{Inapproximability of the $(k,1)$-Cover Problem for General Graphs}
\label{hardnessapprox}
In this section, we prove that it is hard to approximate the $(k,1)$-cover problem within a constant factor for any constant $k\geq 3$.

We first restate a result on the inapproximability of SET-COVER.

\begin{lemma}\cite[Corollary 4]{dinur2014analytical}

For every $\varepsilon >0$, it is $\np$-hard to approximate SET-COVER by a $((1-\varepsilon). \ln |\mathcal{X}|)$ factor.
    \label{setcoverhard}
\end{lemma}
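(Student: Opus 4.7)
The statement is a verbatim restatement of Corollary~4 of Dinur and Steurer~\cite{dinur2014analytical}, so my plan is simply to invoke that result rather than reprove it. An independent proof would be well outside the graph-theoretic scope of the present paper, since it would require redeveloping the full machinery of probabilistically checkable proofs used in~\cite{dinur2014analytical}.

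For context, the high-level strategy of~\cite{dinur2014analytical} is to start from a strong hardness result for projection (label-cover) games with arbitrarily small soundness, and then compose it with a carefully designed set system, so that each assignment to the label cover corresponds to choosing one set per ``column'' in the produced SET-COVER instance. The ratio between the optimum cover in the YES case and the optimum cover in the NO case then scales like $\ln |\mathcal{X}|$, which yields the claimed $(1-\varepsilon)\ln |\mathcal{X}|$ inapproximability factor. The advance of~\cite{dinur2014analytical} over earlier arguments is an \emph{analytical} form of gap amplification that avoids explicit partition systems and gives a streamlined proof attaining the tight $\ln |\mathcal{X}|$ threshold.

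The main obstacle in reproducing this from scratch would be establishing the underlying projection-game hardness with the precise soundness needed, and then verifying the analytical estimates that drive the gap amplification to $(1-\varepsilon)\ln|\mathcal{X}|$ for every $\varepsilon > 0$. Since only the qualitative inapproximability factor is required for our subsequent reductions in Section~\ref{hardnessapprox}, we treat Lemma~\ref{setcoverhard} as a black box imported from~\cite{dinur2014analytical} and proceed to combine it with the reductions of Sections~\ref{reductionk3} and~\ref{reductionkgeq4}.
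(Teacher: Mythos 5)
Your proposal matches the paper exactly: the lemma is stated as an imported result, cited as Corollary~4 of~\cite{dinur2014analytical}, and no proof is given in the paper itself. Treating it as a black box is precisely what the authors do, and your contextual sketch of the Dinur--Steurer argument is accurate but not required.
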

\begin{thm}
     For any constant $k\geq 3$, it is $\np$-hard to approximate the $(k,1)$-cover problem within a factor of $c$ for any constant $c>1$. 
    \label{thm1}
\end{thm}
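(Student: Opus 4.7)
The plan is to re-use the two reductions already built in \Cref{reductionk3} (for $k=3$) and \Cref{reductionkgeq4} (for $k\geq 4$) and to observe that both of them are in fact approximation-preserving, so that the $(1-\varepsilon)\ln|\mathcal{X}|$ inapproximability of SET-COVER quoted in \Cref{setcoverhard} transfers to a $c$-inapproximability of the $(k,1)$-cover problem for every constant $c>1$.

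First I would establish the exact equality $\opt(\greduction)=\opt(\mathcal{I})$, where $\mathcal{I}=(\mathcal{X},\mathcal{F},t)$ is a SET-COVER instance and $\greduction$ is its reduction graph (the construction of \Cref{reductionk3} when $k=3$, or of \Cref{reductionkgeq4} when $k\geq 4$). The inequality $\opt(\greduction)\leq \opt(\mathcal{I})$ is immediate: any set cover $\mathcal{S}$ yields a good $k$-completion set of the same size by picking the non-edge $(S_j,P)$ (resp.\ $(S_j,S'_j)$) for each $S_j\in\mathcal{S}$, and this set saturates every $k$-unsaturated edge of $\greduction$ (see \Cref{obs1}). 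For the reverse inequality, an optimal $k$-completion set $E^{\ast}$ can be converted in polynomial time into a good $k$-completion set $E''$ with $|E''|\leq|E^{\ast}|$ using \Cref{nphardconstruction2} (resp.\ \Cref{nphardconstruction}) by \Cref{lemk3con} (resp.\ \Cref{lemk4geqcon}), and then $E''$ corresponds to a set cover of the same size by \Cref{obs6} (resp.\ \Cref{obs6prime}).

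Next I would argue the contrapositive: assume, for contradiction, that for some constant $c>1$ there is a polynomial-time algorithm $\mathcal{A}$ that $c$-approximates the $(k,1)$-cover problem. Given an arbitrary SET-COVER instance $\mathcal{I}$, build $\greduction$ in polynomial time (\Cref{obssizereduction} or \Cref{reductiontime}), run $\mathcal{A}$ to obtain a $k$-completion set $E'$ with $|E'|\leq c\cdot\opt(\greduction)$, apply \Cref{nphardconstruction2} or \Cref{nphardconstruction} to get a good $k$-completion set $E''$ with $|E''|\leq|E'|$, and finally read off the corresponding set cover $\mathcal{S}$ with $|\mathcal{S}|=|E''|$. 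Combining these with the equality $\opt(\greduction)=\opt(\mathcal{I})$ gives $|\mathcal{S}|\leq c\cdot\opt(\mathcal{I})$, i.e.\ a polynomial-time $c$-approximation of SET-COVER.

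Finally I would derive the contradiction with \Cref{setcoverhard}. Choose any $\varepsilon\in(0,1)$ and pick $N$ large enough that $(1-\varepsilon)\ln N > c$; since the reductions are polynomial in $|\mathcal{X}|+|\mathcal{F}|$ and SET-COVER remains hard on instances with $|\mathcal{X}|\geq N$, the $c$-approximation above would $((1-\varepsilon)\ln|\mathcal{X}|)$-approximate SET-COVER on such instances, contradicting \Cref{setcoverhard} unless $\p=\np$. The only subtlety I anticipate is a clean bookkeeping argument that $\opt(\greduction)=\opt(\mathcal{I})$ holds exactly (not just up to a constant), but this is already delivered by \Cref{obs6}, \Cref{obs6prime}, \Cref{lemk3con}, and \Cref{lemk4geqcon}, so no new technical machinery is needed.
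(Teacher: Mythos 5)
Your proposal is correct and follows essentially the same route as the paper's own proof: establish $\opt_G=\opt_S$ via the good-completion-set transformations (\Cref{lemk3con}, \Cref{lemk4geqcon}, \Cref{obs6}, \Cref{obs6prime}), then show a $c$-approximation for the $(k,1)$-cover problem would yield a $c$-approximation for SET-COVER, contradicting \Cref{setcoverhard}. Your closing remark about restricting to instances with $(1-\varepsilon)\ln|\mathcal{X}|>c$ makes explicit a step the paper leaves implicit, but it is the same argument.
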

\begin{proof}

    For the sake of contradiction, suppose $\operatorname{\mathbb{P}} \neq \operatorname{\mathbb{NP}}$ and there exists a polynomial-time $c$-approximation algorithm $A$ for the $(k,1)$-cover problem for some constants $c>1$ and $k\geq3$, where $c\in \mathbb{R}
$ and $k \in \mathbb{Z}$. For any instance $\mathcal{I}=(\mathcal{X}, \mathcal{F},t)$ of the SET-COVER problem, let $\opt_S$ be the size of the optimal set cover, and for its corresponding reduction graph $\greductiontwo$, let $\opt_G$ denote the size of its optimal completion set. Note that $\opt_S=\opt_G$. Every set cover for $\mathcal{I}$ corresponds to a completion set for $\greductiontwo$, so $\opt_G \leq \opt_S$. On the other hand, we also have $\opt_S\leq \opt_G$, because otherwise we would have $\opt_G <\opt_S$ and using \Cref{nphardconstruction2}, \Cref{nphardconstruction}, \Cref{obs6}, \Cref{lemk3con}, \Cref{obs6prime}, and \Cref{lemk4geqcon}, $\mathcal I$ would have a set cover of size strictly less than $\opt_S$, a contradiction. Therefore, we have $\opt_G=\opt_S$.

    Given algorithm $A$, we now show how to approximate any instance of SET-COVER within a factor of $c$ in polynomial time. Given any such instance $(\mathcal X, \mathcal{F})$, we construct the reduction graph $\greduction$ in time polynomial in $\numsets +\numitems $ (\Cref{obssizereduction} and \cref{reductiontime}). We run $A$ on $\greduction$ that gives us a $k$-completion set $E'$ with $|E'|\leq c\times \opt_G$ in time polynomial in the size of $\greduction$ and hence in $|\mathcal{X}|$ and $|\mathcal{F}|$. Using \Cref{nphardconstruction2} (for $k=3$) and \Cref{nphardconstruction} (for $k\geq 4$), we can transform $E'$ to another good $k$-completion set $E''$ with $|E''|\leq |E'|$ in time polynomial in $|\mathcal{X}|$ and $|\mathcal{F}|$ (note that $|E'|$ is polynomial in $\numitems+ \numsets$). However, $E''$ is a good $k$-completion set that corresponds to a set cover of size $|E''|$ with
    $$|E''|\leq|E'|\leq c\times \opt_G =c\times \opt_S$$
    Therefore, this results in a polynomial-time constant-factor approximation algorithm for SET-COVER. However, from \Cref{setcoverhard}, we know that for every $\varepsilon >0$, it is $\np$-hard to devise an $((1-\varepsilon). \ln |\mathcal{X}|)$-approximation to SET-COVER, a contradiction.
\end{proof}
\subsection{$\np$-Hardness for Trees}
\label{nphardnesstrees}
In this section, we prove that for general values of $k$, the $(k,1)$-cover problem is $\np$-hard, even when its input is a spider. A \defin{spider} is a tree with exactly one vertex of degree at least 3, to which we refer as the \defin{center} of the spider. A spider has \defin{legs} which are paths of varying sizes meeting at the center of the spider. We reduce the $\np$-complete problem of 3-PARTITION to our problem on spiders. We use some ideas from~\cite{reductionideas} and extend them with some structural properties of extremal graphs with $(k,1)$-covers.

We start by formally defining the 3-PARTITION problem. 
\begin{definition}
   Given a multiset $S=\{a_1,\dots, a_{3p}\} $ of $3p$ positive integers, an integer $s>0$ such that each $a_i$ satisfies $s/4< a_i<s/2$ and $\sum_{i=1}^{3p}a_i=sp$, an instance $\mathcal I=(S,s) $ of 3-PARTITION asks whether there exists a partition of $S$ into $p$ subsets of size exactly three, such that each subset sums up to $s$.
\end{definition}
3-PARTITION is strongly $\np$-complete, i.e., it remains $\np$-complete even if all integers are polynomially bounded in the size of the input~\cite{threepartitionhardness}. For completeness, we also define the following decision problem for trees. 
\begin{definition}
    Given a tree $T=(V,E)$, two integers $k\leq |V|$ and $t\geq 0$, an instance $\mathcal I= (T, k, t)$ of TREE-COMPLETION asks whether $T$ has a $k$-completion set of size $t$.
\end{definition}
We now describe a reduction from 3-PARTITION to TREE-COMPLETION. For an instance $\mathcal{I}=(S,s)$ of 3-PARTITION, we construct a spider $\treduction=(\vreductiontwo, \ereductiontwo)$ as follows.
\begin{enumerate}
    \item Initially, we set $\vreductiontwo=\emptyset$ and $\ereductiontwo=\emptyset$.
    \item We add a vertex $r$ to $\vreductiontwo$. $r$ will be the center of this spider.
    \item For each $a_i \in S$, we add a leg with $a_i$ edges to $r$ and update $\vreductiontwo$ and $\ereductiontwo$ accordingly. 
\end{enumerate}
The following observation states that the described spider $\treduction$ can be constructed in time polynomial in the size of the corresponding 3-PARTITION instance. 
\begin{observation}
    For an instance $\mathcal I=(S,s)$ of 3-PARTITION, the described tree $\treduction=(\vreductiontwo,\ereductiontwo)$ is a spider with $|\vreductiontwo|=sp+1$ and $|\ereductiontwo|=sp$. Furthermore, since $s$ and $p$ are polynomially bounded in the size of the input, the size of $\treduction$ is polynomial in the size of $\mathcal I$.
    \label{obstreduction}
\end{observation}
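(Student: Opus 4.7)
The plan is to verify each of the four assertions directly from the construction, reading off the numbers of vertices and edges leg-by-leg and appealing to the strong $\np$-completeness of 3-PARTITION for the size bound. None of the steps require anything beyond bookkeeping, so there is no serious obstacle; the only point that deserves care is confirming that $r$ really is the unique vertex of degree at least $3$ so that $\treduction$ meets the definition of a spider.

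First I would argue that $\treduction$ is a spider. By construction every vertex other than $r$ lies on exactly one leg, which is a path attached to $r$ at one end; hence every non-center vertex has degree $1$ (if it is the far endpoint of a leg) or $2$ (if it is an internal vertex of a leg). The center $r$ has degree equal to the number of legs, namely $3p$, which is at least $3$ because 3-PARTITION instances have $p\geq 1$ (and in fact $p\geq 1$ is guaranteed since the multiset $S$ has $3p$ elements). Thus $r$ is the unique vertex of degree $\geq 3$, so $\treduction$ is a spider with center $r$.

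Next I would count vertices and edges. The leg corresponding to $a_i$ is built by attaching a path of length $a_i$ to $r$, which contributes exactly $a_i$ new vertices and $a_i$ new edges (the endpoint of the path that coincides with $r$ is not new, but the edge incident to $r$ is). Summing over $i=1,\dots,3p$ and using $\sum_{i=1}^{3p} a_i = sp$, we obtain
\[
|\ereductiontwo| \;=\; \sum_{i=1}^{3p} a_i \;=\; sp, \qquad |\vreductiontwo| \;=\; 1 + \sum_{i=1}^{3p} a_i \;=\; sp+1,
\]
as claimed.

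Finally, for the size bound, I would invoke the fact that 3-PARTITION remains $\np$-complete in the strong sense~\cite{threepartitionhardness}, so the integers $a_1,\dots,a_{3p}$ (and therefore $s$ and $p$) are bounded by a polynomial in the input length $|\mathcal I|$. Consequently $sp+1$ is polynomial in $|\mathcal I|$, and the entire spider $\treduction$ can be written down in time polynomial in $|\mathcal I|$, completing the observation.
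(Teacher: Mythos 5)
Your proposal is correct and matches the paper's reasoning exactly: the paper treats this as an immediate observation from the construction, and your leg-by-leg count of $a_i$ vertices and edges per leg (summing to $sp$ via $\sum a_i = sp$), together with the appeal to strong $\np$-completeness for the polynomial size bound, is precisely the intended justification. The extra care you take in checking that $r$ is the unique vertex of degree at least $3$ (degree $3p \geq 3$) is a sensible addition but does not change the approach.
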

To prove our main hardness result of this section, we need the following structural result. This result can be deduced by analyzing Theorem 1 in \cite{chakraborti2024sparse} for the case of $r=k-1$.  
\begin{lemma}
       Let $k\ge 3$ and $G=(V,E)$ be a connected $n$-vertex graph with a \mbox{$(k,1)$-cover} such that $n-k = q (k-1) +r$ where $q\ge 0$ and $r= k-1$. If $|E|=(q+2) \binom{k}{2}$, then there exist $q+2$ subgraphs $C_1, \dots, C_{q+2}$ of $G$ such that 
       \begin{enumerate}[(i)]
           \item Each $C_i$ is isomorphic to a $k$-clique.
           \item These subgraphs are pairwise edge-disjoint, i.e., for any two distinct $C_i$ and $C_j$ we have $E(C_i) \cap E(C_j)=\emptyset $.
           \item Every edge of $G$ belongs to some $C_i$, i.e., $\bigcup_{i=1}^{q+2}E(C_i)=E$.
       \end{enumerate}
       \label{lemstructureopt}
       \end{lemma}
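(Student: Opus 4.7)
The plan is to deduce the lemma by invoking and then unpacking Theorem~1 of \cite{chakraborti2024sparse} in the boundary case $r=k-1$. First I would rewrite the vertex count: from $n-k=q(k-1)+(k-1)$ one obtains $n=(q+2)(k-1)+1$, which is exactly the vertex count of a connected, edge-disjoint union of $q+2$ copies of $K_k$ glued along single shared vertices (i.e., a block graph whose $q+2$ blocks are copies of $K_k$ and whose $q+1$ cut vertices are the shared ones). A quick edge tally on such a structure gives $(q+2)\binom{k}{2}$, matching our hypothesis.

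Next I would appeal to the tight extremal statement. Theorem~1 of \cite{chakraborti2024sparse} furnishes a sharp lower bound on the edge count of any connected $n$-vertex graph admitting a $(k,1)$-cover, indexed by the residue $r$ of $n-k$ modulo $k-1$, and it characterizes the graphs attaining equality. Specializing to $r=k-1$ makes the bound read $|E|\ge (q+2)\binom{k}{2}$, and the equality case is precisely the clique-tree structure just described. The hypothesis $|E|=(q+2)\binom{k}{2}$ therefore forces $G$ to be one of these extremal graphs.

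Given this structural description, the desired decomposition is immediate: let $C_1,\dots,C_{q+2}$ be the $q+2$ blocks (each a $K_k$) guaranteed by the extremal characterization. Property~(i) holds by construction; property~(ii) holds because any two distinct blocks of a graph share at most a single cut vertex and hence no edges; property~(iii) holds because in any graph the edge set partitions across the blocks, and here every block is one of our $C_i$.

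The main obstacle I anticipate is bridging the possible gap between the phrasing of Theorem~1 of \cite{chakraborti2024sparse} and the explicit decomposition asserted in (ii)--(iii): the cited theorem may describe extremal graphs implicitly (for example by a recursive gluing construction) rather than handing over an edge-disjoint clique cover outright. If that is the case, I would close the gap by induction on $q$. The base $q=0$ says $n=k+(k-1)$ and $|E|=2\binom{k}{2}$, which forces $G$ to be two $K_k$'s sharing exactly one vertex. For the step, I would locate a leaf block of $G$'s block-tree (a $K_k$ meeting the rest of $G$ in a single cut vertex), remove its $k-1$ private vertices and $\binom{k}{2}$ private edges, verify that the residual graph still has a $(k,1)$-cover with parameters $(q-1,r)$ and the correct edge count, and apply the induction hypothesis to recover the remaining $q+1$ cliques.
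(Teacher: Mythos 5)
Your proposal follows essentially the same route as the paper, which offers no proof of this lemma beyond the remark that it ``can be deduced by analyzing Theorem~1 in \cite{chakraborti2024sparse} for the case of $r=k-1$''; you simply flesh out that deduction, correctly computing $n=(q+2)(k-1)+1$ and reading off the edge-disjoint $K_k$-decomposition from the extremal characterization. Your fallback induction on $q$ via leaf blocks is a sensible insurance policy but is not something the paper itself supplies.
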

The following lemma is essential for proving the $\np$-hardness. 
\begin{lemma}
    For an instance $\mathcal I=(S,s)$ of 3-PARTITION, let $\treduction=(\vreductiontwo,\ereductiontwo)$ be the described spider. Then, the following statements are equivalent.
    \begin{enumerate}[(i)]
        \item $\mathcal I$ is a YES instance of 3-PARTITION. 
        \item $\treduction$ has an $(s+1)$-completion set $E'$ of size $\frac{ps(s-1)}{2}$.
        \item $\ereductiontwo$ can be partitioned into $p$ edge-disjoint trees, such that each tree has exactly $s$ edges.
    \end{enumerate}
    \label{lemtreduction}
\end{lemma}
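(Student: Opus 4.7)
The plan is to prove the equivalence via the cycle (i) $\Rightarrow$ (iii) $\Rightarrow$ (ii) $\Rightarrow$ (iii) $\Rightarrow$ (i), combining a structural analysis of edge-disjoint subtrees of the spider $\treduction$ with a single invocation of \Cref{lemstructureopt}. The case $p=1$ is trivial (the tree already has $s+1$ vertices and the unique $(s+1)$-completion is $K_{s+1}$), so I would assume $p \geq 2$ throughout.

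For (i) $\Rightarrow$ (iii), given a 3-partition $B_1, \ldots, B_p$ of $S$ with each $\sum_{a \in B_j} a = s$, I would group the $3p$ legs of $\treduction$ according to the partition: the legs corresponding to $B_j$ together with the center $r$ form a sub-spider $T_j$ with exactly $s$ edges. These subtrees are pairwise edge-disjoint (distinct legs share no edges), and together they cover all $sp$ edges of $\treduction$. For the reverse direction (iii) $\Rightarrow$ (i), which is the structural heart of the lemma, I would argue that no $T_j$ can live inside a single leg since each $a_i < s/2 < s$, so each $T_j$ spans multiple legs and therefore must contain $r$ (legs meet only at the center). A prefix argument then shows that $T_j \cap L$ is a prefix of each leg $L$ it touches, starting at $r$; edge-disjointness forces at most one $T_j$ to have a nonempty prefix on $L$, and by the partition property that $T_j$ must contain all of $L$. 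Hence each $T_j$ is a sub-spider consisting of $r$ together with full legs summing to $s$, and the bounds $s/4 < a_i < s/2$ force exactly three legs per $T_j$, yielding a valid 3-partition.

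For (iii) $\Rightarrow$ (ii), the structural analysis above gives $V(T_i) \cap V(T_j) = \{r\}$ for $i \neq j$. I would then complete each $T_j$ (on $s+1$ vertices) into a clique $C_j \cong K_{s+1}$, which requires adding $\binom{s+1}{2} - s = \frac{s(s-1)}{2}$ non-edges per clique. The induced subgraph of $\treduction$ on $V(T_j)$ is exactly $T_j$ (a subforest of a tree on $s+1$ vertices has at most $s$ edges), so every added edge is a genuine non-edge of $\treduction$. Moreover, any edge added for $C_j$ has both endpoints in $V(T_j)$; if it were also added for $C_i$, both endpoints would lie in $V(T_j) \cap V(T_i) = \{r\}$, which is impossible. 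Hence $|E'| = p \cdot \frac{s(s-1)}{2} = \frac{ps(s-1)}{2}$ and the cliques $C_1, \ldots, C_p$ furnish an $(s+1,1)$-cover of $\treduction \cup E'$.

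For (ii) $\Rightarrow$ (iii), I would invoke \Cref{lemstructureopt} with $k = s+1$. Setting $n = sp+1$ gives $n - k = s(p-1) = (p-2)s + s$, matching the required form $q(k-1) + r$ with $q = p-2 \geq 0$ and $r = k-1$, while the edge count $sp + \frac{ps(s-1)}{2} = p\binom{s+1}{2} = (q+2)\binom{k}{2}$ meets the hypothesis. The lemma then yields $p$ pairwise edge-disjoint $(s+1)$-cliques $C_1, \ldots, C_p$ whose edges partition $E(\treduction \cup E')$. For each $j$, letting $T_j$ be the subgraph of $\treduction$ carrying the original edges of $C_j$, I note that $T_j$ is a subforest of $\treduction$ on at most $s+1$ vertices, hence has at most $s$ edges; since $\sum_j |E(T_j)| = sp$, each $T_j$ has exactly $s$ edges and is therefore a tree on $s+1$ vertices. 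The main obstacle I anticipate is the bookkeeping in the prefix argument for (iii) $\Rightarrow$ (i), where a careless treatment could allow subtrees to use non-contiguous fragments of a leg; the unique-path property of the tree $\treduction$ rules this out but must be invoked explicitly. A secondary concern is the parameter check for \Cref{lemstructureopt}, particularly verifying that $r = k-1$ (rather than $r = 0$) in the expression $n - k = q(k-1) + r$.
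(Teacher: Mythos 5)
Your proposal is correct and follows essentially the same route as the paper: the same invocation of \Cref{lemstructureopt} with $k=s+1$, $q=p-2$, $r=s$ for the direction out of (ii), the same clique-completion count $\binom{s+1}{2}-s$ per subtree, and the same leg-based structural argument (each partition tree must contain the center and whole legs, forced to be exactly three by $s/4<a_i<s/2$). The only differences are cosmetic — you arrange the implications as (i)$\Rightarrow$(iii)$\Rightarrow$(ii) and (ii)$\Rightarrow$(iii)$\Rightarrow$(i) rather than the paper's single cycle, and you spell out the prefix/edge-disjointness argument and the $p=1$ degenerate case more explicitly than the paper does.
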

\begin{proof}
    We show $\text{(i)}\implies\text{(ii)}\implies\text{(iii)}\implies\text{(i)}$.\\\\
    \noindent \textbf{$\text{(i)} \implies\text{(ii)}$:} Suppose $S$ can be partitioned into $p$ subsets of size three such that each subset sums up to $s$. Then, for each such subset, we pick three legs of $\treduction$ corresponding to the three integers of that subset. These three legs with the center $r$ make up a spider with exactly $s$ edges. We can convert this smaller spider into an $(s+1)$-clique by adding exactly $\binom{s+1}{2}-s=\frac{s(s-1)}{2}$ edges to $E'$ (initially, $E' \xleftarrow[]{}\emptyset$). Since there are $p$ subsets in total, there are $p$ spiders with three legs that collectively cover $\ereductiontwo$. Converting each such spider into an $(s+1)$-clique can be done by adding exactly $\frac{ps(s-1)}{2}$ edges to $E'$ in total, making $E'$ an $(s+1)$-completion set of size $\frac{ps(s-1)}{2}$ for $\treduction$.\\\\
     \noindent \textbf{$\text{(ii)} \implies\text{(iii)}$:} Suppose such an $(s+1)$-completion set $E'$ exists, and let $\treduction'\coloneqq \treduction \cup E'$. $\treduction'$ has an $(s+1,1)$-cover, with $|V(\treduction')|=sp+1$ and $E(\treduction')=sp+ \frac{ps(s-1)}{2}=\frac{ps(s+1)}{2}$. Furthermore, we have 
     \begin{equation}
     sp+1- (s+1)=(p-2)s+s
         \label{khoshgel}
     \end{equation}
     We now apply \Cref{lemstructureopt}. In \Cref{lemstructureopt}, set $k=s+1$, $n=sp+1$, $q=p-2$ and $r=s$. Since $\treduction'$ has an $(s+1,1)$-cover with exactly $(q+2) \binom{k}{2}=p\binom{s+1}{2}=\frac{ps(s+1)}{2}$ edges, it follows that $\treduction'$ has $p$ $(s+1)$-cliques $C_1, \dots , C_p$ with the properties described in \Cref{lemstructureopt}(i)-(iii). Using \Cref{lemstructureopt}(ii) and \Cref{lemstructureopt}(iii), these cliques define an edge partition of $\treduction'$. For any such $C_i$, define $x_i =|E(C_i)\cap E(\treduction)|$. Since $|V(C_i)|=s+1$ for all $C_i$, we have $0\leq x_i\leq s$ and
     \begin{equation}
     |E'|=\frac{ps(s-1)}{2}=\sum_{i=1}^{p}\binom{s+1}{2}- x_i \text{, subject to }0\leq x_i\leq s \text{ for all }i.
         \label{khoshgel2}
     \end{equation}
     However, \eqref{khoshgel2} holds if and only if $x_i=s$ for all $i$, i.e., when each $C_i$ holds exactly $s$ edges of $\treduction$. Since $x_i=s$ and $|V(C_i)|=s+1$, it follows that for each $C_i$, $C_i\cap \treduction$ is a sub-tree of $\treduction$ with $s$ edges. Therefore, $\forall i\in \{1, \dots, p\} C_i \cap \treduction$ defines a partition of $\ereductiontwo$ into $p$ edge-disjoint sub-trees with $s$ edges each. \\\\
     \noindent \textbf{$\text{(iii)} \implies\text{(i)}$:} Suppose such an edge partition exists. First, observe that since each leg $L_i$ of $T$ has strictly less than $s$ edges, all edges of every leg $L_i$ of $\treduction$ must belong to the same tree of the partition. Moreover, every tree of this partition must consist of exactly three legs of $\treduction$, because each $a_i$ satisfies $s/4<a_i<s/2$. It follows that the edges of $\treduction$ can be partitioned into $p$ trees such that each tree has exactly $s$ edges and three legs of $\treduction$. This edge partition implies that $\mathcal{I}$ is a YES-instance of 3-PARTITION. 
     \end{proof}
 From \Cref{obstreduction} and the equivalence of \Cref{lemtreduction}(i) and  \Cref{lemtreduction}(ii), the $\np$-hardness of TREE-COMPLETION follows. 
\begin{thm}
    TREE-COMPLETION is $\np$-hard, even when the input tree is a spider. 
    \label{thmhardnesstree}
\end{thm}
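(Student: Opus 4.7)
The plan is to stitch together the two ingredients already established, namely Observation~\ref{obstreduction} (the reduction is polynomial time) and Lemma~\ref{lemtreduction} (the reduction is correct), to obtain a polynomial-time many-one reduction from 3-PARTITION to TREE-COMPLETION whose target instance is a spider. Since 3-PARTITION is strongly $\np$-complete \cite{threepartitionhardness}, this will give $\np$-hardness of TREE-COMPLETION restricted to spiders.

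First I would fix an arbitrary instance $\mathcal{I}=(S,s)$ of 3-PARTITION with $S=\{a_1,\dots,a_{3p}\}$ and invoke the construction preceding Observation~\ref{obstreduction} to build the spider $\treduction=(\vreductiontwo,\ereductiontwo)$ whose legs have lengths $a_1,\dots,a_{3p}$ and meet at a single center $r$. Because $s$ and $p$ are polynomially bounded in the input size (as 3-PARTITION is strongly $\np$-complete), Observation~\ref{obstreduction} ensures that $|\vreductiontwo|=sp+1$ and $|\ereductiontwo|=sp$, so the construction runs in polynomial time and the encoding of $\treduction$ has polynomial size.

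Next I would output the TREE-COMPLETION instance $\mathcal{I}'=(\treduction,k,t)$ with $k=s+1$ and $t=\tfrac{ps(s-1)}{2}$. All of these parameters are computable in polynomial time from $\mathcal{I}$. By the equivalence $\text{(i)} \Longleftrightarrow \text{(ii)}$ in Lemma~\ref{lemtreduction}, the 3-PARTITION instance $\mathcal{I}$ is a YES instance if and only if $\treduction$ admits an $(s+1)$-completion set of size exactly $t$, i.e., iff $\mathcal{I}'$ is a YES instance of TREE-COMPLETION. Since $\treduction$ has exactly one vertex of degree larger than $2$ (namely $r$, whose degree is $3p\ge 3$), the graph $\treduction$ is a spider, so the reduction lands in the desired restricted class. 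Combining with the strong $\np$-completeness of 3-PARTITION yields the theorem.

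The substantive content of the argument lives entirely inside Lemma~\ref{lemtreduction}, which was proved above; the potential subtlety at this stage would be making sure that matching the size parameter to $\tfrac{ps(s-1)}{2}$ and $k$ to $s+1$ really does align the two problems exactly (the lemma is stated as an equivalence for this specific value of $t$, not an inequality), but this is already handled by the $\text{(i)} \Longleftrightarrow \text{(ii)}$ direction. So the proof of the theorem itself is short: verify polynomial-time computability via Observation~\ref{obstreduction}, then quote Lemma~\ref{lemtreduction}.
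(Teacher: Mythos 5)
Your proposal is correct and follows exactly the paper's own route: the paper's proof of \Cref{thmhardnesstree} is precisely the one-line combination of \Cref{obstreduction} (polynomial-time constructibility of the spider $\treduction$) with the equivalence of \Cref{lemtreduction}(i) and \Cref{lemtreduction}(ii), instantiated with $k=s+1$ and $t=\frac{ps(s-1)}{2}$. You have simply written out the details (including the strong $\np$-completeness of 3-PARTITION and the check that $\treduction$ is a spider) that the paper leaves implicit.
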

\section{An Optimal Algorithm for Chordal Graphs for the $(3,1)$-Cover Problem}
\label{secchordal}
This section presents an optimal algorithm for the $(3,1)$-cover problem on chordal graphs. For convenience, throughout this section, we refer to 3-unsaturated edges and 3-completion sets as unsaturated edges and completion sets, respectively.  

\subsection{An Optimal Algorithm for the $(3,1)$-Cover of Trees}
\label{sectrees}
Before presenting our main result, we briefly describe how we can optimally solve the $(3,1)$-cover problem when the input graph is a tree. We will later use the algorithm for trees to solve the problem for chordal graphs. 

We begin by restating a known result.
\begin{lemma}    \label{theoremghabli}
\cite[Theorem 8]{BurkhardtFH20} Let $M_{n,\ell}$ be the minimum number of edges in a connected graph on $n$ vertices with a $(3,\ell)$-cover, we have
\[
        (n-1)\left(1+\frac{\ell}{ 2}\right) \leq M_{n, \ell} \leq n\left(1+\frac{\ell} {2}\right)+\Theta\left(\ell^2\right).
\]
\end{lemma}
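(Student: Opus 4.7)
The plan is to prove the upper and lower bounds of \cite[Theorem 8]{BurkhardtFH20} separately, by an explicit construction and by a counting/induction argument respectively; I would not attempt to reprove the full theorem but sketch the two natural directions.

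For the upper bound $M_{n,\ell} \le n(1+\ell/2) + \Theta(\ell^2)$, I would exhibit an explicit construction. In $K_{\ell+2}$ every edge lies in exactly $\ell$ triangles (one per remaining vertex), so $K_{\ell+2}$ admits a $(3,\ell)$-cover with $\binom{\ell+2}{2}$ edges on $\ell+2$ vertices. Gluing $q$ disjoint copies of $K_{\ell+2}$ at a single common vertex keeps the graph connected and preserves the $(3,\ell)$-cover property, yielding $q(\ell+1)+1$ vertices and $q\binom{\ell+2}{2}$ edges, which is exactly $n(1+\ell/2) - (1+\ell/2)$ in the divisible case. For general $n$, write $n-1 = q(\ell+1)+r$ with $0 \le r < \ell+1$ and absorb the $r$ leftover vertices into one enlarged clique attached at the common vertex, contributing only $O(\ell^2)$ extra edges.

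For the lower bound $(n-1)(1+\ell/2) \le M_{n,\ell}$, let $G$ be any connected $n$-vertex graph with a $(3,\ell)$-cover. I would first observe $\delta(G) \ge \ell+1$: for any edge $uv$, the $\ell$ witnesses of its triangles are distinct from $u$ and adjacent to $v$, forcing $d(v) \ge \ell+1$. This gives $|E(G)| \ge n(\ell+1)/2$, which already beats the target when $n \le \ell+2$. For larger $n$ the easy min-degree bound is insufficient, so I would induct on $n$ using the block decomposition of $G$: locate a leaf block $B$ of $G$ which, in an extremal example, must be essentially a near-clique on $\ell+2$ vertices, remove the $\ell+1$ private vertices of $B$ together with their incident edges, and apply the inductive hypothesis to the remaining graph on $n - \ell - 1$ vertices (verifying that it still carries a $(3,\ell)$-cover). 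Checking that the removed block contributes at least $(\ell+1)(1+\ell/2)$ edges closes the induction.

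The main obstacle is the lower bound's inductive step: showing that every extremal $(3,\ell)$-covered connected graph must possess a peelable endblock, and that peeling it preserves both connectivity and the covering property of the remainder. Establishing this structural lemma is the delicate content of \cite[Theorem 8]{BurkhardtFH20}, which the present paper only quotes. The upper bound construction, by contrast, is essentially mechanical once $K_{\ell+2}$ is identified as the dense building block.
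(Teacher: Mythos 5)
The paper does not prove this lemma at all: it is quoted verbatim from \cite[Theorem 8]{BurkhardtFH20} and used as a black box to obtain the lower bound in \eqref{optlowerbound}. So there is no in-paper argument to compare yours against; the relevant question is whether your sketch would actually establish the statement, and as written it does not.

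Your upper-bound construction is sound and is essentially the standard one: $q$ copies of $K_{\ell+2}$ sharing a single vertex give $n=1+q(\ell+1)$ vertices and $q\binom{\ell+2}{2}=(n-1)\left(1+\frac{\ell}{2}\right)$ edges, and the $r\le \ell$ leftover vertices can be absorbed into one enlarged clique at a cost of $O(\ell^2)$ edges. The lower bound, however, has a genuine gap. The minimum-degree observation $\delta(G)\ge\ell+1$ gives $|E(G)|\ge n(\ell+1)/2$, which falls short of $(n-1)\left(1+\frac{\ell}{2}\right)$ as soon as $n>\ell+2$, exactly as you note. The proposed rescue --- induction by peeling a leaf block that is ``essentially a near-clique on $\ell+2$ vertices'' --- fails at the first hurdle: a $2$-connected graph with a $(3,\ell)$-cover consists of a single block, so there is no endblock with private vertices to peel, and nothing in your sketch covers that case. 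Even when cut vertices exist, you have not shown that a leaf block must contribute at least $(\ell+1)\left(1+\frac{\ell}{2}\right)$ edges, nor that deleting its private vertices leaves a graph that still has a $(3,\ell)$-cover (edges at the cut vertex can lose triangles). You correctly identify this structural step as the delicate content of the cited theorem, but flagging the obstacle is not the same as overcoming it; as it stands the proposal establishes only the upper bound.
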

\Cref{theoremghabli} implies the following corollary, giving us a lower bound on the size of any optimal $(3,k-2)$-completion set of an $n$-vertex tree for $k\geq 3$. 
\begin{corollary}
    Let $T$ be any tree with $|V(T)|=n$. Let $\opt$ be the size of an optimal $(3,k-2)$-completion set of $T$ for $k\geq 3$. Then, we have 
    \begin{equation}
    (n-1) \bigg(\frac{k-2}{2}\bigg)\leq \opt
        \label{optlowerbound}
    \end{equation}
    \label{corol}
\end{corollary}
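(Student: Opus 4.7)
The plan is to apply \Cref{theoremghabli} directly to the graph $T \cup E'$, where $E'$ is an optimal $(3,k-2)$-completion set of $T$. The key observation is that a tree on $n$ vertices has exactly $n-1$ edges, so the size of the completion set equals the ``excess'' of edges in $T \cup E'$ over $n-1$.

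More precisely, I would proceed as follows. Let $E'$ be an optimal $(3,k-2)$-completion set of $T$ with $|E'|=\opt$, and set $T' \coloneqq T \cup E'$. By definition of a $(3,k-2)$-completion set, $T'$ has a $(3,k-2)$-cover. Since $T$ is connected and $E' \subseteq (V \times V) \setminus E(T)$, the graph $T'$ is also connected on $n$ vertices. Applying the lower bound from \Cref{theoremghabli} with $\ell = k-2$ yields
\[
|E(T')| \;\geq\; M_{n,\,k-2} \;\geq\; (n-1)\left(1 + \frac{k-2}{2}\right).
\]
On the other hand, $|E(T')| = |E(T)| + |E'| = (n-1) + \opt$, since $T$ is a tree and $E'$ is disjoint from $E(T)$. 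Combining these two identities gives
\[
(n-1) + \opt \;\geq\; (n-1)\left(1 + \frac{k-2}{2}\right) \;=\; (n-1) + (n-1)\cdot\frac{k-2}{2},
\]
and subtracting $(n-1)$ from both sides yields the desired bound $(n-1)(k-2)/2 \leq \opt$.

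There is essentially no obstacle here: the proof is a one-line application of \Cref{theoremghabli} together with the elementary fact that $|E(T)| = n-1$. The only minor care needed is to note that $T \cup E'$ remains connected (immediate from $T$ being connected) and has exactly $n$ vertices (since $E'$ only adds non-edges and no new vertices), so that the hypotheses of \Cref{theoremghabli} are met.
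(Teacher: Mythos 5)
Your proof is correct and is exactly the argument the paper intends: apply the lower bound of \Cref{theoremghabli} with $\ell = k-2$ to the completed graph $T \cup E'$ and subtract the $n-1$ edges of the tree. The paper states this only as a one-sentence remark, so your write-up is simply a fuller version of the same reasoning.
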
 
\Cref{corol} results from the simple fact that any $n$-vertex tree has $n-1$ edges.

 Let us denote the $n$-vertex path by $P_n$. We show that the edges of every $n$-vertex tree $T$ ($n\geq 3$) can be partitioned into $\lceil\frac{n-1}{2}\rceil$ sub-trees, such that all but at most one sub-tree is isomorphic to $P_3$ as the following lemma states.
\begin{lemma}\label{lem:tree(3,1)}
Let $T = (V, E)$  be a tree with $|V| = n \geq 3$. There exists an algorithm running in $\mathcal{O}(n)$ time that partitions the edges of $E$  into  $\lceil{\frac{n-1}{2}}\rceil$ sub-trees of  $T$ with the following structure. If $|E| = n-1$ is even, all these sub-trees are isomorphic to $P_3$. If $|E|$ is odd, all but one sub-trees are isomorphic to $P_3$, and the remaining sub-tree is isomorphic to $K_2$. 
\label{lemp3decompositionalg}
\end{lemma}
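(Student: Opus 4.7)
The plan is to root $T$ at an arbitrary vertex $r$ and run a single post-order DFS that greedily pairs up still-unassigned incident edges at each vertex.

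I would describe the algorithm as follows. Call an edge \emph{pending} if it has not yet been placed into a sub-tree of the partition. Processing the vertices of $T$ in post-order, when we reach a vertex $v$ with $k$ pending child-edges, we pair them up into $\lfloor k/2\rfloor$ copies of $P_3$ centered at $v$. If $k$ is odd and $v$ is not the root, we combine the one leftover pending child-edge with the edge $(v,\mathrm{parent}(v))$ to form one more $P_3$ centered at $v$; otherwise we leave $(v,\mathrm{parent}(v))$ pending. If $k$ is odd at the root, the single remaining pending edge becomes the sole $K_2$ of the partition.

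Correctness would follow from a straightforward post-order induction on the invariant: after processing the subtree rooted at $v$, every edge of that subtree has been placed in a (pairwise edge-disjoint) sub-tree of the partition, except possibly for the edge from $v$ to its parent, which is pending. Setting $v = r$ shows that every edge of $T$ is covered exactly once and that each piece is either a $P_3$ or a $K_2$. A simple edge count then finishes the argument: since each $P_3$ accounts for two edges, each $K_2$ for one, and the total is $n-1$, the number of pieces equals $\lceil(n-1)/2\rceil$, and a $K_2$ appears if and only if $n-1$ is odd.

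The running time is $\mathcal{O}(n)$ because each vertex does $\mathcal{O}(\deg(v))$ work and $\sum_v \deg(v) = 2(n-1)$. The main thing I would want to verify carefully is the parity statement at the root: that the number of pending child-edges at $r$ has the same parity as $n-1$, so that exactly one $K_2$ is produced precisely when $n-1$ is odd, and none otherwise. This is forced by the edge-count equation $2\cdot(\#P_3\text{'s}) + (\#K_2\text{'s}) = n-1$ together with the fact that the algorithm produces at most one $K_2$ by construction.
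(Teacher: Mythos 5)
Your proposal is correct and is essentially the same decomposition as the paper's: both greedily peel off pairs of edges sharing a vertex from the bottom of the rooted tree (two sibling edges, or a leftover child-edge together with the parent edge), leaving at most one $K_2$ whose presence is forced by the parity of $n-1$. Your post-order formulation with the explicit ``pending parent edge'' invariant and the edge-count argument $2\cdot(\#P_3) + (\#K_2) = n-1$ is a cleaner and more rigorous write-up of the same algorithm, which the paper instead implements via a depth-indexed array of deepest leaves with deletion marks.
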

\begin{proof}
    Suppose $T$ is rooted at an arbitrary node $r \in V$. Create an array $A$ of length $d+1$, where $d$ is the depth of $T$ with respect to $r$. For any $0\leq i\leq d$, $A[i]$ holds a linked list containing all vertices at depth $i$ in $T$.

    The algorithm proceeds in iterations. For every iteration $j\geq 1$, we find the deepest leaf $v$ of $T$. This leaf can be found in the last non-empty cell of $A$. We then extract a sub-tree $T_j$ of $T$ in the following way. Let $u$ be the parent of $v$ in $T$. If $v$ has a sibling $v_1$, we set $T_j \xleftarrow{} T[\{u,v,v_1\}]$. If $v$ has no siblings and $T$ has at least three nodes, we set $T \xleftarrow{} T[\{u,v,w\}]$ where $w$ is the parent of $u$. If $T$ has exactly two nodes left, we set $T_j\xleftarrow{}T[\{v,u\}]$. 

    It is easy to see that $T \sm E(T_j)$ has at most one non-trivial component. 
    There may be some nodes $u$ with $d_T(u)>0$ and $d_{T \sm E(T_j)}(u)=0$. These nodes are marked as \textit{deleted}.
     We then set $T$ to be the only non-trivial component of $T\sm E(T_j)$. If $T \sm E(T_j)$ does not have any non-trivial components, we terminate the algorithm. 
    
In subsequent iterations, whenever a node \( u \) is extracted from the last non-empty cell of $A$, we check if \( u \) is marked as deleted. If so, \( u \) is discarded, and the process continues until a non-deleted node is extracted. It can be verified that as long as $T$ has at least three vertices remaining, $T_j$ is isomorphic to $P_3$. Therefore, if $n-1$ is even, we have $T_j \cong P_3$ for $\frac{n-1}{2}$ many $j$. If $n-1$ is odd, we have $T_j \cong P_3$ for $\lfloor \frac{n-1}{2}\rfloor $ many $j$, and $T_j\cong K_2$ for the last iteration. Regarding the time complexity, for finding sub-trees $T_j$ we visit every edge of $T$ exactly once; therefore, we spend $\mathcal{O}(n)$ time in total for finding all sub-trees. At every iteration $j$, we need to check a constant number of nodes to determine whether they need to be marked as deleted. Moreover, every deleted node is discarded at most once, and updating $A$ can be done in $\bigoh(n)$ time in total. Since we never increase any node's depth, finding the last non-empty cell of $A$ can be done in $\mathcal{O}(n)$ time in total since $A$ has length at most $n$.
\end{proof}

\;\\
We now show that a tree's optimal $(3,1)$-cover can be computed efficiently. 
\begin{proposition}
Let $T = (V, E)$  be a tree with $|V| = n \geq 3$. In $\mathcal{O}(n)$ 
 time, we can solve the $(3,1)$-cover problem optimally for $T$ by producing a completion set of size $\lceil\frac{n-1}{2} \rceil$.
\label{proptreeoptimal}
\end{proposition}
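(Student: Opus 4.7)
The plan is to match a lower bound coming from \Cref{corol} with an explicit upper-bound construction based on \Cref{lem:tree(3,1)}. For the lower bound, I would instantiate \Cref{corol} at $k=3$ to obtain $\opt \geq \frac{n-1}{2}$, and then use integrality of $\opt$ to conclude $\opt \geq \lceil\frac{n-1}{2}\rceil$.

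For the matching upper bound, I would first invoke \Cref{lem:tree(3,1)} to partition $E$ into $\lceil\frac{n-1}{2}\rceil$ edge-disjoint sub-trees: all isomorphic to $P_3$ when $n-1$ is even, and all isomorphic to $P_3$ except one $K_2$ when $n-1$ is odd. For every $P_3$ with vertex set $\{u,v,w\}$ and $v$ as the middle, I would insert the single non-edge $(u,w)$ into the completion set $E'$, which turns $\{u,v,w\}$ into a triangle and simultaneously saturates both original edges $(u,v)$ and $(v,w)$. When $n-1$ is even, this is already a valid completion set of the claimed size.

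The main obstacle, and essentially the only interesting step, is the leftover $K_2=(a,b)$ in the odd case: a single additional edge must place $(a,b)$ inside a triangle. The observation I would exploit is that since $T$ is connected with $n \geq 3$, at least one endpoint of this $K_2$, say $a$, has degree at least $2$ in $T$; pick another $T$-neighbour $c \neq b$ of $a$. The tree edge $(a,c)$ lies in some $P_3$ of the partition and is preserved in $T \cup E'$, while $(b,c)$ is genuinely a non-edge of $T$ because the unique $T$-path from $b$ to $c$ passes through $a$. Adding the one non-edge $(b,c)$ then creates the triangle $\{a,b,c\}$ that saturates $(a,b)$, bringing the total number of added edges to $\lfloor\frac{n-1}{2}\rfloor + 1 = \lceil\frac{n-1}{2}\rceil$, matching the lower bound.

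For the running time, \Cref{lem:tree(3,1)} already returns the partition in $\mathcal{O}(n)$ time; each additional step inserts a single non-edge in constant time per sub-tree, yielding an overall $\mathcal{O}(n)$ algorithm that produces a completion set of the optimal size $\lceil\frac{n-1}{2}\rceil$.
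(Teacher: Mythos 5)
Your proposal is correct and follows essentially the same route as the paper: partition the edges via \Cref{lemp3decompositionalg}, complete each $P_3$ to a triangle, handle the leftover $K_2$ with one extra edge, and match against the lower bound from \eqref{optlowerbound}. Your explicit treatment of the $K_2$ case (choosing a degree-$\geq 2$ endpoint $a$ and adding the non-edge $(b,c)$ to a second $T$-neighbour $c$ of $a$) merely fills in a detail the paper leaves implicit.
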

\begin{proof}
Use the algorithm of \Cref{lemp3decompositionalg} to find a partition of $E$ into many copies of $P_3$ and at most one $K_2$ (if \( |E| \) is odd). For each sub-tree isomorphic to \( P_3 \), we can add the missing edge to the completion set, transforming the \( P_3 \) into a \( K_3 \). For the sub-tree isomorphic to \( K_2 \), we can add one edge to the completion set that will cover the edge in this sub-tree within a triangle. It is easy to see that the described algorithm adds exactly \( \lceil \frac{n-1}{2} \rceil \) edges, and it is optimal, using the bound in \eqref{optlowerbound}.\end{proof} 
\subsection{An Optimal Algorithm for the $(3,1)$-Cover of Chordal Graphs} 
\label{secchordalmain}
We now describe our algorithm for chordal graphs. Let $G=(V,E)$ be a connected chordal graph on at least three vertices. Note that if an edge $e$ is not a bridge of a chordal graph $G$, it must belong to a cycle and, consequently, a triangle. We have the following notation.
\begin{notation}
    Let $G=(V,E)$ be a chordal graph. Denote by $T_1, \dots, T_c$ the maximal connected subgraphs on the bridges of $G$. 
    \label{treesofchordaldefinition}
\end{notation}
For convenience, we refer to these subgraphs as \defin{the trees of $G$}, see \Cref{chordalfig}. 

  \begin{figure}[tbh]
    \centering
        {\includegraphics[scale=0.80]{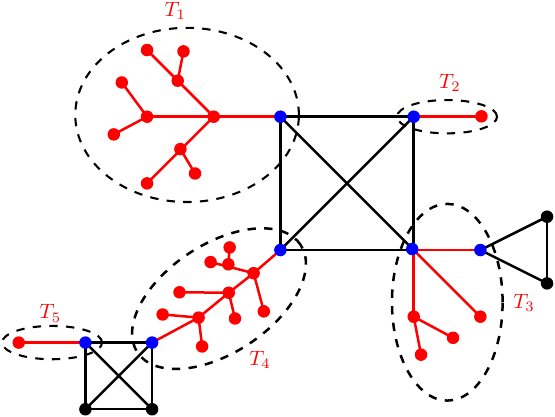}}
    \caption{An example of a chordal graph and its trees (depicted in red). The outer and boundary vertices are depicted in black and blue, respectively.} 
    \label{chordalfig}
\end{figure}
For a chordal graph $G$, we call the 

vertices that belong to no trees \defin{the outer vertices}. Moreover, we refer to the vertices of $G$ that belong to a tree and are incident to at least one non-bridge as the \defin{boundary vertices}. Our algorithm for chordal graphs is described in \Cref{chordalalgorithm}.
\begin{algorithm}[H]
    \caption{An optimal algorithm for chordal graphs}
    \label{chordalalgorithm}
    \begin{algorithmic}[1]
\State \textbf{Input:} A connected chordal graph $G=(V,E)$ with $|V| \geq 3$
\State \textbf{Output:} An optimal completion set $E'$ of $G$
\State \textbf{Initialization:} Find the trees $T_1, \dots, T_c $ of $G$ (see \Cref{treesofchordaldefinition}), set $E' \coloneqq \emptyset$.
\State For every tree $T_i$ with $|V(T_i)|\geq 3$, convert it to graph with a $(3,1)$-cover using the algorithm of \Cref{proptreeoptimal}. Update $E'$ accordingly. 
\State For any tree $T_i$ isomorphic to $K_2$, let $V(T_i)=\{u,v\}$ and without loss of generality, let $u$ be a boundary vertex of $G$. Add $(v,w)$ to $E'$, where $w\neq v$ is a neighbour of $u$ in $G$.
\State \Return $E'$
    \end{algorithmic}
\end{algorithm}
\;\\
We present a notation before proving the optimality of \Cref{chordalalgorithm}. 
\begin{notation}
Let $E'$ be any completion set of $G$. We denote by $\edgesbetweentitj$ the set of edges of $E'$ between the vertices of $T_i$ and $T_j$, i.e., $\edgesbetweentitj= \{(u,v) \in E'| u\in V(T_i), v \in V(T_j)\text{, where }i\neq j\}$. The edges in $\edgesbetweentitj$ are referred to as \defin{cross edges}.
\end{notation}

\begin{lemma}
Let \( G = (V,E) \) be any connected graph on at least three vertices, and let \( E' \) be any optimal completion set of \( G \). For every edge \( e' \in E' \), there exists a triangle in \( G \cup E' \) containing $e'$ such that at least one edge of this triangle is unsaturated in $G$ (from the set \( E \)).
\label{lem triangle necessary}
\begin{proof}
For the sake of contradiction, assume that there exists a non-empty subset \( S \subseteq E' \) of edges that do not participate in any such triangles. Let \( E'' = E' \sm S \). It can be seen that \( G \cup E'' \) has a \((3,1)\)-cover with \( |E''| < |E'| \), which contradicts the optimality of \( E' \).
\end{proof}
\end{lemma}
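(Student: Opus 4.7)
The plan is to argue by contradiction, leveraging the optimality of $E'$. Suppose some edge $e' \in E'$ fails the conclusion; then every triangle of $G \cup E'$ that contains $e'$ must have its remaining two edges drawn from $E'$ together with edges of $E$ that are already saturated in $G$. I will show that under this assumption $e'$ is redundant, so $E' \sm \{e'\}$ is a strictly smaller completion set of $G$, contradicting the optimality of $E'$.

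To verify redundancy, I would pick an arbitrary edge $e \in E$ and exhibit a triangle covering $e$ in $G \cup (E' \sm \{e'\})$. If $e$ is already saturated in $G$, then it lies in a triangle made entirely of edges of $E$, which trivially survives the deletion of $e'$ from $E'$. Otherwise $e$ is unsaturated in $G$, and since $E'$ is a completion set, $e$ lies in some triangle $T$ of $G \cup E'$. Either $e' \notin E(T)$, in which case $T$ itself persists in $G \cup (E' \sm \{e'\})$, or $e' \in E(T)$; but in the latter case $T$ is a triangle containing both $e'$ and the unsaturated edge $e$, directly contradicting our standing assumption on $e'$. Either way, every edge of $E$ remains covered, so $E' \sm \{e'\}$ is a valid completion set.

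I do not anticipate any real obstacle. The only subtlety is to recognize that only edges of $G$, and not the added non-edges in $E'$, need to be covered by triangles. Consequently, a triangle through $e'$ whose other two edges either lie in $E'$ or are already saturated in $G$ contributes no new coverage of an unsaturated edge of $G$, and this is precisely what makes deleting $e'$ safe. A minor bookkeeping remark: the argument removes a single offending edge $e'$ at a time, but the same reasoning applies uniformly to the set $S$ of \emph{all} edges of $E'$ that violate the conclusion, since the witness triangle for any unsaturated $e \in E$ in $G \cup E'$ cannot use any edge of $S$, so $E' \sm S$ is still a completion set.
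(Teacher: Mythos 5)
Your overall strategy is the paper's: delete the offending edges of $E'$ and contradict optimality (the paper does this in a single step by discarding the whole set $S$ of violating edges, exactly as you suggest in your closing remark). There is, however, one genuine gap, and it stems from a claim you make explicitly: that ``only edges of $G$, and not the added non-edges in $E'$, need to be covered by triangles.'' That is not how the paper defines a completion set. By definition, $E'$ is a completion set of $G$ when $G\cup E'$ has a $(3,1)$-cover, i.e.\ when \emph{every} edge of $G\cup E'$ --- including every edge of $E'$ itself --- lies in a triangle. So after deleting edges from $E'$ you must also verify that the retained added edges remain covered; your argument only checks the edges of $E$.

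The gap is easily closed, and closing it also shows why the one-edge-at-a-time version is the wrong granularity. If you delete a single violating edge $e'$, another edge $e''\in E'$ could be orphaned: every triangle through $e''$ in $G\cup E'$ might use $e'$, in which case $E'\setminus\{e'\}$ is not a completion set and no contradiction follows. Deleting all of $S$ at once avoids this. Indeed, every $e''\in E'\setminus S$ lies, by definition of $S$, in a triangle $T$ containing an unsaturated edge $e\in E$; the third edge $f$ of $T$ is either in $E$ or in $E'$, and in the latter case $f$ itself lies in a triangle (namely $T$) with an unsaturated edge of $G$, so $f\notin S$. Hence $T$ survives in $G\cup(E'\setminus S)$ and $e''$ remains covered. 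Combined with your (correct) verification for the edges of $E$, this shows that $E'\setminus S$ is a strictly smaller completion set, which is the contradiction the paper draws.
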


To prove the correctness of \Cref{chordalalgorithm}, we show that for a chordal graph $G$, any optimal completion set $E'$ can be transformed into another completion set $E''$ of the type described in \Cref{chordalalgorithm}, i.e., the edges of each tree $T_j$ with $|V(T_j)|\geq 3$ are only covered by edges completely within $T_j$. Moreover, the edge in any $T_j$ with $|V(T_j)|=2$ is covered by exactly one edge, as described in Line 5 of \Cref{chordalalgorithm}.

We present this transformation in the next two lemmas. Before describing these lemmas, we first provide an example to motivate them. The example is shown in \Cref{examplechordal} for a chordal graph $G$ with two trees $T_1$ and $T_2$. Two completion sets $E'$ and $E''$ are depicted in \Cref{examplechordal}(a) and \Cref{examplechordal}(b), respectively. Both completion sets are optimal with $|E'|=|E''|=4$; however, $|E'_{1,2}|=2$ and $|E''_{1,2}|=0$. \Cref{lem key alg} provides a procedure to transform $E'$ into $E''$ by moving the edges $(w_1, b_2)$ and $(w_2, b_1)$ within $T_1$ and $T_2$, respectively. 
Before describing the modification process, we make an observation on an edge between the boundary vertices of $T_i$ and $T_j$ in a chordal graph.
\begin{observation}
If there is an edge $e=(u,v)\in E$ between $T_i$ and $T_j$ such that $u\in V(T_i)$ and $v\in V(T_j)$, then $u$ and $v$ are boundary vertices of $T_i$ and $T_j$, respectively. 
\label{niceobsboundary}
\end{observation}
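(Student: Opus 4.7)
The plan is to observe that the cross edge $e$ cannot be a bridge of $G$, from which the conclusion follows immediately by unwinding the definition of a boundary vertex. So the whole argument is essentially a one-step consequence of \Cref{treesofchordaldefinition} and the definition preceding the observation.

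First I would argue by contradiction that $e$ is a non-bridge. Suppose for contradiction that $e$ were a bridge of $G$. By \Cref{treesofchordaldefinition}, each $T_\ell$ is a maximal connected subgraph of $G$ whose edge set consists entirely of bridges. Consider the subgraph $H$ of $G$ formed by all bridges of $G$; its connected components (of order at least two) are precisely $T_1,\ldots,T_c$. Since $e$ is a bridge by assumption, $e\in E(H)$, so $u$ and $v$ lie in the same component of $H$. By maximality, that component is a single $T_\ell$, contradicting the hypothesis that $u\in V(T_i)$ and $v\in V(T_j)$ with $i\neq j$. Hence $e$ is not a bridge of $G$.

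To conclude, I would simply apply the definition introduced just before the observation: a vertex of $G$ is a boundary vertex if it belongs to some tree and is incident to at least one non-bridge. Since $u\in V(T_i)$ is incident to the non-bridge $e$, the vertex $u$ is a boundary vertex of $T_i$, and the symmetric argument shows $v$ is a boundary vertex of $T_j$.

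There is essentially no real obstacle here; the statement is a short consequence of definitions. The only point that requires a little care is explicitly invoking the maximality of the $T_\ell$'s to rule out the possibility that the cross edge $e$ is itself a bridge, which is what I emphasize above.
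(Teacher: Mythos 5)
Your proof is correct, and it spells out exactly the definitional argument the paper leaves implicit (the observation is stated without proof there): the maximality of the trees $T_1,\dots,T_c$ as connected subgraphs on the bridges forces the cross edge $e$ to be a non-bridge, and then $u$ and $v$ are boundary vertices by the definition given just before \Cref{niceobsboundary}. No gaps.
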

For instance, in the graph of \Cref{examplechordal}, $(b_1,b_2)$ serves as the edge described in \Cref{niceobsboundary} for two boundary vertices $b_1 \in V(T_1)$ and $b_2 \in V(T_2)$.

 We consider multiple cases for the aforementioned transformation of edges in $E'$.\\\\
    \noindent\textbf{Case I:} Edges that lie between two distinct trees $T_i$ and $T_j$ with $|V(T_i)|\geq 3$ and $|V(T_j)|\geq 3$\\\\
\noindent\textbf{Case II:} Edges that lie between two distinct trees $T_i$ and $T_j$ with $|V(T_i)|=2$\\\\
\noindent \textbf{Case III:} Edges that lie between a tree $T_i$ and some outer vertex $u$.\\ \\

We start by considering Case~I in \cref{lem key alg} and explain how other cases can be handled similarly. 
\begin{figure}[H]
    \centering
    \subfloat[$G \cup E'$]
    {{\includegraphics[scale=0.80]{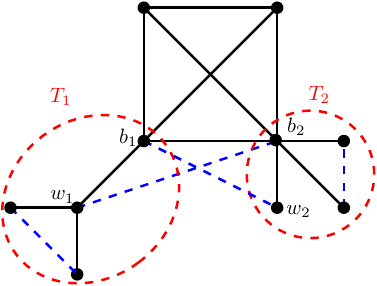}}}
    \qquad
    \subfloat[$G\cup E''$]
    {{\includegraphics[scale=0.80]{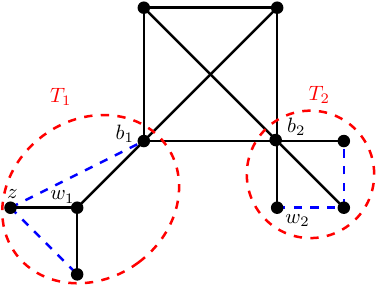}}}
    \caption{An example of a chordal graph $G$ with trees $T_1$ and $T_2$, and two completion sets $E'$ (a) and $E''$ (b). Solid edges belong to $G$, and the blue dashed edges are the edges of the completion sets.}
    \label{examplechordal}
\end{figure}

\begin{lemma}
Let \( G = (V,E) \) be any chordal graph and let \( E' \) be any optimal completion set of \( G \). For any pair of distinct trees \( T_i \) and \( T_j \) of \( G \) with \( |V(T_i)| \geq 3 \), \( |V(T_j)| \geq 3 \), and $|\edgesbetweentitj|>0$, we can transform $E'$ into another optimal completion set $E''$ with $|\edgesbetweentitjprime|=0$ by replacing \( \edgesbetweentitj \) with edges within \( V(T_i) \) and \( V(T_j) \).
    \label{lem key alg}
\end{lemma}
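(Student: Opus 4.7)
The plan is to show that the cross edges in $\edgesbetweentitj$ can be swapped for an equal or smaller number of internal edges inside $V(T_i) \cup V(T_j)$, with chordality providing the crucial bound on the efficiency of cross edges.

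I would first invoke \Cref{lem triangle necessary} on each cross edge $e' = (u, v) \in \edgesbetweentitj$ (with $u \in V(T_i)$ and $v \in V(T_j)$): $e'$ must lie in some triangle $\{u, v, w\}$ of $G \cup E'$ whose third edge belongs to $E$ and is unsaturated in $G$. Since $G$ is chordal, the unsaturated edges of $E$ are exactly the bridges of $G$, and every bridge lies entirely within a single tree $T_k$. Thus the unsaturated edge is either $(u, w) \in E(T_i)$ (forcing $w \in V(T_i)$) or $(v, w) \in E(T_j)$ (forcing $w \in V(T_j)$).

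Next, I would establish the key claim: if $\edgesbetweentitj$ is removed from $E'$, then the set $B$ of bridges of $T_i \cup T_j$ that become unsaturated satisfies $|B| \leq |\edgesbetweentitj|$. For each $b \in B$, some saturating triangle of $b$ in $G \cup E'$ must involve a cross edge, and I assign $b$ to such a cross edge. The assignment is injective because no single cross edge can support two distinct bridges through $E$-edges: if $(u, v) \in \edgesbetweentitj$ saturated bridges $(u, w_1)$ and $(u, w_2)$ in $E(T_i)$ via $(v, w_1), (v, w_2) \in E$, then the $4$-cycle $u - w_1 - v - w_2 - u$ in $G$ would lack a chord in $E$, since $(u, v) \in E'$ excludes it from $E$ and $(w_1, w_2) \in E$ would place both bridges $(u, w_1), (u, w_2)$ inside the triangle $\{u, w_1, w_2\}$ of $G$, contradicting that they are bridges. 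A parallel argument rules out a single cross edge saturating one bridge in $T_i$ and one in $T_j$ via $E$-supported triangles. Any additional bridge supported by $(u, v)$ must therefore rely on a triangle whose third edge is itself a cross edge in $\edgesbetweentitj$, consuming a distinct cross edge and preserving injectivity.

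I would then construct $E''$ by removing $\edgesbetweentitj$ from $E'$ and saturating each $(u, w) \in B$ by a single internal edge in the corresponding tree. WLOG $(u, w) \in E(T_i)$; since $|V(T_i)| \geq 3$ and $T_i$ is connected, at least one of $u, w$ has another neighbour in $T_i$, say $u$ is adjacent to some $x \in V(T_i) \setminus \{w\}$, so $(u, x) \in E(T_i)$. Adding the non-edge $(w, x)$ to $E''$ yields the triangle $\{u, w, x\}$ in $G \cup E''$ and saturates $(u, w)$; the fact that $(w, x) \notin E$ follows from $(u, w)$ being a bridge, since $(w, x) \in E$ together with $(u, x), (u, w) \in E$ would place $(u, w)$ inside the triangle $\{u, w, x\}$ of $G$. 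Hence $|E''| \leq |E'| - |\edgesbetweentitj| + |B| \leq |E'|$, and optimality of $E'$ forces $|E''| = |E'|$, so $E''$ is an optimal completion set with $|\edgesbetweentitjprime| = 0$.

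The main obstacle is carrying out the injective assignment rigorously in all configurations: beyond the symmetric case handled above, one must also treat $E'$-supported triangles, i.e., those whose third edge is itself a cross edge in $\edgesbetweentitj$, and verify that each such extra supported bridge always recruits a distinct cross edge. The chordality of $G$ consistently forces either a direct contradiction or the presence of an additional cross edge, which keeps the bound $|B| \leq |\edgesbetweentitj|$ intact across all cases.
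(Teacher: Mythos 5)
Your overall strategy --- delete all of $\edgesbetweentitj$ at once, bound the number $|B|$ of bridges of $T_i\cup T_j$ that thereby lose every saturating triangle by $|\edgesbetweentitj|$, and then repair each such bridge with one edge internal to its own tree --- is genuinely different from the paper's proof. The paper never removes the cross edges in bulk: it runs an iterative procedure that deletes one cross edge per iteration, immediately re-covers the single tree edge that became unsaturated by completing a $P_3$ inside $T_i$ or $T_j$ into a $K_3$, and maintains a two-component forest invariant which guarantees that a removable cross edge is always available at the next step. Your reduction of the problem to the single inequality $|E''|\le |E'|-|\edgesbetweentitj|+|B|$ is sound, your identification of the at-risk edges as exactly the bridges of $T_i\cup T_j$ is correct, and the repair step (one internal edge per bridge, legal by chordality) works.

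The gap is the counting claim $|B|\le|\edgesbetweentitj|$ itself, which you explicitly leave open in the configurations that actually matter. Your chordality argument disposes of the case where one cross edge serves two bridges through triangles whose third edge lies in $E$; but in the deferred case, where the third edge of the saturating triangle is itself a cross edge, chordality of $G$ yields no contradiction (the relevant four-cycles use edges of $E'$), and ``recruits a distinct cross edge'' is not automatic. For example, a single apex $v\in V(T_j)$ with cross edges to $u_1,\dots,u_m\in V(T_i)$, where $u_1-\dots-u_m$ is a path of bridges in $T_i$, gives $m-1$ bridges each saturated only by a triangle using two cross edges, with each cross edge appearing in two such triangles; an injective assignment exists only because $m-1<m$, i.e.\ because the served bridges form a forest, not because any case is contradictory. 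The claim is in fact true, but it needs a global count rather than local case analysis: chordality forces at most one $G$-edge between $V(T_i)$ and $V(T_j)$ (two such edges would place a bridge on a cycle); form the auxiliary graph $\Phi$ on $V(T_i)\cup V(T_j)$ whose edge set is $\edgesbetweentitj$ plus that one $G$-edge; every saturating triangle of a bridge in $B$ is a cherry whose two non-bridge edges lie in $\Phi$, so both endpoints of the bridge lie in one component of $\Phi$; the bridges served inside a component on $p$ vertices form a forest on each side and hence number at most $p-2$, which is at most the number of edges of that component minus one; summing over components gives $|B|\le |E(\Phi)|-1\le|\edgesbetweentitj|$. Without an argument of this kind (or a direct verification of Hall's condition for your assignment), the proposal does not establish the lemma.
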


\begin{proof}
Let \( |\edgesbetweentitj| = d >0\). We describe a procedure to modify $E'$, without increasing its size while still keeping it a completion set, so that \( |\edgesbetweentitj| < d \). This procedure, described in \cref{procedure1}, 
replaces a subset of \( \edgesbetweentitj \) with edges within \( V(T_i) \) and \( V(T_j) \). We keep applying this procedure to $T_i$ and $T_j$ until $|\edgesbetweentitj|=0$, completing the proof.

The modification procedure progresses in iterations and updates $E'$ by adding and removing edges. 

We continue this process until \( G \cup E' \) has no unsaturated edges \( \unsaturatededge=\unsaturatededgeendpoints \in E \). By \Cref{lem triangle necessary}, we do not need to worry about unsaturated edges in \( E' \). If all edges in \( E \) are saturated, we can safely remove all unsaturated edges from $E'$ using \Cref{lem triangle necessary}, yielding a smaller completion set, which contradicts the optimality of \( E' \).

At the beginning of iteration \( \ell \), we have a forest $F_{\ell} \subseteq G$ with some useful properties, which we will explain later. In each iteration, we remove exactly one cross edge and add it to a set $\setofremovededges \subseteq \edgesbetweentitj$. At the beginning of iteration $\ell \geq 2$, the set $\setofremovededges$ contains all cross edges removed during iterations $1, \dots, \ell - 1$.

\newpage
The procedure is as follows. \\ 

\begin{procedure}
\hrule

    \textbf{Reconfiguring an Optimal Completion Set.}

\noindent \textbf{Initialization:} \( F_0 = \emptyset \) and $\setofremovededges=\emptyset$.

\begin{iter}
 Pick a cross edge $\removedcrossedge=\removedcrossedgeendpoints \in \edgesbetweentitj $ using the rules stated in (i) and (ii) and modify $E'$ as described in (iii) and (iv).
\begin{enumerate}[(i)]
    \item Suppose $G$ has an edge $(b_i, b_j)\in E$ between boundary vertices $b_i \in V(T_i)$ and $b_j \in V(T_j)$ such that $(b_i, b_j)$ is contained in a triangle in $G\cup E'$ containing an edge from $E(T_i) \cup E(T_j)$. 
Suppose $(b_i, b_j)$, an edge $\unsaturatededgeendpoints \in E(T_i) \cup E(T_j)$, and a cross edge $\removedcrossedgeendpoints \in \edgesbetweentitj$ form a triangle (i.e., a $K_3$) in $G \cup E'$. Define $\removedcrossedge = \removedcrossedgeendpoints$ and $\unsaturatededge = \unsaturatededgeendpoints$.
    \item Suppose no such edge $(b_i, b_j)\in E$ exists. Then, let $\removedcrossedge=\removedcrossedgeendpoints$ be any edge from $\edgesbetweentitj$. By \Cref{lem triangle necessary}, $\removedcrossedge$ must belong to a triangle in $G \cup E'$ containing an edge from $E(T_i) \cup E(T_j)$. Let $\unsaturatededge=\unsaturatededgeendpoints\in E(T_i) \cup E(T_j)$ be this edge.
\item 
After selecting \( \removedcrossedge \) and \( \unsaturatededge \) from (i) or (ii), remove \( \removedcrossedge \) from \( E' \). If \( \unsaturatededge \) becomes unsaturated, find a \( P_3 \) in \( T_i \) (if \( \unsaturatededge \in E(T_i) \)) or \( T_j \) (if \( \unsaturatededge \in E(T_j) \)) such that \( \unsaturatededge \) is an edge of this \( P_3 \). Convert this \( P_3 \) into a \( K_3 \) by adding the missing edge \( \addededge \), i.e., update \( E' \) as \( E' \gets (E' \setminus \{\removedcrossedge\}) \cup \{\addededge\} \).

\item Set \( F_1 \xleftarrow{} (F_0 \cup \unsaturatededge)\cup \removedcrossedgeendpointtwo \) (where $\removedcrossedge=\removedcrossedgeendpoints=(v,\removedcrossedgeendpointtwo)$), $\setofremovededges \xleftarrow{}\setofremovededges \cup\{\removedcrossedge\}$, and terminate the first iteration. 
\end{enumerate}

    \label{remarkrules}
\end{iter}

\begin{iterationell}[$\mathbf{\ell\geq 2}$]
At the beginning of Iteration $\ell$, let $F_{\ell-1}$ be the forest from the previous iteration. 
\begin{enumerate}[(i)]
    \item Suppose $G\cup E'$ has an unsaturated edge $\unsaturatededge\in E$, where $\unsaturatededge=\unsaturatededgeendpoints \in E(T_i) \cup E(T_j)$  
    with $u\in V(F_{\ell-1})$ and $v\notin V(F_{\ell-1})$. 
    \item Find an edge
    $\removedcrossedge=\removedcrossedgeendpoints \in \edgesbetweentitj$ such that $\removedcrossedgeendpointone=v$ and $\removedcrossedgeendpointtwo\in V(F_{\ell-1})$.
    \item Find a $P_3$ in $T_i$ (if $\unsaturatededge \in E(T_i)$) or $T_j$ (if $\unsaturatededge \in E(T_j)$) such that $e$ is an edge of this $P_3$. Set $E'\xleftarrow{}(E' \setminus \{\removedcrossedge\})\cup \{\addededge\} $ where $\addededge$ is the missing edge of this $P_3$ (convert this $P_3$ into a $K_3$).
 
    \item Set $F_{\ell}\xleftarrow{}F_{\ell-1} \cup \unsaturatededge$ and $\setofremovededges \xleftarrow{}\setofremovededges \cup\{\removedcrossedge\}$.
\end{enumerate}
\label{iterell}    
\end{iterationell}
\noindent Terminate the procedure when $G\cup E'$ has no unsaturated edge $\unsaturatededge\in E$.
\hrule
\label{procedure1}
\end{procedure}

As an example of \Cref{remarkrules}, let $G$ and $E'$ be as depicted in \Cref{examplechordal}(a). Since $G\cup E'$ has an edge $(b_1, b_2)$ with the conditions of \Cref{remarkrules}(i), we can set $\removedcrossedge\xleftarrow{}(w_1, b_2)$, $\unsaturatededge=\unsaturatededgeendpoints\xleftarrow{} (b_1,w_1)$, $F_1\xleftarrow{}\{(b_1,w_1), b_2\}$, $\setofremovededges=\{\removedcrossedge\}=\{(w_1,b_2)\}$. Observe how the operations of \Cref{remarkrules} do not increase the size of $E'$. In the example of \cref{examplechordal}, $\removedcrossedge=(w_1,b_2)$ is removed from $E'$ (\Cref{examplechordal}(a)), and replaced with $\addededge=(z,b_1)$ (\Cref{examplechordal}(b)). An example of \Cref{procedure1} is depicted in \Cref{chordalfig3}.

We now make some observations on this procedure. First, observe that at the beginning of each iteration $\ell \geq 2$, if $G\cup E'$ has an unsaturated edge $\unsaturatededge=\unsaturatededgeendpoints$, then it must have become unsaturated due to removing edges $\setofremovededges$ in iterations $1$ to $\ell-1$, thus $\unsaturatededge \in E(T_i) \cup E(T_j)$. Moreover, $\unsaturatededge \notin E(F_{\ell-1})$ since we explicitly cover each edge in $E(F_{\ell-1})$ by adding edges $e'$. 
Therefore, by \cref{niceobsboundary} and \cref{remarkrules}(i), if $\unsaturatededge=\unsaturatededgeendpoints$ is unsaturated at the beginning of iteration $\ell$, it must have the properties described in \Cref{iterell}(i) and there must exist an edge $\removedcrossedge \in \edgesbetweentitj$ as detailed in \cref{iterell}(ii). Thus, the forest $F_{\ell}$ maintains some invariants described in the following observation.

\begin{observation}
        At the end of each iteration $\ell \geq 1$, $F_{\ell}$ has the following properties.
    \begin{enumerate}[(i)]
        \item $F_{\ell}\subseteq T_i \cup T_j$ and $F_{\ell}$ has exactly two components, $F_{\ell} \cap T_i$ and $F_{\ell}\cap T_j$.
        \item The set $\setofremovededges$ of cross edges removed in iterations $1,\dots,\ell$ lie between the two components of $F_{\ell}$ as stated in (i).
    \end{enumerate}
    \label{remarkproperty}
\end{observation}
Since we remove exactly one edge in each iteration and add at most one, the size of $E'$ never increases. The procedure terminates when $G\cup E'$ has no unsaturated edges left, at which point $|\edgesbetweentitj|<d$.

\end{proof}
  \begin{figure}[h]
    \centering
        {\includegraphics[scale=0.80]{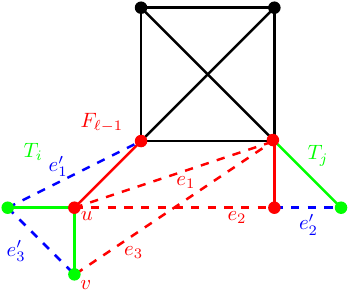}}
    \caption{An example of \Cref{iterell} for $\ell=3$: Solid red edges are the edges of $F_{\ell-1}$. Green edges belong to $T_j$ and $T_i$ but not to $F_{\ell-1}$. We remove the red dashed edges and replace them with the blue dashed ones. In this example, $\setofremovededges=\{e_1, e_2, e_3\}$ at the end of iteration $\ell$. Moreover, we add $e'_1$, $e'_2$, and $e'_3$ in the first three iterations after removing $e_1$, $e_2$, and $e_3$, respectively. Observe how $F_{\ell-1} \cup (u,v)$ remains a forest with exactly two components.} 
\label{chordalfig3}
\end{figure}
Recall the definition of outer vertices. To handle Case~II and Case~III, we can show the following lemma analogously to \Cref{lem key alg}. We omit some of the details to avoid duplication.
\begin{lemma}

Let $G$ be any connected chordal graph on at least three vertices. Then, we can modify any optimal completion set $E'$ of $G$ into another optimal completion set  $E''$ such that after adding $E''$ to $G$
\begin{enumerate}[(i)]
    \item For any pair of distinct trees $T_i$ and $T_j$ of $G$ with $|V(T_i)|\geq 3$ and $|V(T_j)|=2$, no triangles containing the edges of $E(T_i)$ contain a vertex from $V(T_j)$. Furthermore, every tree $T_j$ with $|V(T_j)|=2$ and $E(T_j) =\{\unsaturatededge=(u,v)\}$ is covered by one edge, between $u$ (resp. $v$) and a distance-two neighbour of $u$ (resp. $v$) in $G$ (as described in Line 5 of \cref{chordalalgorithm}).
    \item the endpoints of no edge in $E''$ lie between an outer vertex $u$ of $G$ and a vertex $v\in V(T_i)$ with $|V(T_i)|\geq 3$.
\end{enumerate}

    \label{claim mohem 2}
\end{lemma}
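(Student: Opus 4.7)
The plan is to mimic the iterative reconfiguration procedure of \Cref{lem key alg}, using the same machinery---a growing forest $F_\ell$, a set $\setofremovededges$ of removed cross edges, and the $P_3\to K_3$ replacement step---while adapting the rule for selecting the cross edge to be removed. As before, each iteration deletes exactly one ``bad'' edge from $E'$ and adds at most one new edge inside a tree $T_i$, so $|E'|$ never grows; the procedure terminates once no bad edges of the targeted type remain, at which point $E''\coloneqq E'$ satisfies the claimed properties.

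For part (i), consider a pair $T_i, T_j$ with $|V(T_i)|\geq 3$ and $V(T_j)=\{u,v\}$, and suppose some edge $(x,y)\in E(T_i)$ lies in a triangle of $G\cup E'$ whose third vertex $w$ belongs to $V(T_j)$. Since $(x,y)$ is a bridge of $G$, this triangle cannot lie entirely in $G$, so at least one of $(x,w),(y,w)$ is a cross edge in $\edgesbetweentitj$. We set $\removedcrossedge$ to be such an edge and $\unsaturatededge=(x,y)$, then apply the deletion and re-saturation step of \Cref{remarkrules}(ii) and \Cref{iterell}: remove $\removedcrossedge$, and if $\unsaturatededge$ becomes unsaturated in $G\cup E'$, complete a $P_3$ through $\unsaturatededge$ inside $T_i$ into a $K_3$. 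Because $|V(T_j)|=2$, the $T_j$-side of $F_\ell$ never needs to grow beyond $\{u,v\}$, so the invariants of \Cref{remarkproperty} degenerate to a single growing component inside $T_i$; the rest of the argument is identical to \Cref{lem key alg}. The second assertion of (i)---that every $K_2$-tree $T_j$ with $E(T_j)=\{(u,v)\}$ is covered by precisely the edge $(v,w)$ of Line~5 of \Cref{chordalalgorithm}---then follows from optimality combined with \Cref{lem triangle necessary}: saturating $(u,v)$ requires at least one added edge incident to $\{u,v\}$, and a single edge $(v,w)$ with $w\in N_G(u)\setminus\{v\}$ already creates the triangle on $\{u,v,w\}$ using the existing edge $(u,w)\in E$; any optimal completion set using more edges can be trimmed to this form via \Cref{lem triangle necessary}.

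For part (ii), suppose $E'$ contains an edge $(u,v)$ with $u$ an outer vertex of $G$ and $v\in V(T_i)$, $|V(T_i)|\geq 3$. Since $u$ is outer, every edge of $G$ incident to $u$ is a non-bridge and already lies in a triangle of $G$, so by \Cref{lem triangle necessary} the only useful role $(u,v)$ can play in $E'$ is saturating some edge $\unsaturatededge\in E(T_i)$. We set $\removedcrossedge=(u,v)$ and apply the same procedure: remove $\removedcrossedge$, and if $\unsaturatededge$ becomes unsaturated, cover it through a $P_3\to K_3$ move inside $T_i$. As in part (i), the forest $F_\ell$ only grows on the $T_i$-side, and chordality together with \Cref{niceobsboundary} guarantees that no new outer-to-$T_i$ edges are introduced by the procedure.

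The main obstacle, exactly as in \Cref{lem key alg}, is verifying that at each iteration one can find a valid replacement $P_3$ inside $T_i$ containing the newly unsaturated edge $\unsaturatededge$, and that the process terminates. Both points are handled by the same bookkeeping: the invariants of \Cref{remarkproperty} guarantee that $F_\ell$ remains a subforest of $T_i$ whose boundary contains an endpoint of the just-removed cross edge, so a suitable $P_3$ through $\unsaturatededge$ is always available; termination follows because each iteration strictly decreases the number of bad edges of the current type without increasing $|E'|$. Once both procedures have been run, the resulting $E''$ has the same size as $E'$, is still a completion set of $G$, and satisfies both structural conditions (i) and (ii).
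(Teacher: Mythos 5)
Your proposal follows essentially the same route as the paper: the paper also proves this lemma by re-running Procedure~1 of \Cref{lem key alg} on the degenerate pairs, treating each outer vertex as a trivial one-vertex tree for part (ii), and ensuring that when the de-saturated edge lies in a $K_2$-tree $T_j$ the replacement edge goes to a distance-two neighbour as in Line~5 of \Cref{chordalalgorithm}. Your observations that a bridge of $E(T_i)$ cannot lie in a triangle of $G$ alone and that the forest only grows on the $T_i$-side match the paper's (equally terse) argument, so the proposal is correct and essentially identical in approach.
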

\begin{proof}
For convenience, we abuse \Cref{treesofchordaldefinition} and assume that each outer vertex of $G$ is a tree on one vertex. 

We first show \Cref{claim mohem 2}(i). 
    Let $T_i$ and $T_j$ be any two distinct trees of $G$. Using \Cref{procedure1}, we can assert that $|\edgesbetweentitj|\leq2$ and $|\edgesbetweentitj|=1$ only if $|V(T_j)|=2$ and $\removedcrossedge\in \edgesbetweentitj$ is the unique edge in $E'$ covering $\unsaturatededge_j\in E(T_j) $ in a triangle (exactly as described in Line 5 in \cref{chordalalgorithm}). 
Furthermore, $|\edgesbetweentitj|= 2$ only if $|V(T_i)|=|V(T_j)|=2$ and distinct edges $\removedcrossedge\in \edgesbetweentitj$ and $\removedcrossedge'\in \edgesbetweentitj$ are the unique edges covering $\unsaturatededge_i \in E(T_i)$ and $\unsaturatededge_j \in E(T_j)$ in triangles, respectively (exactly as described in Line 5 in \Cref{chordalalgorithm}). 
Suppose $T_i$ and $T_j$ violate the conditions mentioned above. We keep applying \Cref{procedure1} to $T_i$ and $T_j$ until the conditions are satisfied. If in some iteration $\ell$ the unsaturated edge $\unsaturatededge=\unsaturatededgeendpoints$ (see \cref{iterell}(i)) is in $E(T_j)$ with $|V(T_j)|=2$, then we ensure that the newly-added edge $\addededge$ is between $u$ (resp. $v$) and a distance-two neighbour $w$ of $u$ (resp. $v$) such that $w \notin V(T_i)$. Such a vertex $w$ always exists.

We now show \cref{claim mohem 2}(ii). 
If there exist a tree $T_i$ and an outer vertex $u$ that violate \cref{claim mohem 2}(ii), we let $T_j=u$ and keep applying \Cref{procedure1} to $T_i$ and $T_j$ until $|\edgesbetweentitj|=0$.

\end{proof}

\; \\
We summarize our result in the following.
\begin{thm}
    Let $G=(V,E)$ be a connected chordal graph on at least three vertices. \Cref{chordalalgorithm} produces an optimal completion set for $G$ in  $\bigoh(n+m)$ time, where $n=|V|$ and $m=|E|$.
    \label{thmchordal}
\end{thm}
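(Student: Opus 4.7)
The plan is to establish three things: correctness of the output, optimality of its size, and the $\bigoh(n+m)$ running time. Only the optimality step is nontrivial; correctness and timing reduce to bookkeeping once the trees $T_1,\dots,T_c$ have been extracted.

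First I would use chordality at the outset: any non-bridge edge of a chordal graph lies on a cycle, and any cycle of length at least four in a chordal graph has a chord, so every non-bridge edge already lies in a triangle of $G$. Hence the only edges \Cref{chordalalgorithm} needs to saturate are the tree edges $\bigcup_i E(T_i)$. Correctness then follows directly: for $|V(T_i)|\ge 3$, \Cref{proptreeoptimal} supplies a valid $(3,1)$-cover of $T_i$ via edges interior to $V(T_i)$; and for $|V(T_i)|=2$, say $E(T_i)=\{(u,v)\}$ with $u$ a boundary vertex, the neighbour $w\neq v$ of $u$ in $G$ (which exists because $u$ is boundary) yields the triangle $\{u,v,w\}$ that saturates $(u,v)$.

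The main obstacle is optimality. The plan is to start from an arbitrary optimal completion set $E^{*}$ of $G$ and, without changing its size, reshape it into a completion set $E^{**}$ whose interaction with the trees exactly matches the structure produced by \Cref{chordalalgorithm}. Concretely, I would apply \Cref{lem key alg} to each pair $(T_i,T_j)$ with $|V(T_i)|,|V(T_j)|\ge 3$ to eliminate all cross edges between them, and then invoke \Cref{claim mohem 2} to normalize the remaining cases: cross edges between a large tree and a $K_2$ tree, and edges touching outer vertices. The resulting $E^{**}$ has $|E^{**}|=|E^{*}|$ and decomposes into a completion set of each $T_i$ with $|V(T_i)|\ge 3$ (restricted to $V(T_i)$), plus exactly one covering edge of the Line~5 type per $K_2$ tree, and nothing else. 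Any tree on $n_i$ vertices needs at least $\lceil (n_i-1)/2\rceil$ completion edges by \Cref{corol}, a bound matched by \Cref{proptreeoptimal}; each $K_2$ tree clearly requires one edge. Summing these per-tree lower bounds yields $|E'|\le |E^{**}|=|E^{*}|$, so \Cref{chordalalgorithm} is optimal.

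Finally, for the running time, a single Tarjan bridge-finding DFS in $\bigoh(n+m)$ identifies every bridge, and with it the trees $T_1,\dots,T_c$ together with their boundary and outer vertices. Running the algorithm of \Cref{proptreeoptimal} on each $T_i$ costs $\bigoh(|V(T_i)|)$ and sums to $\bigoh(n)$; each $K_2$ tree is handled in $\bigoh(1)$ using the existing adjacency list to pick $w$. The bridge computation dominates, giving the claimed $\bigoh(n+m)$ overall bound.
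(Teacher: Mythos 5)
Your proposal is correct and follows essentially the same route as the paper: the paper likewise invokes \Cref{lem key alg} and \Cref{claim mohem 2} to localize an optimal completion set to the individual trees $T_1,\dots,T_c$ and then appeals to \Cref{proptreeoptimal} and Tarjan's bridge-finding algorithm for the per-tree optimality and the $\bigoh(n+m)$ running time. You merely spell out more explicitly the steps the paper leaves implicit (that chordality saturates all non-bridges, that $G[V(T_i)]=T_i$ so the per-tree lower bound of \Cref{corol} applies, and the summation of these bounds), which is a faithful expansion rather than a different argument.
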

\begin{proof}

Using \Cref{lem key alg} and \Cref{claim mohem 2}, we can reduce the $(3,1)$-cover problem on $G$ to solving the problem locally for each tree of $G$. Therefore, \Cref{chordalalgorithm} is optimal.  

As for the running time of \Cref{chordalalgorithm}, all trees of $G$ can be located by first identifying the non-bridges in $G$ in $\bigoh(n+m)$ time (using Tarjan's algorithm~\cite{tarjanbridge}) and removing them from $G$. The non-trivial components of the resulting graph correspond to the trees of $G$. For each tree $T_i$ of $G$, we can compute its optimal $(3,1)$-cover in $\bigoh(|V(T_i)|)$ time by \Cref{proptreeoptimal}. Thus, in $\bigoh(n+m)$ time, we can construct an optimal $(3,1)$-cover of $G$.
\end{proof}

\section{The $(k,1)$-Cover and the $(3,k-2)$-Cover Problem for Trees}
\label{sec4}
In this section, we present constant-factor approximation algorithms for the $(k,1)$-cover and the $(3,k-2)$-cover problems for $k\geq 5$ (\Cref{sec41}) and $k=4$ (\cref{sec42}).
Since every graph with a $(k,1)$-cover (with $k\geq 3$) trivially has a $(3,k-2)$-cover, the lower bound of \eqref{optlowerbound} also holds for any $k$-completion set. Therefore, in the remainder of this paper, we use this lower bound to prove our approximation ratios for trees.
\subsection{{An Approximation Algorithm for $k\geq 5$}}
\label{sec41}
The algorithm is presented in \cref{twoapproxprime}. In each iteration, \cref{twoapproxprime} extracts a maximal $k$-sub-forest from $T$, turns it into a $k$-clique by adding edges, and removes the edges of this sub-forest from $T$. It then repeats this procedure on the remaining forest until no edges are left. We say a forest is a \defin{$k$-forest} if it has at most $k$ vertices and no singleton components, i.e., all of its components are non-trivial. Let $G$ be any forest. We say $k$-forest $H$ is a \defin{maximal $k$-sub-forest} of $G$ if $H\subseteq G$ and there exists no $k$-forest $H'\neq H$ such that $H'\subseteq G$ and $H\subseteq H'$. We first make a remark on Line 5 of \Cref{twoapproxprime}.
\begin{remark}
    The maximal sub-forest $F_j$ mentioned in Line 5 of \cref{twoapproxprime} can be found in the following way. Initially, $F_j \xleftarrow[]{}\emptyset$. If $T \sm (E_0 \cup \dots \cup E_{j-1})$ has a tree $T_i$ with $|V(T_i)|\geq k$, then using a simple traversal, we can find a sub-tree of $T_i$ with exactly $k$ vertices (see \Cref{example1}). If for every tree $T_i$ of $T \sm (E_0 \cup \dots \cup E_{j-1})$ we have $|V(T_i)|<k$, then we set $F_j \xleftarrow{} F_j \cup T_i$ for some tree $T_i$ of $T \sm (E_0 \cup \dots \cup E_{j-1})$ with $|V(T_i)|>1$. We then recurse on $(T\sm (E_0 \cup \dots \cup E_{j-1}))\sm T_i$ and search for a maximal ($k -|V(T_i)|$)-sub-forest 
    of $(T\sm (E_0 \cup \dots \cup E_{j-1}))\sm T_i$ (see \Cref{example2}). This process is continued until $F_j$ cannot be extended further, i.e., it is maximal.
    \label{remarkalg}
\end{remark}
\begin{example}
    An example of the procedure described in \Cref{remarkalg} is depicted in \Cref{example1fig} for $k=7$. In \Cref{example1fig}, $T \sm (E_0 \cup \dots \cup E_{j-1})=T_1 \cup T_2$ (we ignore the singleton components), and $(T \sm (E_0 \cup \dots \cup E_{j-1})) \cap F_j$ is depicted in red. Since $T \sm (E_0 \cup \dots \cup E_{j-1})$ has a tree $T_1$ with $|V(T_1)|\geq 7$, then $F_j$ is set to be a sub-tree of $T_1$ on seven vertices. This sub-tree of $T_1$ can be found by applying any traversal algorithm (e.g., BFS) to $T_1$. 
    \label{example1}
\end{example}
  \begin{figure}[tbh]
    \centering
        {\includegraphics[scale=0.80]{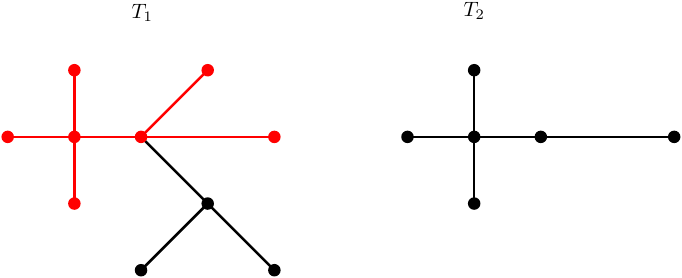}}
    \caption{An example of the procedure of \Cref{remarkalg} as described in \Cref{example1}. $T \sm (E_0 \cup \dots \cup E_{j-1})=T_1 \cup T_2$, and $(T \sm (E_0 \cup \dots \cup E_{j-1})) \cap F_j$ is depicted in red.} 
    \label{example1fig}
\end{figure}
\begin{example}
    Another example of the procedure of \Cref{remarkalg} for $k=7$ is depicted in \Cref{example2fig}. Similar to \Cref{example1}, $T \sm (E_0 \cup \dots \cup E_{j-1})=T_1 \cup T_2$ (ignoring the singleton components), and $(T \sm (E_0 \cup \dots \cup E_{j-1})) \cap F_j$ is depicted in red. Since we have $|V(T_1)|=|V(T_2)|<7$, in the first step, the procedure of \Cref{remarkalg} sets $F_j \xleftarrow[]{} F_j \cup T_1$. Then, the procedure looks for a forest with at most $k-|V(T_1)|=7-5=2$ vertices from $(T \sm (E_0 \cup \dots \cup E_{j-1}))\setminus T_1=T_2$. This is done by traversing $T_2$ and extracting a sub-tree on two vertices. 
    \label{example2} 
\end{example}
  \begin{figure}[tbh]
    \centering
        {\includegraphics[scale=0.75]{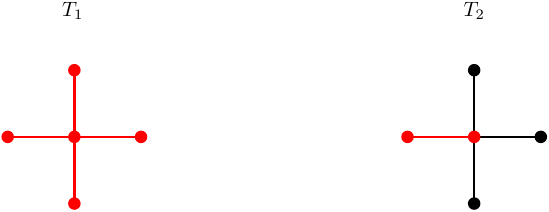}}
    \caption{An example of the procedure of \Cref{remarkalg} as described in \Cref{example2}. $T \sm (E_0 \cup \dots \cup E_{j-1})=T_1 \cup T_2$, and $(T \sm (E_0 \cup \dots \cup E_{j-1})) \cap F_j$ is depicted in red.} 
    \label{example2fig}
\end{figure}

\begin{algorithm}[H]
    \caption{The approximation algorithm of \cref{sec41}}
    \label{twoapproxprime}
    \begin{algorithmic}[1]
    \State \textbf{Input:} A tree $T=(V,E)$ and an integer $k \geq 5$ with $k \leq |V|$
    \State \textbf{Output:} A $k$-completion set $S$ of $T$.
    \State \textbf{Initialization:} $S \xleftarrow{} \emptyset$, $j \xleftarrow{} 1$, $E_0 \xleftarrow{} \emptyset$
   \While{$E \sm (E_0 \cup \dots \cup E_{j-1}) \neq \emptyset$}
   \State Find a maximal $k$-sub-forest with no singleton components (where every component is non-trivial) of $T \sm (E_0 \cup \dots \cup E_{j-1})$. Let  $F_j=(V_j,E_j)$ denote this sub-forest (see \Cref{remarkalg}). 

   \State Turn $F_j$ into a $k$-clique by adding edges and update $S$ accordingly. If $|V_j|<k$, let $V'_j$ be $k-|V_j|$ arbitrary vertices from $V \sm V_j$. Set $F_j \xleftarrow[]{} F_j\cup V'_j$, turn $F_j$ into a $k$-clique by adding edges, and update $S$ accordingly.
   \State $j \xleftarrow{}j+1$
   \EndWhile
  \State \Return $S$
    \end{algorithmic}
    \end{algorithm}
    \;
    We now show the correctness of \cref{twoapproxprime} in the following theorem.
    \begin{thm}
        Let $k\geq 5$ be any integer. \cref{twoapproxprime} is an $(\frac{8}{3})$-approximation for the $(k,1)$-cover and the $(3,k-2)$-cover problems when the input graph is any tree on at least $k$ vertices. 
        
        Furthermore, when the input graph is an $n$-vertex tree, \cref{twoapproxprime} runs in $\bigoh(n k)$ time.
        \label{thmtwoapprox}
    \end{thm}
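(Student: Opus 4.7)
Plan. The theorem asserts three things about \cref{twoapproxprime}: correctness of the returned set $S$, the $\tfrac{8}{3}$-approximation guarantee, and an $\bigoh(nk)$ running time. Correctness is immediate from the loop structure: the while condition forces every edge of $T$ into some $F_j$, and Line~6 completes $F_j \cup V'_j$ into a $K_k$ inside $T \cup S$. Therefore every edge of $T$ lies in some $K_k$ of $T \cup S$, so $S$ is a valid $k$-completion set; because each edge of any $K_k$ participates in $k-2$ triangles of that clique, $S$ is simultaneously a $(3, k-2)$-completion set.

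For the approximation bound, let $r$ be the total number of iterations. Iteration $j$ contributes $\binom{k}{2}-|E_j|$ edges to $S$ and $\sum_j |E_j| = n-1$, so $|S| = r\binom{k}{2} - (n-1)$. Combining this with $\opt \geq (n-1)(k-2)/2$ from \Cref{corol}, proving $|S| \leq \tfrac{8}{3}\opt$ is equivalent to
\[
r\binom{k}{2} \;\leq\; (n-1)\cdot\tfrac{4k-5}{3}.
\]
I will split iterations according to which branch of \Cref{remarkalg} produced $F_j$. Iterations in the first branch (some remaining component has $\geq k$ vertices) satisfy $|E_j| = k-1$, yielding per-edge cost $(k-2)/2$, which matches $\opt$ exactly. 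Iterations in the second branch produce $F_j$ as a disjoint union of whole components of the remaining forest with $|V_j| \leq k$, and a finite enumeration of the admissible $(|V_j|, c_j)$ patterns subject to maximality and the ``no singleton components'' constraint identifies those that individually exceed the $\tfrac{4(k-2)}{3}$ per-edge budget. The bad configurations involve very small $F_j$ (most prominently $F_j \cong K_2$, and for $k \geq 6$ also $F_j \cong 2K_2$ and $F_j \cong P_3$ alone) and are handled by amortization. The key structural fact is that once the algorithm enters the second branch it only removes whole components and never splits any, so components of size less than $k-1$ cannot reappear later. Combined with maximality --- which forces the other non-trivial components present at a bad iteration to have $\geq k-1$ vertices --- this caps the number of bad iterations by a constant per second-branch era, and guarantees that each bad iteration is either paired with a sufficiently large ``absorbing'' iteration (the tightest pairing being $F_j \cong K_2$ with a size-$(k-1)$ companion, which gives combined cost $(k-1)^2$ for $k-1$ covered edges, i.e., per-edge cost $k-1$, equal to $\tfrac{4(k-2)}{3}$ at $k=5$ and strictly smaller for $k \geq 6$) or is the very last iteration, in which case the finite residual cost is absorbed by the ratio-$1$ slack accumulated across the first-branch iterations.

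For the running time, building $F_j$ via the traversal of \Cref{remarkalg} takes $\bigoh(k)$ per iteration, and each vertex of $T$ enters such a traversal $\bigoh(1)$ times, so the traversal work totals $\bigoh(n)$. Emitting the $\binom{k}{2}-|E_j|$ new edges per iteration costs time proportional to $|S| = \bigoh(nk)$ by the approximation bound, so \cref{twoapproxprime} runs in $\bigoh(nk)$ time. I expect the delicate step to be the amortization for the bad iterations: rigorously ruling out repeat bad iterations within an era (which hinges on the monotone invariant that the second branch never regenerates small components), pairing each residual bad iteration with an appropriate absorbing neighbor, and handling the final-iteration corner case with a careful global accounting together constitute the bulk of the technical work.
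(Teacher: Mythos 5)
Your correctness and running-time arguments are fine, but the approximation bound is where the theorem lives, and there your proposal has a genuine gap. The amortization scheme is only sketched, and the structural invariant it rests on --- ``once the algorithm enters the second branch it only removes whole components and never splits any, so components of size less than $k-1$ cannot reappear later'' --- is false. Per \Cref{remarkalg}, after a whole component $T_i$ is absorbed into $F_j$ the procedure recurses looking for a maximal $(k-|V(T_i)|)$-sub-forest of what remains, and that recursive call may carve a proper subtree out of another component, splitting it and leaving behind arbitrarily small non-trivial pieces; \Cref{example2} in the paper exhibits exactly this ($k=7$: a $5$-vertex component is taken whole, then a $2$-vertex subtree is cut out of a second $5$-vertex component). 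So ``bad'' iterations are not confined to one per second-branch era, your claim that maximality forces the companion components at a bad iteration to have $\geq k-1$ vertices does not follow, and the pairing/absorbing bookkeeping is never actually carried out. As written, the proof does not close.

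The paper avoids all of this with a much blunter counting argument. Its key observation (\Cref{claimsizeej}) is that a maximal $k$-sub-forest with no singleton components has at least $\lceil k/2\rceil - [k \text{ odd}]$ edges whenever that many edges remain (a forest on $\leq k$ vertices with no singletons has at least half as many edges as vertices), so the number of iterations is at most roughly $2(n-1)/k + 1$. Plugging $\iteration$ into $\alg \leq \iteration\binom{k}{2}-(n-1)$ and comparing with $\opt \geq (n-1)(k-2)/2$ from \eqref{optlowerbound} gives the ratio directly, with a separate short computation for odd $k$ (using $|E_1|=k-1$) and for the boundary case $n-1<2k$. No per-iteration classification or charging is needed. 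If you want to salvage your route, you would need to replace the false invariant with an honest accounting of the small components that second-branch iterations can create; the simpler global bound is the path of least resistance.
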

    \begin{proof}
In this proof, we assume that the input graph is a tree $T=(V,E)$ with $|V|=n$. It is easy to see that for the returned set $S$ in Line 9, $T \cup S$ must necessarily have a $(k,1)$-cover. Therefore, in the remainder of the proof, we focus on proving the approximation ratio. 

Since the sub-forest $F_j$ in Line 5 of \cref{twoapproxprime} is a maximal one whose every component has at least two vertices, the following claim is easy to show.
\begin{claim}
Let $F_j$ be the maximal $k$-sub-forest in Line 5 of \cref{twoapproxprime} for any iteration $j$. The following statements are true.
\begin{enumerate}[(a)]
    \item If $k$ is even and $|E(T \sm (E_0 \cup \dots \cup E_{j-1}))|\geq \frac{k}{2}$, then $|E_j|\geq \frac{k}{2}$.
    \item  If $k$ is odd and  $|E(T \sm (E_0 \cup \dots \cup E_{j-1}))|\geq \frac{k-1}{2}$, then $|E_j|\geq \frac{k-1}{2}$.
\end{enumerate}
    \label{claimsizeej}
\end{claim}

We consider two cases for the approximation ratio.\\

\noindent\textbf{Case I:} {$k$ is even.} Suppose $|E|=n-1\geq2k$, we will return to the case with $|E|<2k$ later. Let $\alg$ denote the number of edges added to $T$ by \cref{twoapproxprime}, and let $\iteration$ denote the total number of iterations of the main loop of \cref{twoapproxprime}. During each iteration $j$, \cref{twoapproxprime} adds at most $\binom{k}{2}-|E_j|$ edges to $S$. We have

\begin{align}
    \alg &\leq \sum_{j=1}^{\iteration}\binom{k}{2}-|E_j|\nonumber\\
    &=\iteration \times \binom{k}{2}-\sum_{j=1}^{\iteration}|E_j|=\iteration \times \binom{k}{2}-|E|=\iteration\times \binom{k}{2}-(n-1) .\label{eq2}
\end{align}
Therefore, $\alg$ is maximized when $\iteration$ is maximized. Let $|E|=(n-1)= q\times (\frac{k}{2})+r$ with $0\leq r<\frac{k}{2}$ and $0\leq q$. Using \cref{claimsizeej}(a), for every iteration $j<\iteration$ we have $|E_j|\geq \frac{k}{2}$. Since at every iteration we process a subset of $E$ and the loop terminates when all edges are processed, then it is easy to see that $\iteration \leq q+1$ because at every iteration we process at least $\frac{k}{2}$ edges. Furthermore, it is easy to see that $\iteration \leq q$ when $r=0$ and $\iteration\leq q+1$ when $0<r<\frac{k}{2}$. From \eqref{optlowerbound} and $n-1=q\times \frac{k}{2}+r$ we get

\begin{equation}
    \opt\geq (n-1) \times \bigg(\frac{k-2}{2}\bigg)=\frac{q\times k\times (k-2)}{4}+r\times \frac{(k-2)}{2}
    \label{eq3}
\end{equation}
when $r=0$, $\iteration\leq q$ and \eqref{eq2} is maximized when $|E_j|=\frac{k}{2}$ for $q$ many iterations with 
$$\alg \leq q\times \bigg(\binom{k}{2}-\frac{k}{2}\bigg)=\frac{q\times k \times (k-2)}{2}= 2\times \opt \text{ (see \eqref{eq3})}$$ 
Now suppose $0<r<\frac{k}{2} $. We have $\iteration\leq q+1$ and \eqref{eq2} is maximized when $|E_j|=\frac{k}{2}$ for $q$ many iterations and $|E_j|=r$ for one iteration. Therefore, we have

\begin{align}
\alg &\leq q\times \bigg(\binom{k}{2}-\frac{k}{2}\bigg) + \binom{k}{2}-r=\frac{q\times k \times (k-2)}{2}+ \frac{k\times(k-1)}{2}-r \label{firsteq1}\\
&\leq \frac{2\times q\times k\times(k-2) +2\times k\times(k-1) -4\times r}{q\times k \times (k-2)+2\times r \times (k-2)}\times \opt \label{eq4}\\
&\leq \bigg(2+\frac{2\times(k-1)\times(k-2 \times r)}{(k-2)\times(k \times q+2 \times r)}\bigg)\times \opt \label{eq5}\\
&\leq  \bigg(2+\frac{2\times(k-1)}{(k-2)\times q}\bigg)\times \opt \label{eq5prime} \\
&\leq \bigg(2.625\bigg)\times \opt \label{eq5primeprime}
    \end{align}
      where \eqref{firsteq1} to \eqref{eq4} holds due to \eqref{eq3}. Moreover, \eqref{eq4} to \eqref{eq5} and \eqref{eq5} to \eqref{eq5prime} hold for any $q>0$, $k>5$, and $0< r < \frac{k}{2}$.  Since $n-1 \geq 2k$ by assumption, we have $q\geq 4$. Therefore, \eqref{eq5prime} to \eqref{eq5primeprime} holds for any $k\geq 6$ and $q\geq 4$. 
      
      {Now when $n-1< 2k$, note that the procedure described in Line 5 of \cref{twoapproxprime} for extracting maximal forests always returns a $k$-vertex tree in the first iteration, i.e., $|E_1|=k-1$. Therefore, when $n-1=|E|\leq 2\times k$, \cref{twoapproxprime} terminates in at most two iterations when $2\leq q< 3$ (with $\alg \leq 2\binom{k}{2}-(n-1)$), and in at most three iterations when $3\leq q<4$ (with $\alg\leq 3\binom{k}{2}-(n-1)$). In any case, it can be shown that $\alg\leq 2\opt$. We omit the proof of this last claim to avoid duplication. }

This concludes the proof for the case when $k$ is even.
      
    \noindent \textbf{Case II:} $k$ is odd.
  
We write $|E|$ as $|E|=(n-1)= q\times (\frac{k-1}{2})+r$ with $0\leq r<\frac{k-1}{2}$ and $0\leq q$. Similar to \eqref{eq3}, we have
\begin{equation}
    \opt \geq  \bigg(\frac{k-2}{2}\bigg)\times (n-1)=q\times \frac{(k-1)\times(k-2)}{4}+ r\times (\frac{k-2}{2})  
    \label{optodd}
\end{equation}

We tighten our analysis for this case. Note that the procedure described in Line 5 of \cref{twoapproxprime} for extracting maximal forests always returns a $k$-vertex tree in the first iteration, i.e., $|E_1|=k-1$. Therefore, $T\sm E(T_1)$ has exactly $q\times(\frac{k-1}{2})+r-(k-1)=(q-2)\times (\frac{k-1}{2})+r$ with $0\leq r<\frac{k-1}{2}$ and $q-2\geq 0$. Using \cref{claimsizeej}(b), \cref{twoapproxprime} runs for at most $q-2 +1$ iterations after the first iteration and in total $\iteration\leq q$. By setting $\iteration=q$ in \eqref{eq2} we get
\begin{align}
\alg \leq q\times \binom{k}{2} -(n-1)&=\frac{q\times k\times (k-1)}{2}-\bigg(q\times \frac{k-1}{2}+r\bigg)\nonumber\\
&\leq q\times \frac{(k-1)^2}{2} \label{eq7}
    \end{align}
Using \eqref{optodd} and \eqref{eq7} we get
\begin{align}
\alg &\leq \frac{2\times q\times (k-1)^2}{q\times(k-1)\times (k-2)+2\times r\times(k-2)}\times \opt\nonumber\\
&\leq \frac{2\times q\times (k-1)^2}{q\times(k-1)\times (k-2)}\times \opt \nonumber\\
&= \frac{2\times (k-1)}{(k-2)}\times \opt \nonumber \nonumber\\
&\leq \frac{8}{3}\times \opt \nonumber \nonumber
    \end{align}
where the last inequality holds for any $k\geq 5$.

To finalize the proof, we show that \cref{twoapproxprime} runs in $\bigoh(n k)$ time. From \cref{claimsizeej}, we know that there are at most $\bigoh(\frac{n}{k})$ iterations. During each iteration $j$, finding the maximal subgraph $F_j$ and building a $k$-clique  on $F_j$ can each be done in $\mathcal{O}(k^2)$ time. Therefore, the time complexity of \cref{twoapproxprime} is $\bigoh(\frac{n}{k}  k^2)=\bigoh(n k)$.
\end{proof}

\subsection{{A Better Approximation Algorithm for $k=4$}}
\label{sec42}
In this section, we present a $2$-approximation Algorithm for the $(4,1)$-cover and $(3,2)$-cover problems for trees. As seen in \cref{sec41}, {greedily} extracting sub-forests from the input tree $T$ results in an {$(\frac{8}{3})$-approximation} algorithm for the $(k,1)$-cover problem. In this section, we show that cutting each subgraph of $T$ more carefully improves this approximation ratio for the case of $(4,1)$-cover and $(3,2)$-cover problems. 

We now briefly describe our algorithm, which is presented in \cref{givefourapproxprime}. \cref{givefourapproxprime} extracts a 4-vertex sub-tree $T_j$ of $T$ at every step of the main loop (Line 5), turns $T_j$ into a 4-clique, and removes the edges of $T_j$ from $T$. If the biggest non-trivial component $T^{(j)}_{\operatorname{max}}$ of $T \sm E(T_j)$ has at least four vertices, \cref{givefourapproxprime} sets $T\xleftarrow{} T^{(j)}_{\operatorname{max}}$. Otherwise, it sets $T\xleftarrow{} \emptyset$ which terminates the main loop. Since each iteration of the main loop of \cref{givefourapproxprime} only recurses on the biggest non-trivial component of $T\sm E(T_j)$, the edges that are ignored by the main loop get stored in a forest $F$ (Line 24 and Line 26). When the main loop terminates, these edges are handled by applying \cref{twoapproxprime} to $F$ in Line 30 of \cref{givefourapproxprime}.

\scalebox{0.79}{ 
\begin{minipage}{1.1\textwidth} 
    \begin{algorithm}[H]
    \caption{The approximation algorithm of \cref{sec42}}
    \label{givefourapproxprime}
    \begin{algorithmic}[1]
    \State \textbf{Input:} A tree $T=(V,E)$ rooted at a vertex $r$ with $|V|\geq 4$
    \State \textbf{Output:} A $4$-completion set $S$ of $T$
    \State \textbf{Initialization:}
     $S \xleftarrow{} \emptyset$, $j \xleftarrow{} 1$, $F \xleftarrow{} \emptyset$ 
   \While{$E(T) \neq \emptyset$}
   \State Let $v_j$ be the deepest leaf in $T$ with the greatest number of siblings. Let $u$ be the parent of $v_j$. Find a 4-vertex sub-tree $T_j=(V_j,E_j)$ of $T$ in the following way. 
   \If{$v_j$ has at least two siblings}
   \State Let $v_1$ and $v_2$ be two such siblings. Set $T_j\xleftarrow{} T[\{v_j,v_1,v_2,u\}]$
   \EndIf
      \If{$v_j$ has exactly one sibling}
         \State Let $v_1$ be this sibling and let $w$ be the parent of $u$. Set $T_j\xleftarrow{} T[\{v_j,v_1,u,w\}]$
   \EndIf

    \If{$v_j$ has no siblings}
          \State Let $u,w, x$ be the immediate ancestors of $v_j$ (in the same order) 
    \If{$u$ has a sibling $u_1$}
    \State  Set $T_j\xleftarrow{} T[\{v_j,u,u_1,w\}]$
    \EndIf

        \If{$u$ has no siblings}
    \State  Set $T_j\xleftarrow{} T[\{v_j,u,w,x\}]$
    \EndIf
   \EndIf
   
   \State Turn $T_j$ into a $4$-clique by adding edges. Update $S$ accordingly. 
   \State Let $T^{(j)}_{\operatorname{max}}$ be the biggest non-trivial component of $T\sm E(T_j)$. If $T\sm E(T_j)$ has no non-trivial components, set $T^{(j)}_{\operatorname{max}} \xleftarrow{}\emptyset$.
   \If{$|V(T^{(j)}_{\operatorname{max}})|\geq 4$}
   \State $F\xleftarrow{} F \cup (E(T)\sm (E(T_j)\cup E(T^{(j)}_{\operatorname{max}})))$ and $T \xleftarrow[]{} T^{(j)}_{\operatorname{max}}$ 
   \Else
   \State $F\xleftarrow{}F\cup ( E(T) \sm E(T_j))$ and $T\xleftarrow{}\emptyset$ 
   \EndIf
   \State $j \xleftarrow{}j+1$
   \EndWhile
   \State Run the main loop of \cref{twoapproxprime} on $F$ with $k=4$ and update $S$ accordingly.   
  \State \Return $S$
    \end{algorithmic}
    \end{algorithm}
\end{minipage}}
    
     Before proving the correctness, we make one final note on Line 5 of \cref{givefourapproxprime}.
We assume that the input tree $T$ is rooted at an arbitrary node $r$, and each node except $r$ has a parent. In Line 5, we first find the deepest leaf $v_j$ of the current tree $T$ or the furthest node from $r$ in $T$. If there are two or more such leaves, we pick the one with the most siblings. The tree $T_j$ is obtained by the conditions in Lines 6 to 20 of \cref{givefourapproxprime}. The next lemma is essential for proving the approximation ratio.
    \begin{lemma}
After selecting $T_j$ in any iteration $j$ of \cref{givefourapproxprime}, $T \sm E(T_j)$ has at most two non-trivial components. Furthermore, if $T \sm E(T_j)$ has exactly two non-trivial components, it has at least one component isomorphic to $K_2$.
\label{lemmakeygivefour}
    \end{lemma}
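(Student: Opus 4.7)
The plan is to case-split on how $T_j$ is constructed in Lines 6--20 of \Cref{givefourapproxprime}, exploiting the key property of the selection of $v_j$: it is a leaf at maximum depth $d$ with the greatest number of siblings among all such leaves. In particular, when $v_j$ has no siblings, every other leaf at depth $d$ must also be the unique child of its parent, since otherwise a leaf at depth $d$ with at least one sibling would have been available and preferred.

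The two ``sibling'' cases are essentially immediate. If $v_j$ has at least two siblings (Lines 6--8), the three edges of $T_j$ are $(v_j,u)$, $(v_1,u)$, $(v_2,u)$; deleting them only detaches the three leaves $v_j, v_1, v_2$ from $u$, while $u$ stays connected to its parent (if any) and to any remaining siblings of $v_j$, yielding at most one non-trivial component. If $v_j$ has exactly one sibling (Lines 9--11), then $u$'s only children are $v_j$ and $v_1$, so removing $E(T_j) = \{(v_j,u),(v_1,u),(u,w)\}$ isolates $v_j, v_1, u$ and leaves only the component containing $w$ as possibly non-trivial.

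The heart of the argument is the case where $v_j$ has no siblings. If $u$ has a sibling $u_1$ (Lines 13--15), I claim the subtree of $T$ rooted at $u_1$ has at most two vertices. Indeed, $u_1$ sits at depth $d-1$, any descendant of $u_1$ must be a leaf at depth $d$, and by the choice of $v_j$ every leaf at depth $d$ has no siblings, so $u_1$ has at most one child, which must be a leaf. After deleting $E(T_j) = \{(v_j,u),(u,w),(u_1,w)\}$, the vertices $v_j$ and $u$ become isolated, the subtree at $u_1$ (of size at most two) is detached from $w$, and the remaining component contains $w$ together with whatever lies above and beside it. Hence at most two non-trivial components survive, and whenever the $u_1$-side is non-trivial it is exactly a $K_2$. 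In the remaining case where $u$ also has no siblings (Lines 16--18), $w$ has $u$ as its only child, so removing $(u,w)$ and $(w,x)$ isolates $v_j, u, w$ simultaneously and leaves only the component containing $x$ as potentially non-trivial.

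The main obstacle is the structural observation used in the $u$-has-a-sibling sub-case, namely that $u_1$'s subtree has at most two vertices. This is precisely the point where the ``greatest number of siblings'' tiebreaker in the selection of $v_j$ is essential: without it, $u_1$ could host arbitrarily many leaf children, which would break both the two-component bound and the $K_2$ guarantee. Once this observation is in place, combining the four cases gives the claimed bound of at most two non-trivial components, with a $K_2$ whenever exactly two arise.
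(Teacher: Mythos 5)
Your proposal is correct and follows essentially the same case analysis as the paper's proof, including the key observation that the maximal-depth and most-siblings tiebreaker in the choice of $v_j$ forces the sibling $u_1$ of $u$ to have at most one child, so that its detached subtree is at worst a $K_2$.
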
 
    \begin{proof}
        Let $v_j$ be the chosen leaf in Line 5 of iteration $j$. There are only a few cases to consider as described in Lines 6 to 20 of \cref{givefourapproxprime}.  If the two immediate ancestors of $v_j$ are of degree two in $T$, then $T_j$ is a path on four vertices containing $v_j$ and its ancestors (\cref{lem5fig}(a)) and $T \sm E(T_j)$ has at most one non-trivial component $T^{(j)}_{\operatorname{max}}$ (this corresponds to Lines 17 to 19 of \cref{givefourapproxprime}). If $v_j$ has at least two other siblings in $T$, then $T_j$ is isomorphic to $K_{1,3}$ where the middle vertex is the parent of $v_j$ and the other vertices are the siblings of $v_j$ (\cref{lem5fig}(b)) and $T \sm E(T_j)$ has at most one non-trivial component $T^{(j)}_{\operatorname{max}}$ (this corresponds to Lines 6 to 8 of \cref{givefourapproxprime}). Similarly, if $v_j$ has only one sibling, $T_j$ will be isomorphic to $K_{1,3}$ with $v_j$'s parent in the middle and two other vertices including the other sibling of $v_j$ and the second ancestor of $v_j$ (\cref{lem5fig}(c)) (this corresponds to Lines 9 to 11 of \cref{givefourapproxprime}). 

        We now focus on Lines 14 to 16 of \cref{givefourapproxprime}. Let $w$ denote the second ancestor of $v_j$ in $T$. The cases where $v_j$ has no siblings but $u$ has at least one sibling are depicted in \cref{lem5fig}(d) and \cref{lem5fig}(e). Let $u_1$ denote the other child of $w$ (other than $u$). Since $v_j$ is the deepest leaf of $T$, the sub-tree rooted at $u_1$ is of depth at most one. Furthermore, since $v_j$ is the deepest leaf with the most siblings (Line 5 of \cref{givefourapproxprime}), $u_1$ can have at most one child. If $u_1$ has no children, then $T_j$ is isomorphic to a path on three vertices containing $v_j$, $u$, $w$ and $u_1$ and $T \sm E(T_j)$ has at most one non-trivial component $T^{(j)}_{\operatorname{max}}$ (\cref{lem5fig}(d)). Similarly, if $u_1$ has exactly one child, then $T_j$ will have a similar structure and $T \sm E(T_j)$ will have at most two non-trivial components $T^{(j)}_{\operatorname{max}}$ and $T'_j$, where $T'_j$ is isomorphic to $K_2$ (\cref{lem5fig}(e)).
    \end{proof} 
    
              \begin{figure}[H]
    \centering
        \subfloat[]{{\includegraphics[scale=0.60]{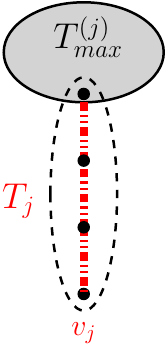}}}
        \qquad
        \subfloat[]{{\includegraphics[scale=0.60]{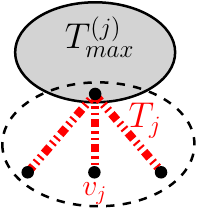}}}
        \qquad
         \subfloat[]{{\includegraphics[scale=0.60]{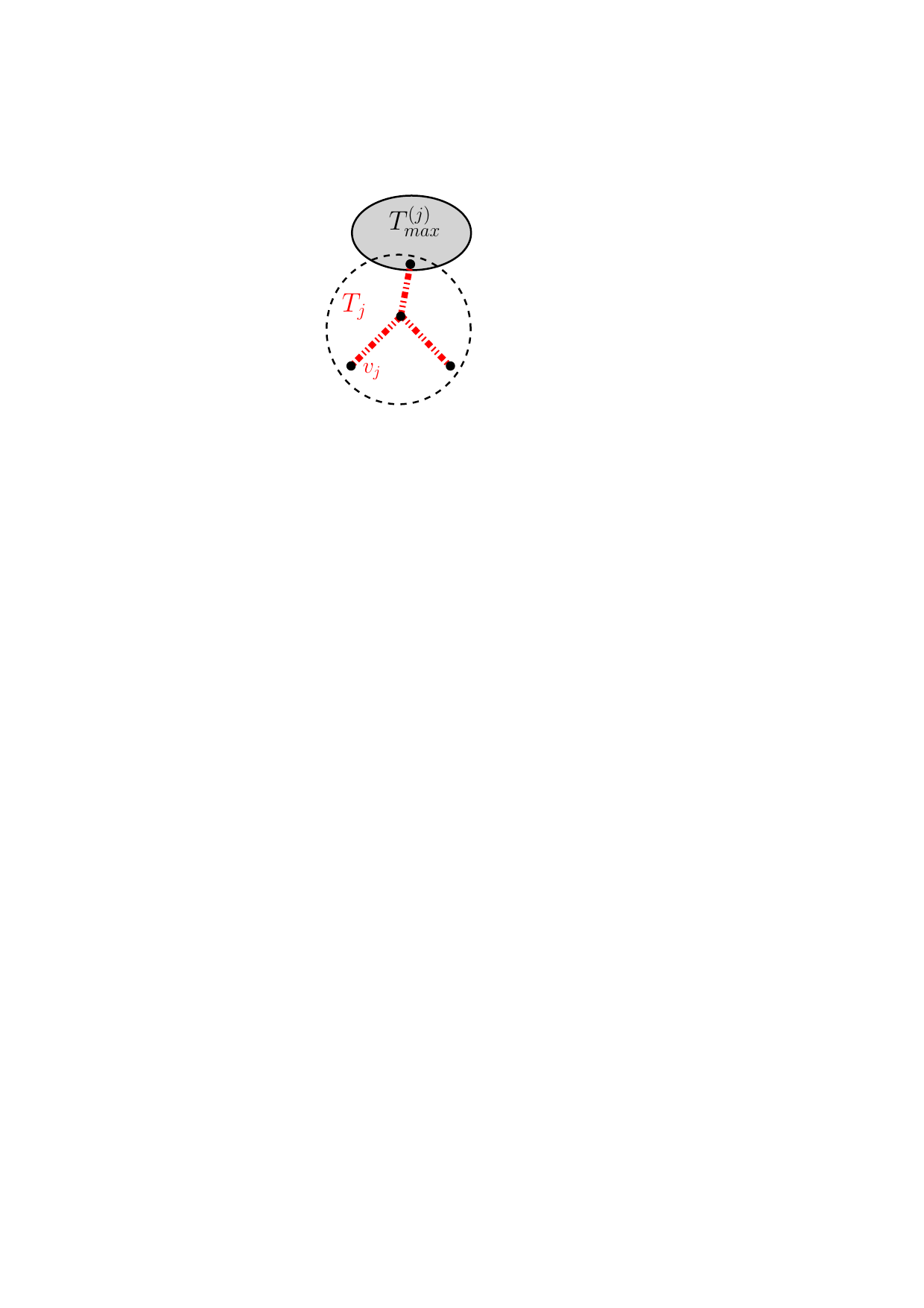}}}\qquad
          \subfloat[]{{\includegraphics[scale=0.60]{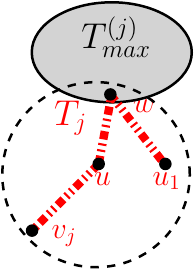}}}
          \qquad
        \subfloat[]{{\includegraphics[scale=0.60]{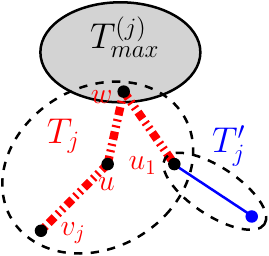}}}
    \caption{An illustration of the proof of \cref{lemmakeygivefour}.  }
    \label{lem5fig}
\end{figure}
In the next lemma, we characterize the forest $F$ right after the loop of Lines 4 to 29 of \cref{givefourapproxprime}. Recall that $P_k$ denotes a path on $k$ vertices.

\begin{lemma}
Once the main loop of Lines 4 to 29 of \cref{givefourapproxprime} terminates, the non-trivial components of $F$ are isomorphic to a disjoint union of $K_2$'s, with at most one $P_3$. 
    \label{matchingforest}
\end{lemma}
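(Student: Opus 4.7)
The plan is to trace, iteration by iteration, what the main loop of \cref{givefourapproxprime} appends to $F$, relying on the structural bound furnished by \cref{lemmakeygivefour} that $T \sm E(T_j)$ has at most two non-trivial components and, when two exist, one of them is isomorphic to $K_2$. My first move would be to establish vertex-disjointness across iterations: since the algorithm recurses on $T^{(j)}_{\operatorname{max}}$, which is a single connected component of $T \sm E(T_j)$, any ``other'' components appended to $F$ in iteration $j$ are vertex-disjoint from $V(T^{(j)}_{\operatorname{max}})$, while every subsequent iteration operates entirely inside $V(T^{(j)}_{\operatorname{max}})$. Consequently the contributions to $F$ from distinct iterations never share a vertex, so I can reason about each iteration in isolation and simply take the disjoint union at the end.

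Next I would split into two cases according to the test on Line~23. In a \emph{non-final} iteration we have $|V(T^{(j)}_{\operatorname{max}})| \geq 4$, and Line~24 appends exactly the non-$T^{(j)}_{\operatorname{max}}$ non-trivial components of $T \sm E(T_j)$ to $F$. By \cref{lemmakeygivefour} there is at most one such ``other'' non-trivial component; moreover, since $T^{(j)}_{\operatorname{max}}$ has at least four vertices, the $K_2$ guaranteed by \cref{lemmakeygivefour} in the two-component case must \emph{be} that other component. Hence a non-final iteration contributes at most a single $K_2$ to $F$.

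In the \emph{final} iteration, handled by the else branch on Line~26, every edge of $T \sm E(T_j)$ is appended to $F$. By \cref{lemmakeygivefour} there are again at most two non-trivial components, with at least one being $K_2$ when two are present, and both now have fewer than four vertices (since the else branch is entered only when $|V(T^{(j)}_{\operatorname{max}})| < 4$). A non-trivial sub-tree on at most three vertices is isomorphic to either $K_2$ or $P_3$, so the final iteration contributes at most one $P_3$ together with at most one $K_2$. Combining this with the earlier vertex-disjointness observation yields the desired characterization of $F$.

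The main obstacle is essentially cosmetic: I need to handle the degenerate sub-cases cleanly, for example when $T^{(j)}_{\operatorname{max}} = \emptyset$ (so the else branch contributes nothing to $F$), when only a single non-trivial component of $T \sm E(T_j)$ is present, or when the final iteration produces two $K_2$'s rather than a $P_3$ and a $K_2$. Each sub-case still fits the template of $K_2$'s plus at most one $P_3$, so the whole argument reduces to a short case analysis once \cref{lemmakeygivefour} has been invoked.
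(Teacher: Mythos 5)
Your proposal is correct and follows essentially the same route as the paper's own (much terser) proof: both rest entirely on \cref{lemmakeygivefour}, observing that each non-final iteration deposits at most one $K_2$ into $F$ while the final iteration, entered only when $|V(T^{(j)}_{\operatorname{max}})|<4$, adds components on at most three vertices, hence at most one $P_3$. Your explicit vertex-disjointness argument and case analysis merely spell out details the paper leaves implicit.
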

\begin{proof}
   From \cref{lemmakeygivefour} recall that each iteration of the main loop of \cref{givefourapproxprime} leaves at most one non-trivial component uncovered, as depicted in \cref{lem5fig}(e). Therefore, for every iteration except the last one, the non-trivial components of $F$ are isomorphic to a disjoint union of $K_2$'s. Furthermore, since the main loop continues until $|V(T)|\geq 4$, if the else condition in Line 25 of the last iteration is satisfied, then $F$ will have another $K_2$ or a $P_3$.
\end{proof}
We are now ready to prove the correctness of \cref{givefourapproxprime}.
\begin{thm}
        \cref{givefourapproxprime} is a 2-approximation algorithm for the $(4,1)$-cover and the $(3,2)$-cover problems when the input is an $n$-vertex tree $T=(V,E)$. Moreover, \cref{givefourapproxprime} terminates in $\bigoh(n\log n)$ time. 
    \label{thmgivefour}
\end{thm}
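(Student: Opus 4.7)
The plan is to bound $|S|$ above using the contributions from the two phases of the algorithm (the main loop, Lines 4--29, and the residual call to \Cref{twoapproxprime} in Line 30) and to compare this bound against $\opt \geq n-1$, which is obtained by specializing \Cref{corol} to $k=4$. For correctness, I would first observe that every edge of $T$ is either consumed inside some $4$-vertex sub-tree $T_j$ (which Line 21 turns into $K_4$) or is deferred to the residual forest $F$ via Lines 24 and 26; the call at Line 30 then covers every edge of $F$ inside a $K_4$ as well, so every edge of the original $T$ ends up inside some $K_4$ in $T\cup S$, yielding a $(4,1)$-cover (and hence a $(3,2)$-cover).

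Let $J$ denote the number of iterations of the main loop and set $f \coloneqq |E(F)|$. Each iteration adds exactly $\binom{4}{2}-3 = 3$ edges to $S$, so the main loop contributes exactly $3J$ edges. Since every edge of $T$ is either in some $E(T_j)$ or has migrated to $F$, we have $n-1 = 3J + f$. The heart of the argument is bounding $|S_F|$, the number of edges added inside Line 30. By \Cref{matchingforest}, $F$ is a disjoint union of $K_2$-components together with at most one $P_3$; write $m$ for the number of $K_2$-components. Running \Cref{twoapproxprime} on such an $F$ with $k=4$ merges pairs of $K_2$'s into $4$-vertex sub-forests (adding $\binom{4}{2}-2=4$ edges per pair), pads a lone $K_2$ by two extra vertices (adding $5$ edges), and pads a lone $P_3$ by one extra vertex (adding $4$ edges). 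A short case analysis on the parity of $m$ and the presence/absence of a $P_3$ shows that in every case
\begin{equation*}
|S_F| \;\leq\; 2f + 3,
\end{equation*}
with the $+3$ slack arising exactly when $m$ is odd.

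Putting the pieces together, and using $J\geq 1$ because $|V|\geq 4$ guarantees that the main loop executes at least once, I obtain
\begin{equation*}
|S| \;=\; 3J + |S_F| \;\leq\; 3J + 2f + 3 \;\leq\; 6J + 2f \;=\; 2(n-1) \;\leq\; 2\opt,
\end{equation*}
which is the claimed $2$-approximation. For the $\bigoh(n\log n)$ running time, my plan is to maintain a priority queue of the current leaves of $T$ keyed lexicographically by (depth, number of siblings), so that Line 5 costs $\bigoh(\log n)$ per iteration; the selection of $T_j$ is $\bigoh(1)$ local work at the chosen leaf, and because \Cref{lemmakeygivefour} bounds the discarded non-maximal component to at most two vertices, the update that replaces $T$ by $T^{(j)}_{\operatorname{max}}$ and refreshes the priority queue can be done with $\bigoh(\log n)$ amortized work by maintaining subtree sizes. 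With at most $\bigoh(n)$ main-loop iterations and at most $\bigoh(n)$ iterations inside the residual call to \Cref{twoapproxprime} on the very simple forest $F$, the total running time is $\bigoh(n\log n)$.

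The main obstacle I anticipate is the case analysis establishing $|S_F| \leq 2f + 3$, and in particular checking the ``$P_3$ plus an odd number of $K_2$'s'' sub-case: because a $P_3$ has $3$ vertices it cannot be combined with a $K_2$ into a single $4$-vertex sub-forest, which forces an extra padding step in \Cref{twoapproxprime} and so costs the full $+3$ slack. That slack is exactly absorbed by the $3J \geq 3$ term, which is why the precondition $|V|\geq 4$ is essential --- this is the observation that ties the whole argument together.
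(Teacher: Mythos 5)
Your argument is correct, and the approximation-ratio part takes a genuinely different and more economical route than the paper's. The paper first identifies the worst-case shape of the execution (its \Cref{claim6}: every first-phase iteration leaves a $K_2$ behind, as in the last case of \Cref{lemmakeygivefour}), then writes $n-1=4q+r$, $q=2q'+r'$ and runs an eight-way case analysis on $(r,r')$ to bound $\alg/\opt$ by $\tfrac{5}{4}+O(1/q')$, which is then checked to be at most $2$ in every case. You instead avoid the worst-case characterization entirely: you split the cost as $\alg\le 3J+|S_F|$, use \Cref{matchingforest} to prove the clean inequality $|S_F|\le 2f+3$ (your four-way parity check on the number of $K_2$'s and the presence of a $P_3$ matches what \Cref{twoapproxprime} actually does on such a forest, and equals the paper's accounting $6\lceil f/2\rceil-f$), and absorb the $+3$ into $3J\ge 3$ to get $\alg\le 2(3J+f)=2(n-1)\le 2\opt$ directly from \Cref{corol}. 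What your approach buys is brevity and transparency about where the slack lives (the single odd leftover in the second phase versus the guaranteed first iteration); what it gives up is the sharper conclusion the paper extracts from its finer analysis, namely that the ratio is actually $\tfrac{5}{4}+o(1)$, which is the content of the remark following the theorem and cannot be recovered from the coarse bound $|S_F|\le 2f+3$ alone. Two minor points: your displayed identity $|S|=3J+|S_F|$ should be an inequality $|S|\le 3J+|S_F|$ (distinct iterations may in principle propose the same non-edge, which only helps you); and your running-time sketch with a single priority queue keyed by (depth, number of siblings) is a legitimate variant of the paper's array-of-max-heaps-per-depth structure and yields the same $\bigoh(n\log n)$ bound.
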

\begin{proof}
    Note that once \cref{givefourapproxprime} terminates, all edges of $T \cup S$ are in $4$-cliques (and trivially at least two triangles). 
    
    We start by proving the time complexity. The deepest leaf with the most siblings (Line 5 in \Cref{givefourapproxprime}) can be found using the following data structure. We maintain an array \( A \), where \( A[i] \) contains a max-heap that stores the non-leaf nodes at depth \( i \). The nodes in each heap are organized based on the number of children they have. In the first iteration, we locate the last non-empty cell of \( A \) and extract a node \( u \) from its heap. Any child \( v \) of \( u \) is the deepest leaf with the maximum number of siblings. After extracting \( T_j \), some nodes may become singletons or leaves. These nodes are deleted from their corresponding heaps. Moreover, if some non-leaf node

Note that we never change the depth of any node during the algorithm. If the number of children of a node changes, we update the corresponding heap in \( \bigoh(\log n) \) time. The overall time complexity of the algorithm is \( \bigoh(n \log n) \). This is achieved because there are at most \( n \) iterations, and each heap operation requires \( \bigoh(\log n) \) time for a constant number of heap changes per iteration.

    We now prove the approximation ratio. Similar to the proof of \cref{thmtwoapprox}, let $\opt$ denote the size of any optimal solution to the $(4,1)$-cover and $(3,2)$-cover problems, and we denote $|S|$ by $\alg$. Furthermore, let $\iteration$ denote the total number of iterations of \cref{givefourapproxprime}, that is the number of iterations of the loop of Lines 4 to 29, plus the number of iterations of \cref{twoapproxprime} as invoked in Line 30 of \cref{givefourapproxprime}. For each such iteration $j \in \{1, \dots,  \iteration\}$, we denote by $E_j$ the set of edges \textit{covered} in that iteration, see Line 5 of \cref{givefourapproxprime} and Line 5 of \cref{twoapproxprime}. In the remainder of this proof, we refer to the loop in Lines 4 to 29 of \cref{givefourapproxprime} as \textit{the first phase} of \cref{givefourapproxprime}, and the loop of \cref{twoapproxprime} as invoked on Line 30 of \cref{givefourapproxprime} as \textit{the second phase} of \cref{givefourapproxprime}. 
    
    We have the following claims.
    \begin{claim}
        For every iteration $j$ of the first phase of \cref{givefourapproxprime}, we have $|E_j|=3$. One of the iterations of the second phase may have $|E_{j}|=1$, and for every other iteration,  $|E_j|=2$.

        \label{claim4}
    \end{claim}
    \begin{proof}
Immediate from \cref{matchingforest} and \cref{twoapproxprime}.
    \end{proof}
    \begin{claim}
    Let $T$ be any tree input to \cref{givefourapproxprime} and let $\iteration$ be the total number of iterations. Then, $\iteration$ is maximized when the number of iterations of the first phase is minimized. 
        \label{claim5}
    \end{claim}
    \begin{proof}
        Let $\iteration_1$ and $\iteration_2$ denote the number of iterations of the first and second phases of \cref{givefourapproxprime}, respectively. Since in the first phase we have $|E_j|=3$ for every iteration $j$ (\cref{claim4}), \cref{givefourapproxprime} covers $3\times \iteration_1$ edges in the first phase. By the second part of \cref{claim4}, we have $\iteration_2=\bigg\lceil \frac{|E|-3\times \iteration_1}{2}\bigg\rceil$. It follows that $\iteration=\iteration_1+\iteration_2=\iteration_1 + \bigg\lceil \frac{|E|-3\times \iteration_1}{2}\bigg\rceil$ and it is easy to see that for any value of $|E|$, $\iteration$ is maximized when $\iteration_1$ is minimized.  
    \end{proof}
    The following observation states that for any $n$-vertex tree, the first phase has at least $\big \lfloor \frac{n-1}{4}\big \rfloor$ iterations. 
    \begin{observation}
   Let $\iteration_1$ and $\iteration_2$ denote the number of iterations of the first and the second phases of \cref{givefourapproxprime}, respectively. For any tree $n$-vertex tree $T$, let $n-1= q \times 4 +r$ with $0\leq r<4$ and $0 \leq q$. Then, $q\leq \iteration_1$ if $r<3$ and $q+1\leq \iteration$ if $r=3$. Furthermore, these lower bounds can be achieved if every iteration of the first phase except at most one (the last one) results in the case described in \cref{lemmakeygivefour} and depicted in \cref{lem5fig}(e).
        \label{obs5}
    \end{observation}
    We present one final claim.
    \begin{claim}
        The maximum value of $\alg$ for any $n$-vertex tree is obtained when every iteration of the first phase results in the case depicted in \cref{lem5fig}(e).
\label{claim6}
    \end{claim}
    \begin{proof}
    Let $\iteration_1$ and $\iteration_2$ denote the total number of iterations of the first and second phases, respectively. Similar to the proof of \cref{thmtwoapprox}, we can write
    \begin{align}
    \alg &\leq \sum_{j=1}^{\iteration}\binom{4}{2}-|E_j|\nonumber\\
    &=6\iteration  -\sum_{j=1}^{\iteration}|E_j|=6\iteration-|E|=6\iteration-(n-1) \label{eq2prime}
\end{align}
where $\iteration=\iteration_1+\iteration_2$. Therefore, $\alg$ is maximized when $\iteration$ is maximized. Using \cref{claim5}, $\iteration$ is maximized when $\iteration_1$ is minimized. By \cref{obs5}, the minimum value of $\iteration_1$ over all $n$-vertex trees is attained when the case in \cref{lem5fig}(e) happens at every iteration of the first phase. 
    \end{proof}
    We now prove the approximation ratio by proving an upper bound on $\alg$ when the input is any tree $T=(V,E)$. Since the worst-case scenario for $\alg$ is valid for any $n$-vertex tree, we only count the number of edges added in the situation described in \cref{claim6}. Let us write
    \begin{equation}
        n-1=4\times q +r, \; 0\leq r<4, \;0\leq q
        \label{eq9}
    \end{equation}
\begin{equation}
    q=2\times q'+r', \; 0 \leq r'<2,\; 0 \leq q'
    \label{eq10}
\end{equation}
By \eqref{eq9}, \eqref{eq10}, and \eqref{optlowerbound} we have
\begin{equation}
    \opt \geq n-1= 8\times q'+ 4\times r'+r
\end{equation}
Based on the values of $0\leq r<4 $ and $0\leq r'<2$, there are a total of eight cases to consider. Here, we only consider three interesting cases. The remaining cases can be shown analogously.

\noindent \textbf{Case I:} $r'=0$ and $r=0$. In this case, \cref{givefourapproxprime} adds $3 \times q$ edges to $S$ in the first phase (because the first phase has $q$ many iterations $j$ with $|E_j|=3$), and at most $4\times q'$ edges in the second phase (because the second phase has $q'$ many iterations $j$ with $|E_j|=2$). We have
$$\frac{\alg}{\opt} \leq \frac{3\times q +4\times q'}{8\times q'}= \frac{6\times q' +4\times q'}{8\times q'}=\frac{5}{4}$$

\noindent \textbf{Case II:} $r'=0$ and $r=1$. In this case, \cref{givefourapproxprime} adds $3 \times q$ edges to $S$ in the first phase, and at most $4\times q'+5 $ edges in the second phase. We have
$$\frac{\alg}{\opt} \leq \frac{3\times q +4\times q'+5}{8\times q'+1}= \frac{10\times q' +5}{8\times q'+1}\leq \frac{5}{4}+ \frac{15}{4\times(8 \times q'+1)}\leq \frac{5}{4}+\frac{15}{36}$$\
where the last inequality holds for any $q'\geq 1$. If $q'=0$, then from \eqref{eq10} we have $q=0$ and from \eqref{eq9} we deduce $n-1=1$, a contraction since $n-1\geq 3$.

\noindent \textbf{Case III:} $r'=0$ and $r=2$. In this case, \cref{givefourapproxprime} adds $3 \times q$ edges to $S$ in the first phase, and $4\times (q'+1)$ edges in the second phase. We have
$$\frac{\alg}{\opt} \leq \frac{3\times q +4\times q'+4}{8\times q'+2}= \frac{10\times q' +4}{8\times q'+2}\leq \frac{5}{4}+ \frac{3}{4\times(4 \times q'+1)}\leq 2$$
where the last inequality holds for $q'\geq 0$.

For every other case not described above, we always have $\frac{\alg}{\opt} \leq \frac{5}{4}+ \frac{a}{b\times (4\times q' +c)}$ for some constants $a$, $b$, and $c$, and by setting $q'=0$ it is observed that the ratio is bounded by two.
\end{proof}
\begin{remark}
     Since in all cases of the proof of \Cref{thmgivefour} the ratio is bounded by $\frac{\alg}{\opt} \leq \frac{5}{4}+ \frac{a}{b\times (4\times q' +c)}$ for $q'\in \Theta(n)$ and constants $a$, $b$, and $c$, \cref{givefourapproxprime} is a $(\frac{5}{4}+o(1))$-approximation algorithm for the $(4,1)$-cover and the $(3,2)$-cover problems. For example, when $n\geq 500$, \Cref{givefourapproxprime} is a (1.26)-approximation algorithm. 
\end{remark}
\begin{remark}
To prove the approximation ratio in \cref{thmgivefour}, we compared the worst-case scenario of \cref{givefourapproxprime} (as stated in \cref{claim6}) with the best possible graph-theoretic bounds as stated in \eqref{optlowerbound}. For every $n$, there exists an $n$-vertex tree $T$ such that inputting $T$ to \cref{givefourapproxprime} can result in the worst-case scenario of \cref{claim6}. Consider the spider graph with the central vertex $v$ as the root. If $n-1$ is even, we add $\frac{n-1}{2}$ legs to $v$, each with two edges. If $n-1$ is odd, we add $\lfloor \frac{n-1}{2}\rfloor$ legs of length two and one leg of length one to $v$. An example for $n=15$ is depicted in \cref{fig3} with each $v_j$ and $T_j$ highlighted (see Line 5 of \cref{givefourapproxprime}).

\end{remark}\label{remark4}
     \begin{figure}[H]
    \centering
        {\includegraphics[scale=0.7]{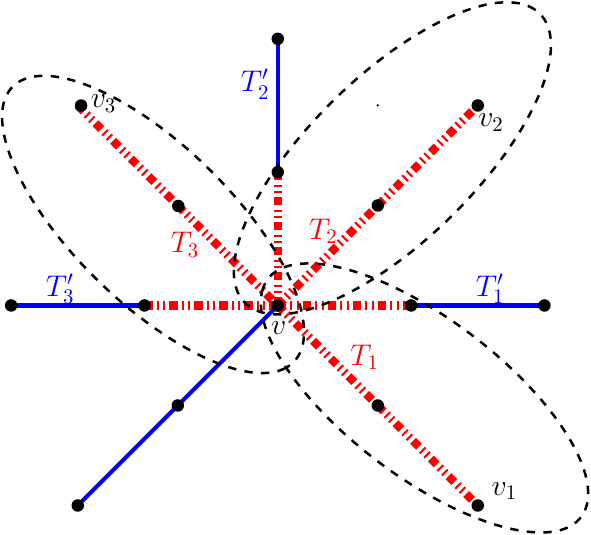}}
    \caption{ The example described in \cref{remark4}.} 
    \label{fig3}
\end{figure}
\section{Conclusion}
\label{conclude}
In this paper, we studied the algorithmic version of the $(k,1)$-cover problem. We proved that the $(k,1)$-cover problem is $\mathbb{NP}$-complete for general graphs.
However, we showed that for chordal graphs, the $(3,1)$-cover problem can be solved in polynomial time. Specifically, we provided an algorithm that runs in $\mathcal{O}(n + m)$ time, where $n$ and $m$ are the number of vertices and edges, respectively.
For the class of trees and general values of $k$, we showed that the $(k,1)$-cover problem remains $\mathbb{NP}$-hard, even for spiders. Moreover, we presented an $\frac{8}{3}$-approximation algorithm for both the $(k,1)$-cover and the $(3, k-2)$-cover problems, which runs in $\mathcal{O}(nk)$ time for trees, where $k \geq 5$ and for $k=4$, a 2-approximation algorithm that runs in $\bigoh(n\log n)$ time.

We close the paper with some open questions.

\begin{itemize}
    \item     Is there an approximation algorithm for the $(k,1)$-cover problem on general graphs with a non-trivial approximation ratio? 
    \label{problem1}
    \item   Can the approximation ratios of \Cref{thmtwoapprox} and \Cref{thmgivefour} be improved?
    \item For what classes of graphs can we solve the $(k,1)$-cover problem optimally?
\end{itemize}

\appendix

\bibliographystyle{plainurlnat}
\bibliography{main.bib}

\begin{thebibliography}{23}
\providecommand{\natexlab}[1]{#1}
\providecommand{\url}[1]{\texttt{#1}}
\providecommand{\urlprefix}{URL }
\expandafter\ifx\csname urlstyle\endcsname\relax
  \providecommand{\doi}[1]{\href{https://dx.doi.org/#1}{\nolinkurl{doi:#1}}}\else
  \providecommand{\doi}[1]{\href{https://dx.doi.org/#1}{\nolinkurl{doi:#1}}}\fi
\providecommand{\eprint}[2][]{\href{https://dx.doi.org/10.48550/arXiv.#2}{\nolinkurl{doi:10.48550/arXiv.#2}}}

\bibitem[{Barbieri et~al.(2015)Barbieri, Bonchi, Galimberti, and Gullo}]{corekeywordcommunitysearch}
Nicola Barbieri, Francesco Bonchi, Edoardo Galimberti, and Francesco Gullo.
\newblock Efficient and effective community search.
\newblock \emph{Data Min. Knowl. Discov.}, 29(5):1406--1433, 2015.
\newblock \doi{10.1007/S10618-015-0422-1}.

\bibitem[{Bondy and Murty(1976)}]{bondy1982graph}
John~Adrian Bondy and Uppaluri Siva~Ramachandra Murty.
\newblock \emph{Graph theory with applications}.
\newblock American Elsevier Publishing Co., Inc., New York, 1976.

\bibitem[{Burkhardt et~al.(2020)Burkhardt, Faber, and Harris}]{BurkhardtFH20}
Paul Burkhardt, Vance Faber, and David~G. Harris.
\newblock Bounds and algorithms for graph trusses.
\newblock \emph{J. Graph Algorithms Appl.}, 24(3):191--214, 2020.
\newblock \doi{10.7155/jgaa.00527}.

\bibitem[{Cao and Chen(2012)}]{cluster4}
Yixin Cao and Jianer Chen.
\newblock Cluster editing: Kernelization based on edge cuts.
\newblock \emph{Algorithmica}, 64(1):152--169, 2012.
\newblock \doi{10.1007/S00453-011-9595-1}.

\bibitem[{Chakraborti and Loh(2020)}]{chakraborti2020extremal}
Debsoumya Chakraborti and Po-Shen Loh.
\newblock Extremal graphs with local covering conditions.
\newblock \emph{SIAM Journal on Discrete Mathematics}, 34(2):1354--1374, 2020.

\bibitem[{Chakraborti et~al.(2024)Chakraborti, Madani, Maheshwari, and Miraftab}]{chakraborti2024sparse}
Debsoumya Chakraborti, Amirali Madani, Anil Maheshwari, and Babak Miraftab.
\newblock Sparse graphs with local covering conditions on edges.
\newblock \emph{arXiv preprint arXiv:2409.11216}, 2024.

\bibitem[{Chen and Meng(2012)}]{cluster3}
Jianer Chen and Jie Meng.
\newblock A 2k kernel for the cluster editing problem.
\newblock \emph{J. Comput. Syst. Sci.}, 78(1):211--220, 2012.
\newblock \doi{10.1016/J.JCSS.2011.04.001}.

\bibitem[{Chitnis and Talmon(2018)}]{chitnis}
Rajesh Chitnis and Nimrod Talmon.
\newblock Can we create large k-cores by adding few edges?
\newblock In Fedor~V. Fomin and Vladimir~V. Podolskii, editors, \emph{Computer Science - Theory and Applications - 13th International Computer Science Symposium in Russia, {CSR} 2018, Moscow, Russia, June 6-10, 2018, Proceedings}, volume 10846 of \emph{Lecture Notes in Computer Science}, pages 78--89. Springer, 2018.
\newblock \doi{10.1007/978-3-319-90530-3\_8}.

\bibitem[{Cormen et~al.(2009)Cormen, Leiserson, Rivest, and Stein}]{cormen}
Thomas~H. Cormen, Charles~E. Leiserson, Ronald~L. Rivest, and Clifford Stein.
\newblock \emph{Introduction to algorithms}.
\newblock MIT Press, Cambridge, MA, 3rd edition, 2009.

\bibitem[{Crespelle et~al.(2023)Crespelle, Drange, Fomin, and Golovach}]{edgemodificationsurvey}
Christophe Crespelle, P{\aa}l~Gr{\o}n{\aa}s Drange, Fedor~V. Fomin, and Petr~A. Golovach.
\newblock A survey of parameterized algorithms and the complexity of edge modification.
\newblock \emph{Comput. Sci. Rev.}, 48:100556, 2023.
\newblock \doi{10.1016/J.COSREV.2023.100556}.

\bibitem[{Dinur and Steurer(2014)}]{dinur2014analytical}
Irit Dinur and David Steurer.
\newblock Analytical approach to parallel repetition.
\newblock In \emph{Proceedings of the forty-sixth annual ACM symposium on Theory of computing}, pages 624--633. 2014.

\bibitem[{Fomin et~al.(2023)Fomin, Sagunov, and Simonov}]{fomin2023building}
Fedor~V Fomin, Danil Sagunov, and Kirill Simonov.
\newblock Building large k-cores from sparse graphs.
\newblock \emph{Journal of Computer and System Sciences}, 132:68--88, 2023.

\bibitem[{Gary and Johnson(1979)}]{threepartitionhardness}
Michael~R Gary and David~S Johnson.
\newblock Computers and intractability: A guide to the theory of np-completeness.
\newblock 1979.

\bibitem[{Golovach(2017)}]{editing2}
Petr~A. Golovach.
\newblock Editing to a connected graph of given degrees.
\newblock \emph{Inf. Comput.}, 256:131--147, 2017.
\newblock \doi{10.1016/J.IC.2017.04.013}.

\bibitem[{Goyal et~al.(2018)Goyal, Misra, Panolan, Philip, and Saurabh}]{editing3}
Prachi Goyal, Pranabendu Misra, Fahad Panolan, Geevarghese Philip, and Saket Saurabh.
\newblock Finding even subgraphs even faster.
\newblock \emph{J. Comput. Syst. Sci.}, 97:1--13, 2018.
\newblock \doi{10.1016/J.JCSS.2018.03.001}.

\bibitem[{Guo et~al.(2010)Guo, Komusiewicz, Niedermeier, and Uhlmann}]{cluster2}
Jiong Guo, Christian Komusiewicz, Rolf Niedermeier, and Johannes Uhlmann.
\newblock A more relaxed model for graph-based data clustering: s-plex cluster editing.
\newblock \emph{{SIAM} J. Discret. Math.}, 24(4):1662--1683, 2010.
\newblock \doi{10.1137/090767285}.

\bibitem[{Karp(1975)}]{karp1975computational}
Richard~M Karp.
\newblock On the computational complexity of combinatorial problems.
\newblock \emph{Networks}, 5(1):45--68, 1975.

\bibitem[{Mathieson and Szeider(2012)}]{editing4}
Luke Mathieson and Stefan Szeider.
\newblock Editing graphs to satisfy degree constraints: {A} parameterized approach.
\newblock \emph{J. Comput. Syst. Sci.}, 78(1):179--191, 2012.
\newblock \doi{10.1016/J.JCSS.2011.02.001}.

\bibitem[{Shamir et~al.(2004)Shamir, Sharan, and Tsur}]{cluster1}
Ron Shamir, Roded Sharan, and Dekel Tsur.
\newblock Cluster graph modification problems.
\newblock \emph{Discret. Appl. Math.}, 144(1-2):173--182, 2004.
\newblock \doi{10.1016/J.DAM.2004.01.007}.

\bibitem[{Sharan et~al.(2003)Sharan, Maron-Katz, and Shamir}]{clusterapplication}
Roded Sharan, Adi Maron-Katz, and Ron Shamir.
\newblock Click and expander: a system for clustering and visualizing gene expression data.
\newblock \emph{Bioinformatics}, 19(14):1787--1799, 2003.

\bibitem[{Tarjan(1972)}]{tarjanbridge}
Robert Tarjan.
\newblock Depth-first search and linear graph algorithms.
\newblock \emph{SIAM journal on computing}, 1(2):146--160, 1972.

\bibitem[{Wu et~al.(2007)Wu, Wang, Kuan, and Chao}]{reductionideas}
Bang~Ye Wu, Hung-Lung Wang, Shih~Ta Kuan, and Kun-Mao Chao.
\newblock On the uniform edge-partition of a tree.
\newblock \emph{Discrete Applied Mathematics}, 155(10):1213--1223, 2007.

\bibitem[{Zhu et~al.(2022)Zhu, Zhang, Qin, Chang, and Yu}]{trusskeywordcommunitysearch}
Yuanyuan Zhu, Qian Zhang, Lu~Qin, Lijun Chang, and Jeffrey~Xu Yu.
\newblock Cohesive subgraph search using keywords in large networks.
\newblock \emph{{IEEE} Trans. Knowl. Data Eng.}, 34(1):178--191, 2022.
\newblock \doi{10.1109/TKDE.2020.2975793}.

\end{thebibliography}
\end{document}